\tikzstyle{white vertex}=[fill=white, draw=black, shape=circle, thick, inner sep=1pt, minimum size=11pt]
\tikzstyle{green vertex}=[fill=green, draw=black, shape=circle, thick, inner sep=1pt, minimum size=11pt]
\tikzstyle{cyan vertex}=[fill=cyan, draw=black, shape=circle, thick, inner sep=1pt, minimum size=11pt]
\tikzstyle{magenta vertex}=[fill=magenta!60, draw=black, shape=circle, thick, inner sep=1pt, minimum size=11pt]
\tikzstyle{empty node}=[fill=white, shape=circle, inner sep=1pt]
\tikzstyle{filled path}=[-, fill=blue!60, thick, fill opacity=0.15]
\tikzstyle{thick edge}=[-, thick]
\title{Simplicial Models for the Epistemic Logic of Faulty Agents}
\author{\'Eric Goubault}{LIX, CNRS, \'Ecole Polytechnique, Institut Polytechnique de Paris, Paris, France}{eric.goubault@polytechnique.edu}{https://orcid.org/0000-0002-3198-1863}{}
\author{Roman Kniazev}{LIX, CNRS, \'Ecole Polytechnique, Institut Polytechnique de Paris, Paris, France \and Université Paris-Saclay, ENS Paris-Saclay, CNRS, LSV, 91190 Gif-sur-Yvette, France}{roman@kameronton.com}{https://orcid.org/0009-0006-7495-9793}{}
\author{J\'er\'emy Ledent}{Université Paris Cité, CNRS, IRIF, F-75013, Paris, France}{jeremy.ledent@irif.fr}{https://orcid.org/0000-0001-7375-4725}{}
\author{Sergio Rajsbaum}{Instituto de Matem\'aticas, UNAM, CDMX 04510, Mexico \and on leave at  LIX, \'Ecole Polytechnique and IRIF, Université Paris Cité }{rajsbaum@im.unam.mx}{https://orcid.org/0000-0002-0009-5287}{}
\authorrunning{\'E. Goubault, R. Kniazev, J. Ledent and S. Rajsbaum}
\keywords{Epistemic logic, Simplicial complexes, Distributed computing}
\begin{document}
\maketitle

\begin{abstract}
In recent years, several authors have been investigating \emph{simplicial models}, a model of epistemic logic based on higher-dimensional structures called simplicial complexes.
In the original formulation of~\cite{gandalf-journal}, simplicial models are always assumed to be \emph{pure}, meaning that all worlds have the same dimension.
This is equivalent to the standard~$\Sfive$ semantics of epistemic logic, based on Kripke models.
By removing the assumption that models must be pure, we can go beyond the usual Kripke semantics and study epistemic logics where the number of agents participating in a world can vary.
This approach has been developed in a number of papers~\cite{Ditmarsch21, stacs22, goubaultSemisimplicialSetModels2023}, with applications in fault-tolerant distributed computing where processes may crash during the execution of a system.
A difficulty that arises is that subtle design choices in the definition of impure simplicial models can result in different axioms of the resulting logic.
In this paper, we classify those design choices systematically, and axiomatize the corresponding logics.
We illustrate them via distributed computing examples of synchronous systems where processes may crash.
%
%The standard semantics of multi-agent epistemic logic~$\Sfive$ is based on Kripke models whose accessibility relations are reflexive, symmetric and transitive.
%This one dimensional structure contains implicit higher-dimensional information beyond pairwise interactions, that has been formalized as pure simplicial models in~\cite{gandalf-journal}. % \emph{Inf. Comput.} 278, 2021.
%Here we extend the theory to encompass  simplicial models  that are not pure.
%The corresponding Kripke models are those where the accessibility relation is symmetric and transitive, but might not be reflexive.
%This yields the epistemic logic $\KBfour$, % (without the T axiom as in $\Sfive$)
%which can be used to reason about situations where not necessarily all agents are in every world, for example when processes may crash in  concurrent programming.
%We illustrate it through distributed computing examples of synchronous systems where processes may crash.
\end{abstract}

\section{Introduction}
%\JL[inline]{Say this somewhere: Compared to the other proofs of completeness for $D_B$ found in the literature~\cite{BaltagS20,FaginHV92}, we have two differences: our Kripke models are transitive and symmetric but not necessarily reflexive; and we have extra axioms that are specific to simplicial models. The general structure of the proof is however very similar.}

%\JL[inline]{Say this somewhere: As we have seen in our previous work~\cite{gandalf-journal}, simplicial models correspond to Kripke models that are \emph{proper}. While this was not restrictive when we considered only the knowledge operator $K_a\,\phi$, it becomes important when we start including distributed knowledge, $D_B\,\phi$.
%Indeed, even in the standard setting of the logic $\mathbf{S5_n}$, proper Kripke models obey the axiom $\phi \Rightarrow D_A\,\phi$, where $A$ is the set of all agents, while this property might fail in non-proper models.
%In our setting where some agents may die, we introduce a similar axiom called~\textbf{P} (see \cref{sec:Axiomatization}) for that purpose.
%}
 
%\EG[inline]{Mention this is journal extended version of STACS+some aspects of LICS}

%\EG[inline]{mention \cite{nakaiPartialProductUpdates2023} - they do not show completeness - but they use the synchronous broadcast example...}
%\JL[inline]{Rewrite the intro, this is the one of STACS 22. In particular make clear what is new:\\ - distributed knowledge \\ - generalized notion of models \\ - full proof of completeness \\ - distributed computing applications? \\}

Logics for reasoning about multi-agent systems have been thoroughly studied, and are of interest to various 
research areas, including logic, artificial intelligence, economics, game theory~\cite{HWW2012survey}.
They are of particular interest to distributed systems since the early 1980's, showing the fundamental role of notions such as common knowledge~\cite{fagin,Moses2016}.
Modal epistemic logics are used, with a 
 language extending propositional logic by adding modalities $K_a$ representing the knowledge of each agent $a$.
 
The success of modal logics for reasoning about multi-agent systems is based on Kripke  semantics, built around the notion of ``possible world'' representing the \emph{state} of the system.
States and their relations are formally represented in Kripke models, where a binary relation for each agent $a$
is  taken to mean that $a$ cannot tell two states apart.
 This  classic \emph{possible worlds} relational structure  was developed by Rudolf Carnap, Stig Kanger, Jakko Hintikka and Saul Kripke in the late 1950's and early 1960's.

\subparagraph{From global states to local states.}
However, the intimate relationship between distributed computing and algebraic topology discovered in 1993~\cite{herlihyetal:2013} showed the importance of moving from using worlds as the primary object, to \emph{perspectives} about the worlds.
After all, what exists in a distributed system is only the local states of the agents and events observable within the system. %, as is natural from the perspective of special relativity~\cite{lamport78}.
The world, namely the global state of the system, consists of the set of local states of the agents, and in some cases the state of the environment, such as messages in transit or the state of the shared memory.
Thus,  a world  is an abstraction that may be useful to reason about the system, but not directly observable by the agents.

This point of view led to topological models of distributed systems, via a simplicial complex constructed using the local states as vertices and the global states as simplexes.
Remarkably, it was shown that there are topological invariants that are preserved while the agents communicate with each other, that in turn determine which distributed tasks can be solved, or how fast they can be solved. A fruitful theory has been developed since then (see~\cite{herlihyetal:2013} for an overview), for a variety of message passing and shared memory systems, where synchronous or asynchronous processes may fail.

The topological theory of distributed computability shows that  the  power of a distributed system to solve input/output tasks is determined by multi-dimensional indistinguishablity relations by sets of local states, rather than in the binary indistinguishability relations between pairs of global states defined in a Kripke structure.
The solvability of some tasks such as \emph{consensus} depends only on the one-dimensional (graph) connectivity of the Kripke structure of global states, and hence is intimately related to common knowledge.
%In consensus processes communicate  to agree on one of their input values.
However, other tasks are known whose solvability depends on the higher dimensional connectivity properties of the simplicial complex of local states.
Notable examples of such tasks are $\epsilon$-\emph{approximate agreement,} where process start with inputs in a Euclidean space of some dimension $d$, and communicate to decide on values at distance $\epsilon$ away from  each other, in the convex hull of their inputs~\cite{multidim2015}.
Another such example is $k$-\emph{set agreement}, where agents agree on at most $k$ different input values~\cite{HS99}. 

\subparagraph{From Kripke models to  simplicial models.}
The realization that distributed computability is of a topological nature motivated  the development of a formal semantics of epistemic logic formulas in terms of simplicial models~\cite{gandalf-journal}.
A new class of models was introduced, based on simplicial complexes, which is equivalent to the usual Kripke model semantics for $\Sfive$. 
 Tools were provided to reason about  solvability of distributed tasks such as consensus, approximate agreement and equality negation~\cite{gandalf-journal,DitmarschGLLR21}, as well as $k$-set agreement~\cite{yagiNishimura2020TR}. Bisimilarity of simplicial models was studied in~\cite{Ditmarsch2020KnowledgeAS}, and connections with covering spaces in~\cite{DitmarschGLLR21}.  
 
 Interestingly, the use of simplicial complexes exposes the importance of  the well-known notion of \emph{distributed knowledge}~\cite{HalpernM90}, to be a higher dimensional version of knowledge. With respect to a group of $k$ agents, distributed knowledge operates  
 by moving from  simplex to  simplex along shared faces of $k$ vertices corresponding to those agents. The use of distributed knowledge was crucial for the recent logical obstruction to the solvability of set agreement by Yagi and Nishimura~\cite{yagiNishimura2020TR}.

The categorical equivalence of~\cite{gandalf-journal} between $\Sfive$ Kripke models and simplicial models associates each world of the Kripke model with a facet of the corresponding simplicial model.
A core assumption of these models is that the same set of $n$ agents always participate in every possible world.
Because of this, every facet of the simplicial model is of the same dimension. %, $n-1$.
Such models are called \emph{pure} simplicial models.
They can be used to analyse asynchronous models where crash failures are undetectable,
such as the basic \emph{wait-free} shared-memory model of computation~\cite{waitFree91}, where all interleavings of the individual operations of the agents are possible, to show that a
task is not wait-free solvable.

\subparagraph{When agents may die.}
In this paper we wish to  extend this equivalence to include simplicial models that are not pure.
The goal is to be able to reason about situations where not necessarily all agents are present in every world. A variety of such situations 
have been frequently studied in distributed computing, motivated by, to name just a few,  peer-to-peer systems with a
permanently evolving set of nodes~\cite{dynSys05}, in robot systems~\cite{LNCS11340}, in concurrent computing
where the set of processes can evolve~\cite{Aguilera04}, in natural systems~\cite{OhRR23}, and in blockchains~\cite{Herlihy19}.

Another way the set of agents can vary is in fault-tolerant distributed computing, when the agent represents a hardware or software component that has failed by crashing.
Synchronous distributed systems where processes may fail by crashing
have been thoroughly studied since early on in distributed computability, and have served to develop the theory of knowledge
since e.g.\ the  seminal work of Dwork and Moses~\cite{DworkM90crash}, where a complete characterization of the number of rounds required to reach simultaneous consensus was given, in terms of common knowledge.
%
%(dire qu'on va faire des impures, rapidement...)
%
%This line of work considered Kripke models where in some states some agents are dead.
%An extension of common knowledge is needed, for the case of non-rigid constants, whose meaning changes from state to state.
For more recent additional references on algorithmic work see e.g.~\cite{CastanedaFPRRT23,tour17} and on knowledge based work see e.g~\cite{Unbeatable2014,silence2020,HalpernP17}.
Lower bounds on the number of rounds needed to solve set agreement are proved using the topological structure of the induced simplicial complexes e.g.~\cite{ChaudhuriHLT00,HERLIHY20001}. We will discuss later on the corresponding impure complexes, depicted in  Figure~\ref{fig-synchEvol}.

\subparagraph{Contributions.}
 We introduce in this paper an epistemic logic whose semantics is naturally given by impure simplicial models.
When some agents are missing from a simplex,  in epistemic logic terms, we will say that agents may \emph{die}.
Semantics based on Kripke models has been very successful to study  synchronous crash-failure models, but mainly for consensus~\cite{fagin}.
By moving from Kripke models to impure simplicial models, we open the door to study tasks beyond consensus when not all agents are always present, whose solvability depends on a higher dimensional structure; such as
 {$k$-set} agreement, renaming~\cite{CASTANEDA2011229}, multi-dimensional agreement~\cite{multidim2015}.

%\SR{In progress: present our results...}

We start by discussing the distributed knowledge operator.
 Then, we introduce a generalized notion of simplicial models, where any simplex can be marked as a world, not necessarily a facet.
As we have seen in our previous work~\cite{gandalf-journal}, simplicial models correspond to Kripke models that are \emph{proper}. While this was not restrictive when we considered only the knowledge operator $K_a\,\phi$, it becomes important when we  include distributed knowledge, $D_B\,\phi$.
Indeed, even in the standard setting of the logic $\mathbf{S5_n}$, proper Kripke models obey the axiom $\phi \Rightarrow D_A\,\phi$, where $A$ is the set of all agents, while this property might fail in non-proper models.
In our setting where some agents may die, we introduce a similar axiom called~\textbf{P} (see \cref{sec:Axiomatization}) for that purpose. 

This new model comes 
with a full proof of completeness with respect to our epistemic logic. 
%which was also not part of \cite{stacs22}. 
Compared to the other proofs of completeness for $D_B$ found in the literature~\cite{BaltagS20,FaginHV92}, we have two differences: our Kripke models are transitive and symmetric but not necessarily reflexive; and we have extra axioms that are specific to simplicial models. The general structure of the proof is however  similar.

Finally we present
%add, with respect to \cite{stacs22}, 
a brief discussion about applications to fault-tolerant distributed computing.
To study the dynamics of how processes can communicate and crash during the execution of a distributed protocol, we use a slightly modified version of \emph{communication patterns} \cite{Castaneda22pattern}.
Communication patterns are an alternative to the \emph{action models} of Dynamic Epistemic Logic (DEL), which is better suited to study distributed computing dynamics.
In the original formulation of communication patterns~\cite{Castaneda22pattern}, the communication graphs are always assumed to be reflexive; by relaxing this assumption, we can accommodate the possibility of crashing agents. 
Finally, in \cref{sec:extended-example}, to exemplify how our logical framework can be leveraged to prove impossibility results in distributed computing, we study a classic example of a non-pure protocol complex in distributed computing: the synchronous crash failures model of computation~\cite{DworkM90crash}.
This model has been exploited in~\cite{ChaudhuriHLT00,HERLIHY20001} to establish a lower bound on the number of rounds required to solve set agreement.
Notice in \cref{fig-synchEvol} that the protocol complex is no longer a subdivision of the input complex, as in the asynchronous wait-free case.
Due to the possibility of crashes, holes and lower-dimensional simplexes appear after the first round.

\begin{figure}[h]
    \centering
    \includegraphics[scale=0.6]{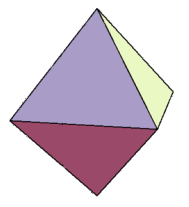}
    \qquad
    \includegraphics[scale=0.6]{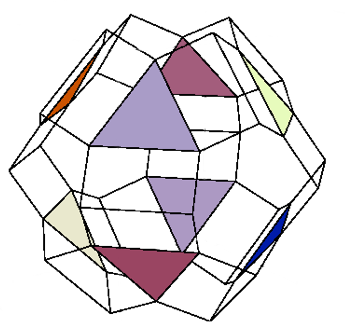}
    \qquad
    \includegraphics[scale=0.6]{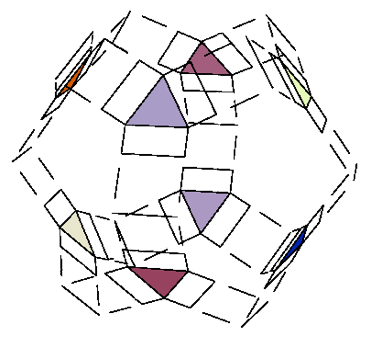}
    \caption{An input complex for three agents starting with binary inputs (left). Then the complex of local states after one round (middle), and two rounds (right). At most one agent may die~\cite{HERLIHY20001}.}
    \label{fig-synchEvol}
\end{figure}

%Our main insight is that, in a $\KBfour$ Kripke model, an agent is dead in a given state when its accessibility relation at that state is not reflexive. 
%In terms of the corresponding simplicial model, the number of alive agents is reflected by the dimension of the facets: a facet with~$k$ vertices means that~$k$ agents are alive.
%As a running example,
%we use the synchronous crash failures model of computation~\cite{DworkM90crash}, described in \cref{ex:simpmodel0}.
%This model has been exploited in~\cite{ChaudhuriHLT00,HERLIHY20001} to establish a lower bound on the number of rounds needed to solve set agreement.
%Notice that the protocol complex is no longer a subdivision of the input complex, as in the asynchronous wait-free case, see \cref{fig-synchEvol}.
%Its additional computational power comes from the fact that is no longer contractible\JL{Even in the wait-free case a sphere is not contractible??} ($(n-1)$-set agreement is solvable in the synchronous model, but not in the asynchrnous model).

\subparagraph{Relationship with previous papers.}
This article is an extended version of our conference paper~\cite{stacs22}, and also includes some ideas from a sequel conference paper~\cite{goubaultSemisimplicialSetModels2023}.
The definition of generalized simplicial models as formulated in \cref{def:generalized-simplicial-model} is new; it subsumes the impure simplicial models studied in~\cite{stacs22}, but it is strictly included in the so-called \emph{epistemic covering models} studied in~\cite{goubaultSemisimplicialSetModels2023}.
Contrary to the two conference papers, we include all proofs, including a fully detailed proof of completeness in \cref{sec:completeness}.
Moreover, in \cref{sec:dynamics,sec:application}, we go beyond the static setting studied until now, and introduce a new framework to study the dynamics of distributed communication with crashes.
\begin{itemize}
\item Compared to~\cite{stacs22}, we extended both the logic (adding the distributed knowledge operator), and the class of models that we consider (allowing worlds that are not facets of the simplicial complex).
Using distributed knowledge is crucial to study higher-dimensional geometric properties of models.
It also makes explicit the role of \emph{proper} models: the peculiar ``single-agent'' axiom \textbf{SA} of~\cite{stacs22} is now subsumed by our axiom of \emph{properness}~\textbf{P}.
This shows that there is nothing specific about the worlds with only one agent; we just lacked the distributed knowledge operator to express this in higher dimensions.
Allowing models there some worlds are not facets is important from the point of view of distributed computing, as it allows to model situations with \emph{undetectable crashes} (see \cref{sec:dynamics}).
\item Compared to~\cite{goubaultSemisimplicialSetModels2023}, the class of models that we study here is less general: we do not allow non-proper behavior, and we do not allow models with a semi-simplicial set geometric structure (a.k.a.\ pseudo-models, using the Kripke model terminology).
Both of those features are somewhat cumbersome to work with, and are rarely needed for distributed computing applications.
In particular, properly defining semi-simplicial sets involves some fairly advanced categorical lingo.
Here, we prefer to stay within the framework of simplicial complexes, and keep the paper easily accessible to readers unfamiliar with category theory.
\end{itemize}

\subparagraph{Related work.}

A line of work started by Dwork and Moses~\cite{DworkM90crash} studied in great detail the synchronous crash failures model from an epistemic logic perspective.
However, in their approach, the crashed processes are treated the same as the active ones, with a distinguished local state ``\texttt{fail}''.
In that sense, all agents are present in every state, hence they still model the usual epistemic logic $\Sfive$.
Instead of changing the underlying model as we do here, they introduce new knowledge and common knowledge operators that take into account the non-rigid set of agents (see e.g.~\cite{FHMVbook}, Chapter 6.4).

There are two other works that we are aware of, that considered the problem of defining a semantics of knowledge for possibly impure simplicial complexes.
Vel\'azquez-Cervantes~\cite{ThesisM2019} studies
projections  from impure complexes to pure sub-complexes, and algorithmic transformations between Kripke models and simplicial complexes.
More relevant to our purpose is the paper of van Ditmarsch~\cite{Ditmarsch21}, who describes a two-staged semantics with a \emph{definability relation} prescribing which formulas can be interpreted, on top of which the usual \emph{satisfaction relation} is defined.
This results in a three-valued logic, where formulas can be true, false or undefined.
A complete axiomatization of this logic was later established in~\cite{Ditmarsch22complete}, and it ends up being quite peculiar: for instance, it does not obey Axiom \textbf{K}, which is the common ground of all Kripke-style modal logics.
In contrast, we take a more systematic approach: we first establish a tight categorical correspondence between simplicial models and Kripke models.
Via this correspondence, we translate the standard Kripke-style semantics to simplicial models.
This leads us to a more standard two-valued logic, based on the well-understood modal logic $\KBfour$.

In another related paper~\cite{hypergraph}, we proposed a third approach to study the epistemic logic of faulty agents.
In that work, we study a refinement of epistemic logic where formulas are separated into several sorts: ``agent formulas'' and ``world formulas''.
This avoids entirely the question of how to define the knowledge of a dead agent, since such a formula would be ill-typed.
This approach might constitute a bridge between the three-valued semantics of van Ditmarsch et al., and the the two-valued semantics presented here.
The results of~\cite{hypergraph} are formulated using so-called \emph{hypergraph models} rather than simplicial models. As we will see in \cref{rem:hypergraph-models}, this is not a fundamental difference, but simply a shift in perspective.

The example of synchronous crash failures that we study in \cref{sec:application} has also been considered in \cite{nakai2023partial}, concurrently with our paper.
%\SR{Again, this discussion (and that of the papers of Armando/Hans) should be a bit rephrased, since now is the journal version, not the orginal STACS22}
However, some slight differences can be noted.
To formalize the dynamics, they introduce a variant of the DEL action models in which processes can crash; whereas we rely on a variant of communication pattern models (\cref{sec:dynamics}).
As expected, the resulting simplicial model for synchronous crash failures is the same.
Moreover, the obstruction formula used to prove impossibility is different: in~\cite{nakai2023partial}, the formula is specifically tailored to prove impossibility in one round, using three nested knowledge operators.
In contrast, we use a more general formula relying on the common knowledge operator.
Lastly, we discuss some other variants of consensus task specification in the presence of crashes.
Our main focus though, is to showcase how the epistemic logic machinery developed in this paper can be used to study concrete distributed computing problems.
%We also make explicit the simplicial models that can be used both for a topological proof of impossibility of this particular task in one round, for three processes, and as a simplicial model, for an epistemic logic proof of the same fact.

\subparagraph{Plan of the paper.}
In \cref{sec:backsimp}, we briefly recall the equivalence between pure simplicial complexes and epistemic Kripke models, as originally studied in~\cite{gandalf-journal}.
In \cref{sec:semantics-DK}, we introduce \emph{generalized simplicial models} as a semantics for distributed knowledge.
We then define in \cref{sec:kripke} an equivalent class of Kripke models, called \emph{partial epistemic models}, and describe the formal relationship with simplicial models.
The main technical result of the paper is the completeness result, proved in full details in \cref{sec:completeness}.
Then, in \cref{sec:dynamics}, we define an update operator to study the dynamics of simplicial models, based on communication patterns.
And finally in \cref{sec:application}, as a proof of concept, we study the solvability of consensus in the synchronous message-passing model.

\section{Background on simplicial complexes and Kripke structures}

\label{sec:backsimp}

\subparagraph{Chromatic simplicial complexes.}
Simplicial complexes with vertices labeled with agent names have been used extensively in the field of fault-tolerant distributed protocols \cite{herlihyetal:2013}. They are defined as follows: 

\begin{definition}
\label{def:simplicial-complex}
A \emph{simplicial complex} is a pair $\C = \langle V,S \rangle$ where $V$ is a set, and $S \subseteq \Pow{V}$ is a family of non-empty subsets of $V$ such that:
\begin{itemize}
\item for all $v \in V$, $\{v\} \in S$, and
\item $S$ is downward-closed: for all $X \in S$, $Y \subseteq X$ implies $Y \in S$. 
\end{itemize}
Given a finite set~$A$ of colours, a \emph{chromatic simplicial complex} coloured by~$A$ is a triple $\langle V,S,\chi \rangle$ where~$\langle V,S \rangle$ is a simplicial complex, and~$\chi : V \to A$ is required to assign distinct colours to the elements of every $X \in S$.
%SR: removed "vertices" since this term is not yet defined.,
% all vertices of $X$ have distinct colours.
\end{definition}

\label{def:puresimp}
Elements of $V$ (identified with singletons) are called \emph{vertices}.
Elements of $S$ are \emph{simplexes}, and the ones that are maximal w.r.t.\ inclusion are \emph{facets}.
The set of facets of~$\C$ is denoted~$\Facets(\C)$.
The \emph{dimension} of a simplex $X \in S$ is $\dim(X) = |X|-1$.
 A \emph{face} of a simplex $X$ is a subset $X'\subset X$.
A simplicial complex $C$ is \emph{pure} if all facets are of the same dimension. %In this case, we say $C$ is of dimension $n$.

The condition of having distinct colours for vertices of a simplex $X$ implies that
%SR: this may be misleading, since as a simplicial complex, is not clear in which way it is a "strong one"
%is a fairly strong one: in particular, 
given a set of colours $U$ of $\chi(X)$, there is a unique face of  $X$ colored with $U$.

Chromatic simplicial complexes %coloured by~$A$
can be arranged into a category, %$\SimCpx{A}$,
whose morphisms preserve simplex dimension:
\begin{definition}
\label{def:puresimpmor}
A chromatic simplicial map from $\C = \langle V,S,\chi \rangle$ to $\D = \langle V',S',\chi' \rangle$ is a function $f : V \to V'$ such that:
\begin{itemize}
\item $f$ maps simplexes to simplexes, i.e., for every $X \in S$, $f(X) \in S'$, and
\item $f$ respects colours, i.e., for every $v\in V$, $\chi'(f(v)) = \chi(v)$.
\end{itemize}
\end{definition}

We denote by $\SimCpx{A}$ the category of chromatic simplicial complexes coloured by~$A$, and 
$\PureSimCpx{A}$ the full sub-category of pure chromatic simplicial complexes on $A$. 

\subparagraph{Equivalence with epistemic frames.}
The traditional possible worlds semantics of (multi-agent) modal logics relies on the notion of Kripke frame.
In the following definition, we fix a finite set $A$ of \emph{agents}.

\begin{definition}
\label{def:kripkeframe}
\label{def:kripkeframemor}
A \emph{Kripke frame} $M = \la W, R \ra$ is given by a set of \emph{worlds}~$W$, together with an $A$-indexed family of relations on~$W$, $R : A \to \Pow{W \times W}$.
We write $R_a$ rather than $R(a)$, and $u\,R_a\,v$ instead of $(u,v) \in R_a$.
The relation $R_a$ is called the \emph{$a$-accessibility relation}.
%
%\begin{itemize}
%\item $S = \{s_0,s_1,\ldots\}$ is a set of \emph{nodes} or \emph{worlds}. 
%\item $\sim^A$ a function, yielding for every $a \in A$ an equivalence relation  $\sim^A\hspace*{-5pt}(a) \subseteq S \times S$
%called the \emph{$a$-accessibility relation}.
%We will often write $\sim_a$ rather than $\sim^A\hspace*{-3pt}(a)$ and
%will freely mix the notation $(s,t)\in\ \sim_a$ with the notation $(s
%\sim_a t)$.
%For the equivalence class of $s$ with respect to $\sim_a$, within $S_{\sim_a}$, we write $[s]_{\sim_a}$.
%\end{itemize}
%
Given two Kripke frames $M=\la W, R \ra$ and $N=\la W',R' \ra$, a \emph{morphism} from  $M$ to $N$ is a function $f : W \to W'$ such that for all $u, v \in W$, for all $a \in A$, 
$u\,{R_a}\,v$ implies $f(u)\,{R'_a}\,f(v)$.
\end{definition}

To model multi-agent epistemic logic $\Sfive$, we additionally require each relation~$R_a$ to be an equivalence relation.
When this is the case, we usually denote the relation by $\sim_a$, and call it the \emph{indistinguishability relation}.
For the equivalence class of~$w$ with respect to $\sim_a$, we write $[w]_{a} \subseteq W$.
Kripke frames satisfying this condition are called \emph{epistemic frames}.
An epistemic frame is \emph{proper} when two distinct worlds can always be distinguished by at least one agent: for all $w,w' \in W$, if $w \neq w'$ then $w \not \sim_a w'$ for some~$a \in A$.
In \cite{gandalf-journal}, we exploited an equivalence of categories between pure chromatic simplicial complexes and proper Kripke frames, to give an interpretation of $\Sfive$ on simplicial models. This allowed us to apply epistemic logics to study distributed tasks.

\begin{theorem}[see \cite{gandalf-journal}]
\label{thm:gandalf}
The category of pure chromatic simplicial complexes $\PureSimCpx{A}$ is equivalent to the category of proper epistemic frames $\ProperEFrame{A}$.
\end{theorem}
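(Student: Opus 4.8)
**The plan is to construct an explicit equivalence of categories by defining a pair of functors and showing they are mutually inverse up to natural isomorphism.**

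The natural approach is to exhibit two functors, say $\mathcal{F} : \PureSimCpx{A} \to \ProperEFrame{A}$ and $\mathcal{G} : \ProperEFrame{A} \to \PureSimCpx{A}$, and prove that $\mathcal{F}\circ\mathcal{G}$ and $\mathcal{G}\circ\mathcal{F}$ are naturally isomorphic to the respective identity functors. First I would define $\mathcal{F}$ on objects: given a pure chromatic simplicial complex $\C = \langle V,S,\chi\rangle$, its facets become the worlds of the epistemic frame, and the indistinguishability relation is determined by colours. Concretely, for facets $X, Y \in \Facets(\C)$ and an agent $a \in A$, set $X \sim_a Y$ iff $X$ and $Y$ share the vertex coloured $a$, i.e.\ the unique $a$-coloured vertex of $X$ equals that of $Y$. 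Purity guarantees every facet contains exactly one vertex of each colour appearing, so this relation is well-defined; one then checks it is reflexive, symmetric and transitive (it is literally the kernel of the map sending a facet to its $a$-coloured vertex), so $\mathcal{F}(\C)$ is an epistemic frame. Properness follows because two distinct facets must differ in some vertex, hence in some colour, so they are distinguished by the corresponding agent.

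For the inverse functor $\mathcal{G}$, given a proper epistemic frame $M = \langle W, (\sim_a)_{a\in A}\rangle$, I would build a simplicial complex whose vertices are the equivalence classes $[w]_a$ for $w \in W$ and $a \in A$, coloured by $\chi([w]_a) = a$, and whose facets are the sets $\{\,[w]_a \mid a \in A\,\}$ for each world $w \in W$. The downward closure of these facets gives the simplex set $S$. Properness is exactly what ensures this assignment $w \mapsto \{[w]_a\}_{a}$ is injective on worlds, so no two worlds collapse to the same facet; and since each facet has one vertex per agent, the complex is pure of dimension $|A|-1$ and properly chromatic. On morphisms, $\mathcal{F}$ and $\mathcal{G}$ act by sending the image-of-vertices / image-of-worlds along the structure-preserving maps, and I would verify functoriality (preservation of identities and composition) and, crucially, that simplicial maps correspond precisely to frame morphisms — a colour-respecting simplex map induces a relation-preserving world map and vice versa.

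The two round-trip checks are where the real content lies. For $\mathcal{G}\circ\mathcal{F} \cong \mathrm{id}$, starting from $\C$, a facet $X$ maps to the world $X$, which maps back to the facet $\{[X]_a\}_{a\in A}$; the natural isomorphism identifies the equivalence class $[X]_a$ with the unique $a$-coloured vertex of $X$, which recovers $X$ exactly. For $\mathcal{F}\circ\mathcal{G}\cong\mathrm{id}$, starting from $M$, each world $w$ becomes the facet $\{[w]_a\}_a$, and one checks that two such facets share their $a$-coloured vertex iff $[w]_a = [w']_a$ iff $w \sim_a w'$, recovering the original relations. I expect the main obstacle to be the bookkeeping around naturality: verifying that these object-level isomorphisms assemble into genuine natural transformations commuting with arbitrary morphisms, and carefully using the uniqueness-of-coloured-face property stated after \cref{def:simplicial-complex} to ensure all the vertex identifications are canonical rather than arbitrary. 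The purity and properness hypotheses must be invoked exactly at the points where well-definedness and injectivity could otherwise fail, and it is worth isolating those dependencies cleanly so the proof makes transparent why dropping purity (the central theme of the paper) breaks the correspondence.
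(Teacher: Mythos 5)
Your proposal is correct and follows essentially the same route as the paper: the two functors you describe are precisely the constructions $\kappa$ (facets as worlds, $X \sim_a Y$ iff they share the $a$-coloured vertex) and $\sigma$ (vertices as classes $[w]_a$, facets $X_w = \{[w]_a\}_a$) that the paper recalls from~\cite{gandalf-journal} and generalizes in \cref{sec:model-equivalence}, with properness used exactly where you invoke it to make $w \mapsto X_w$ injective. The only point worth making explicit is that reflexivity of $\sim_a$ for every $a \in A$ requires each facet to carry all $|A|$ colours, i.e.\ the convention that ``pure'' means pure of dimension $|A|-1$, which is how~\cite{gandalf-journal} sets things up.
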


\begin{example}\label{ex:basicDuality}
The picture below shows an epistemic frame (left) and its associated chromatic simplicial complex (right).
The three agents $a, b, c$, are represented as
colours blue, magenta and green (respectively) on the vertices of the simplicial complex.
The three worlds $\{w_1, w_2, w_3\}$ of the epistemic frame correspond to the three facets (triangles) of the simplicial complex.
The $c$-labeled edge between the two worlds $w_2$ and $w_3$ indicates that $w_2 \sim_c w_3$.
Correspondingly, the two facets $w_2$ and $w_3$ of the simplicial complex share a common vertex, coloured in green (agent~$c$).
Similarly, the two facets $w_1$ and $w_2$ share their $ab$-coloured edge.

\begin{center}
\begin{tikzpicture}[auto,cloudgrey/.style={draw=black,thick,circle,fill=cyan,inner sep=1pt,minimum size=11pt},cloud/.style={draw=black,thick,circle,fill=magenta!60,inner sep=1pt,minimum size=11pt}, cloudblack/.style={draw=black,thick,circle,fill=green,inner sep=1pt,minimum size=11pt}]

\node (p) at (-1.5,0) {$w_1$};
\node (q) at (0,0) {$w_2$};
\node (r) at (1.5,0) {$w_3$};
\path (p) edge[bend left] node[above] {$a$} (q)
      (p) edge[bend right] node[below] {$b$} (q)
      (q) edge node[above] {$c$} (r);
 
\node at (3.25,0) {\Large $\cong$};

\draw[thick, draw=black, fill=lipicsLightGray, fill opacity=0.7]
  (5,0) -- (6,-0.577) -- (6,0.577) -- cycle;
\draw[thick, draw=black, fill=lipicsLightGray, fill opacity=0.7]
  (6,-0.577) -- (6,0.577) -- (7,0) -- cycle;
\draw[thick, draw=black, fill=lipicsLightGray, fill opacity=0.7]
  (7,0) -- (8,-0.577) -- (8,0.577) -- cycle;
\node (p') at (5.65,0) {$w_1$};
\node (q') at (6.35,0) {$w_2$};
\node (r') at (7.65,0) {$w_3$};
\node[cloudblack] (b1) at (5,0) {$c$};
\node[cloudgrey] (g1) at (6,-0.577) {$a$};
\node[cloud] (w1) at (6,0.577) {$b$};
\node[cloudblack] (b2) at (7,0) {$c$};
\node[cloudgrey] (g2) at (8,-0.577) {$a$};
\node[cloud] (w2) at (8,0.577) {$b$};
\end{tikzpicture}
\end{center}
\end{example}

\section{Simplicial semantics of Epistemic logic with Distributed Knowledge}

\label{sec:semantics-DK}

Let $\AP$ be a countable set of atomic propositions and $A$ a finite set of agents.
We consider the language $\mathcal{L}_D$ of epistemic logic with the \emph{distributed knowledge} operator~\cite{fagin,halpernmoses:1990}, generated by the following BNF
grammar:
$$
\varphi ::= p \mid \neg\varphi \mid \varphi \land \varphi \mid
D_B\,\varphi \qquad p \in \AP,\ B \subseteq A,\ B \neq \emptyset
$$
Other standard operators can be derived from the basic ones as follows:
\begin{mathpar}
\phi \lor \psi := \neg(\neg \phi \land \neg \psi) \and
\phi \Rightarrow \psi := \neg \phi \lor \psi \and
\true := p \lor \neg p \and
\false := \neg \true \\
K_a\,\phi := D_{\{a\}}\,\phi \and
E_B\,\phi := \bigwedge_{a \in B} K_a\,\phi
\end{mathpar}

The distributed knowledge operator $D_B\,\phi$ models, intuitively, what a group~$B$ of agents would know if they were able to combine their individual knowledge (for example, via perfectly reliable communication).
Another way to explain it is that we view the group~$B$ of agents as a single entity, which is able to distinguish to possible worlds whenever at least one agent $a \in B$ can distinguish them.
Thus, in the usual Kripke-style semantics for epistemic logic, the indistinguishability relation $\sim_B$ of the group~$B$ is obtained as the intersection of the relations of all the agents in~$B$:\quad $\sim_B \;=\; \bigcap_{a \in B} \sim_a$.

Distributed knowledge should not be confused with another group knowledge operator, the \emph{everybody knows} operator $E_B\,\phi$, which asserts that every agent in the group~$B$ knows the formula~$\phi$.
Technically, this amounts to taking the union of the indistinguishability relations of the agents $a \in B$, rather than the intersection.
Another distinction between the two operators is that, given some agent $a \in B$, we have $E_B\,\phi \Rightarrow K_a\,\phi$ but $K_a\,\phi \Rightarrow D_B\,\phi$.

In the next section, we define the semantics of distributed knowledge for simplicial models.
As we will see, this operator is crucial for our topological approach since it makes use of the higher-dimensional connectivity between adjacent simplexes.
Indeed, while the operator $K_a\,\phi$ only looks at whether two simplexes share a common vertex, the operator $D_B\,\phi$ is concerned with whether two simplexes share a common face of higher dimension (edge, triangle, etc).
The distributed knowledge operator is also crucial for applications to distributed computing such as the \emph{$k$-set agreement tasks}~\cite{hoshino22,yagiNishimura2020TR}.

\subsection{Generalized simplicial models}

\label{sec:gensimpmod}
Since the introduction of simplicial models in~\cite{gandalf-journal}, several variants of this notion have been studied. Indeed, there is a number of design choices that can be made:
\begin{itemize}
\item The underlying topological structure of the model. In the original paper~\cite{gandalf-journal}, the model is assumed to be a \emph{pure} simplicial complex. This yields a notion of model which is equivalent to standard Kripke models, but is quite restrictive from a topological point of view.
Subsequent works have lifted this condition: both~\cite{Ditmarsch21} and \cite{stacs22} (the conference version of this paper) consider possibly impure simplicial complexes.
An even the larger class of models considered in a sequel of this work~\cite{goubaultSemisimplicialSetModels2023},  obtained by considering \emph{pre-simplicial sets}, a strict generalization of simplicial complexes. 
\item Atomic propositions on the worlds vs.\ vertices. In the epistemic logic literature, the notion of Kripke model usually contains a valuation function, which equips each world with a set of atomic propositions.
This contrasts with the usual practice in distributed computing, which labels the vertices of a model with atomic propositions instead.
Previous papers on simplicial models have taken the distributed computing approach. In~\cite{gandalf-journal}, we showed that this choice results in an extra axiom, dubbed the ``Axiom of locality'', which asserts that every atomic proposition belongs to a particular agent, who must always know whether this proposition is true or false.
Here, as we did in~\cite{stacs22}, we label directly the worlds of a model, in order to avoid dealing with this locality condition.
This is strictly more general: local models as defined in~\cite{gandalf-journal} are a strict subclass of the models presented here.
In \cref{sec:dynamics}, we will restrict to local models for distributed computing applications.
\item Worlds are facets vs.\ simplexes. In the original paper on simplicial models~\cite{gandalf-journal}, epistemic formulas could only be interpreted in a \emph{facet} of a simplicial model; hence, we used the words ``world'' and ``facet'' interchangeably.
The idea that any simplex (not necessarily a maximal one) might be a world was initially raised in~\cite{Ditmarsch2020KnowledgeAS}, and further explored in~\cite{Ditmarsch21}. In this approach, any simplex, without restriction, is considered to be a world, and we can interpret epistemic formulas on it.
Here, we take an even more general stance, and require the model to specify a set of worlds, which may contain only the facets, or all simplexes, or any set of simplexes in-between the two. As we will see, there are distributed computing applications where the set of worlds is indeed something ``in-between''.
\end{itemize}

We now introduce a notion of model based on (possibly not pure) simplicial complexes.
They are equipped with a distinguished subset of simplexes called the \emph{worlds}, which contains all the facets, and a valuation function that assigns to each world the set of all atomic propositions that are true in this world.

\begin{definition} \label{def:generalized-simplicial-model}
A \emph{(generalized) simplicial model} $\C = \la V, S, \chi, W, \ell \ra$ over the set of agents~$A$ consists of a chromatic simplicial complex $\la V,S,\chi \ra$ together with a distinguished set of worlds~$W$ such that $\Facets(\C) \subseteq W \subseteq S$, and a labelling $\ell : W \to \Pow{\AP}$
that associates with each world $w \in W$ a set of atomic propositions.
\end{definition}

\begin{remark} \label{rem:generalized-models}
Let us explain how the class of simplicial models of \cref{def:generalized-simplicial-model} relates to those of previous papers.
In the conference version of this work~\cite{stacs22}, the models that we studied were exactly those such that $W = \Facets(\C)$ (here, we call them the ``minimal'' models).
In the original paper on simplicial models~\cite{gandalf-journal}, the class of models considered was even smaller: on top of being minimal, we further impose that $\C$ must be pure, and that the labelling $\ell$ of a facet must be given by the union of the local labellings on its vertices (we will properly define the class of ``local'' simplicial models in \cref{sec:dynamics}).
The models studied by van Ditmarsch in~\cite{Ditmarsch21} implicitly use the set of worlds $W = S$ instead (here, we call them the ``maximal'' models); however, we do not claim to cover this work since the satisfaction relation that we define on our models is very different from the one of~\cite{Ditmarsch21}.
Finally, the class of models considered in~\cite{goubaultSemisimplicialSetModels2023} is even larger than the one that we defined. It allows the underlying geometric structure to be a semi-simplicial set, rather than a simplicial complex. Moreover, it allows to have several copies of the same world (which we call being ``non-proper'' here).
Using the terminology of~\cite{goubaultSemisimplicialSetModels2023}, the simplicial models of \cref{def:generalized-simplicial-model} are exactly the epistemic covering models that are proper, have no empty worlds, and have standard group knowledge.
\end{remark}

A \emph{pointed simplicial model} $(\C,w)$ consists of a simplicial model $\C$ together with a distinguished world $w \in W$.
Given a pointed simplicial model $(\C,w)$, we define the \emph{satisfaction relation} $\C,w \models \phi$ by induction on the formula $\phi$, as follows.
\[ \begin{array}{lcl}
\C,w \models p & \text{iff} & p \in \ell(w) \\
\C,w \models \neg\phi & \text{iff} & \C,w \not\models \phi \\
\C,w \models \phi\et\psi & \text{iff} & \C,w \models \phi \text{ and } \C,w \models \psi \\
\C,w \models D_B\,\phi & \text{iff} & \C,w' \models \phi \text{ for all } w' \in W \text{ such that } B \subseteq \chi(w \inter w')
\end{array} \]
%\todo[inline]{Put pictures of paths in complexes where you share simplexes of sufficient dimension - also to show the difference between Ki and DB (known in book of Yoram?). We should maybe add that when the complex is geometric see \url{https://encyclopediaofmath.org/wiki/Geometric_complex}, DB is equivalent to conjunction of Kis?}

When the relation $\C,w \models \phi$ holds, we say that the formula $\phi$ is \emph{true} in the world $w$.
The first three clauses are the standard interpretation of propositional logic.
The one for distributed knowledge says the following: $D_B\,\phi$ is true in world $w$ when $\phi$ is true in every world $w'$ that shares a $B$-colored face with $w$.

%\todo[inline]{At least here, discussion on the semantics of common knowledge and $E$ operator as well, although trivial for the last one. Is there an equivalent of common knowledge based on $D_B$ and not $E_B$?}

We will study the following two important subclasses of simplicial models.

\begin{definition}
A simplicial model $\C = \la V, S, \chi, W, \ell \ra$ is called:
\begin{itemize}
\item  \emph{minimal}, when the set of worlds is the set of facets, i.e., $W = \Facets(\C)$.
\item \emph{maximal}, when the set of worlds is the set of all simplexes, i.e., $W = S$.
\end{itemize}
\end{definition}

\begin{example}
\label{ex:toy-examples}
Three toy examples are depicted below to illustrate some specific features of our models.
The three models (called $\C_1$, $\C_2$ and $\C_3$ from left to right) use the same set of agents, $A = \{a,b,c\}$. The agents $a$,$b$ and $c$ are depicted using colours blue, magenta and green, respectively.
In all three models, there are four worlds $W = \{w_1, w_2, w_3, w_4\}$.
We consider a unique atomic proposition called $p$ which is true exactly in the worlds $w_1$ and $w_2$. Thus, we have $\AP = \{p\}$, and the valuation function is given by $\ell(w_1) = \ell(w_2) = \{p\}$ and $\ell(w_3) = \ell(w_4) = \emptyset$.
Note that the models $\C_1$ and $\C_2$ are both minimal, since all the worlds are facets. On the other hand, model $\C_3$ is neither maximal nor minimal.
\begin{center}
	\begin{tikzpicture}[auto,scale=1.4,cloudgrey/.style={draw=black,thick,circle,fill=cyan,inner sep=1pt,minimum size=11pt},cloud/.style={draw=black,thick,circle,fill=magenta!60,inner sep=1pt,minimum size=11pt}, cloudblack/.style={draw=black,thick,circle,fill=green,inner sep=1pt,minimum size=11pt}]
	\draw[thick, draw=black, fill=blue!60, fill opacity=0.15]
	  (5,0) -- (6,0) -- (5.5,0.866) -- cycle;
	\draw[thick, draw=black, fill=blue!60, fill opacity=0.15] (6,0) -- (7,0) -- (6.5,0.866) -- cycle;
	\draw[thick, draw=black, fill=blue!60, fill opacity=0.15] (4.5,0.866) -- (5.5,0.866) -- (5,0) -- cycle;
	\draw[thick, draw=black, fill=blue!60, fill opacity=0.15] (5.5,0.866) -- (6.5,0.866) -- (6,0) -- cycle;
	\node[cloudgrey] (g1) at (5,0) {$a$};
	\node[cloudblack] (w1) at (6,0) {$c$};
	\node[cloud] (b2) at (5.5,0.866) {$b$};
	\node[cloudgrey] (w2) at (6.5,0.866) {$a$};
	\node[cloud] (g2) at (7,0) {$b$};
	\node[cloudblack] (b1) at (4.5,0.866) {$c$};
	\node at (5,0.55) {$w_4$};
	\node at (5.5,0.3) {$w_1$};
	\node at (6,0.55) {$w_2$};
	\node at (6.5,0.3) {$w_3$};
	\draw[thick] (w1) edge node[text opacity=0] {$w_0$} (g1); %invisible text for alignment
	\end{tikzpicture}
	\hspace{1.2cm}
	\begin{tikzpicture}[auto,scale=1.2,cloudgrey/.style={draw=black,thick,circle,fill=cyan,inner sep=1pt,minimum size=11pt},cloud/.style={draw=black,thick,circle,fill=magenta!60,inner sep=1pt,minimum size=11pt}, cloudblack/.style={draw=black,thick,circle,fill=green,inner sep=1pt,minimum size=11pt}]
	\draw[thick, draw=black, fill=blue!60, fill opacity=0.15]
	  (4,0) -- (5,-0.577) -- (5,0.577) -- cycle;
	\node[cloudgrey] (b1) at (4,0) {$a$};
	\node[cloudblack] (g1) at (5,-0.577) {$c$};
	\node[cloud] (w1) at (5,0.577) {$b$};
	\node[cloudblack] (g2) at (6.154,0.577) {$c$};
	\node[cloud] (w2) at (6.154,-0.577) {$b$};
	\node at (4.65,0) {$w_1$};
	\draw[thick] (w1) edge node[above] {$w_2$} (g2);
	\draw[thick] (g2) edge node[right] {$w_4$} (w2);
	\draw[thick] (w2) edge node[below] {$w_3$} (g1);
	\end{tikzpicture}
	\hspace{1.2cm}
	\begin{tikzpicture}[auto,rotate=30,scale=1.2,cloudgrey/.style={draw=black,thick,circle,fill=cyan,inner sep=1pt,minimum size=11pt},cloud/.style={draw=black,thick,circle,fill=magenta!60,inner sep=1pt,minimum size=11pt}, cloudblack/.style={draw=black,thick,circle,fill=green,inner sep=1pt,minimum size=11pt}]
	\draw[thick, draw=black, fill=blue!60, fill opacity=0.15]
	  (4,0) -- (5,-0.577) -- (5,0.577) -- cycle;
	\node at (4.65,0) {$w_1$};
	\node[cloudgrey] (b1) at (4,0) {$a$};
	\node[cloudblack] (g1) at (5,-0.577) {$c$};
	\node[cloud,label=above:{$w_4$}] (w1) at (5,0.577) {$b$};
	\draw[thick] (w1) edge node {$w_2$} (g1);
	\draw[thick] (b1) edge node {$w_3$} (w1);
	\draw[thick] (g1) edge node[text opacity=0] {$w_0$} (b1); %invisible text for alignment
	\end{tikzpicture}
	\[
	\begin{array}{@{\hskip 1.2cm}l@{\hskip 2cm}l@{\hskip 1.3cm}l}
	\C_1,w_1 \models \neg K_b\,p &
	\C_2,w_1 \models K_a\,p \land K_b\,p &
	\C_3,w_1 \models D_{\{b,c\}}\,p \\
	\C_1,w_1 \models \neg K_c\,p &
	\C_2,w_1 \models \neg K_c\,p &
	\C_3,w_1 \models \neg D_{\{a,b\}}\,p \\
	\C_1,w_1 \models D_{\{b,c\}}\,p &
	\C_2,w_4 \models K_a\,p &
	\C_3,w_2 \models D_{\{a,b\}}\,\false
	\end{array}
	\]
\end{center}
Let us comment some of the example formulas given above.
\begin{itemize}
\item Model $\C_1$ illustrates the topological meaning of distributed knowledge. In the world~$w_1$, agent~$b$ does not know~$p$, because the world~$w_4$ is indistinguishable (i.e., $w_4$ shares a $b$-coloured vertex with $w_1$).
Similarly, agent~$c$ does not know~$p$, because of world~$w_3$.
However, the group $\{b,c\}$ has distributed knowledge of~$p$.
Indeed, to check that $D_{\{b,c\}}\,p$ holds in world~$w_1$, we have to check all the worlds that share a $bc$-coloured edge with~$w_1$. The only worlds which qualify are $w_1$ and $w_2$, and in both cases, $p$ is true.
\item Model $\C_2$ is an example of a model where the simplicial complex is not pure: it has a facet of dimension 2 ($w_1$) and three facets of dimension 1 ($w_2, w_3, w_4$).
In the worlds $w_2$, $w_3$ and $w_4$, only the agents $b$ and $c$ are alive: the agent $a$ is not participating. However, $b$ and $c$ may or may not be aware of whether $a$ is alive or dead.
Interestingly, we can still evaluate formulas talking about dead agents: in world $w_4$, we have $\C_2,w_4 \models K_a\,p$. Indeed, there is no world that shares an $a$-coloured vertex with $w_4$ (since~$w_4$ has no $a$-coloured vertex to begin with!), so the condition is vacuously true.
In fact, we could even write: $\C_2,w_4 \models K_a\,\false$.
\item Model $\C_3$ is an example of a model which has \emph{sub-worlds}. This situation arises when some agents may die, and when none of the remaining agents is aware of it. In the picture, all three agents are alive in world~$w_1$; $a$ is dead in~$w_2$; $c$ is dead in~$w_3$; and both $a$ and $c$ are dead in~$w_4$.
One can check, for example, that in world~$w_1$ the formula $D_{\{a,b\}}\,p$ is not satisfied, because~$w_1$ shares and $ab$-coloured edge with the world $w_3$ where~$p$ does not hold.
As in the model $\C_2$, some formulas can be vacuously true when they involve the knowledge of dead agents: for example, $D_{\{a,b\}}\,\false$ holds in world~$w_2$.
\end{itemize}
\end{example}

In \cref{ex:toy-examples}, we introduced some informal vocabulary such as ``alive'' agents or ``sub-worlds''. We now define these notions formally.

\begin{definition}[alive, dead] \label{def:alive}
Let $\C = \la V, S, \chi, W, \ell \ra$ be a simplicial model, $w \in W$ a world of $\C$, and $a \in A$ an agent.
We say that \emph{$a$ is alive in $w$} when $a \in \chi(w)$.
Similarly, agent $a$ is \emph{dead} in $w$ when $a \not \in \chi(w)$.
\end{definition}

\begin{definition}[sub-world] Given a simplicial complex $\la V, S \ra$, and two simplexes $X, Y \in S$, we say that \emph{$X$ is a sub-simplex\footnote{In topology, $X$ is often called a \emph{face} of $Y$, but we prefer to avoid the confusion with the word ``facet''.} of $Y$} when $X \subseteq Y$.
Similarly in a simplicial model $\la V, S, \chi, W, \ell \ra$,
we say that a world $w_1 \in W$ is a \emph{sub-world} of $w_2 \in W$ when $w_1 \subseteq w_2$.
\end{definition}

\begin{example}[Synchronous broadcast with one crash]
\label{ex:message-passing}
The picture below shows a simplicial model after one round of the synchronous broadcast protocol with one crash failure, for three processes $a$, $b$ and $c$.
This distributed computing model will be studied in full detail in \cref{sec:extended-example}.
This model is comprised of 10 facets $w_0, \ldots, w_9$ of various dimension. World $w_0$ (of dimension 2) corresponds to an execution where no crash occurred, so all three agents are alive. On the other hand, in worlds $w_1$, $w_2$ and $w_3$ (of dimension 1), agent~$c$ has crashed so only the agents~$a$ and~$b$ are alive.
If we consider the minimal model where the set of worlds is exactly the facets $W = \{w_0, \ldots, w_9\}$, then we are modelling a protocol with \emph{detectable crashes}. That is, we would be assuming implicitly that whenever a process crashes, one of the remaining processes has to be aware of it.
If, instead, we want to consider a model where crashes might not always be detectable, we should also include some sub-worlds of this model.
Note that we do not attach atomic propositions to the worlds here since this will be done in \cref{sec:extended-example} where we describe this distributed computing example in more detail.
\begin{center}
	\tikzfig{synchronous-broadcast-1crash}
\end{center}
\end{example}

\begin{example}[Immediate snapshot model with initial crashes] \label{ex:immediate-snapshot}
An example of a distributed computing model which is neither minimal nor maximal is the immediate snapshot model with \emph{initial crashes} (see e.g.\ \cite[Chapter 8]{herlihyetal:2013}).
This means that a process can only crash before the start of the computation. In other words, the set of participating processes is not known in advance.
Thus, in the picture below, there are 3 vertices $w_0, w_1, w_2$ corresponding to solo executions where only one process is alive; 9 edges $w_3, \ldots, w_{11}$ corresponding to executions where only two processes are participating; and
13 two-dimensional worlds $w_{12}, \ldots, w_{24}$ where all three processes are participating.
  \begin{center}
  \begin{tikzpicture}[scale=0.9, font=\footnotesize,
  cloud/.style={draw=black,thick,circle,fill=cyan,inner sep=1.5pt,minimum size=6pt}, cloudblack/.style={draw=black,thick,circle,fill=magenta!60,inner sep=1.5pt,minimum size=6pt},  cloudgrey/.style={draw=black,thick,circle,fill=green,inner sep=1.5pt,minimum size=6pt}]
  \filldraw[fill=blue!60, fill opacity=0.15] (0,0) -- (6,0) -- (3,5.2) -- cycle;
  \node[cloudblack, label=left:{$w_1$}] (p) at (0,0) {$b$};
  \node[cloudgrey, label=right:{$w_2$}] (q) at (6,0) {$c$};
  \node[cloud, label=above:{$w_0$}] (r) at (3,5.2) {$a$};

  \draw (p) -- (q) node[cloudblack](p-pq) [pos=0.66] {$b$}
                   node[cloudgrey](q-pq) [pos=0.33] {$c$}; 
  \draw (p) -- (r) node[cloudblack](p-pr) [pos=0.66] {$b$}
                   node[cloud](r-pr) [pos=0.33] {$a$}; 
  \draw (q) -- (r) node[cloudgrey](q-qr) [pos=0.66] {$c$}
                   node[cloud](r-qr) [pos=0.33] {$a$}; 
  \node[cloudblack] (p-pqr) at (4.5-5.2/6*1,2.6-1/2*1) {$b$};
  \node[cloudgrey] (q-pqr) at (1.5+5.2/6*1,2.6-1/2*1) {$c$};
  \node[cloud] (r-pqr) at (3,1) {$a$};
  \draw (p) -- (q-pqr) -- (r-pqr) -- (p-pqr) -- (q) -- (r-pqr) -- (p);
  \draw (r) -- (q-pqr) -- (p-pqr) -- (r);
  \draw (p-pq) -- (r-pqr) -- (q-pq);
  \draw (p-pr) -- (q-pqr) -- (r-pr);
  \draw (q-qr) -- (p-pqr) -- (r-qr);
  \draw (p) edge node[auto] {$w_{5}$} (r-pr); 
  \draw (r-pr) edge node[auto] {$w_{4}$} (p-pr); 
  \draw (p-pr) edge node[auto] {$w_{3}$} (r); 
  \draw (r) edge node[auto] {$w_{11}$} (q-qr); 
  \draw (q-qr) edge node[auto] {$w_{10}$} (r-qr); 
  \draw (r-qr) edge node[auto] {$w_{9}$} (q);
  \draw (q) edge node[auto] {$w_{8}$} (p-pq);
  \draw (p-pq) edge node[auto] {$w_{7}$} (q-pq); 
  \draw (q-pq) edge node[auto] {$w_{6}$} (p);
  \node at (2.4,3.6) {$w_{12}$};
  \node at (3,3) {$w_{13}$};
  \node at (3.6,3.6) {$w_{14}$};
  \node at (1.8,2.4) {$w_{15}$};
  \node at (4.2,2.4) {$w_{16}$};
  \node at (3,1.7) {$w_{17}$};
  \node at (1.2,1.4) {$w_{18}$};
  \node at (4.8,1.4) {$w_{21}$};
  \node at (2.1,1.2) {$w_{19}$};
  \node at (3.9,1.2) {$w_{20}$};
  \node at (1.9,0.4) {$w_{22}$};
  \node at (4.1,0.4) {$w_{24}$};
  \node at (3,0.4) {$w_{23}$};
  \end{tikzpicture}
  \end{center}
\end{example}

\begin{remark}[Hypergraph models]
	\label{rem:hypergraph-models}
An equivalent presentation of the notion of generalized simplicial model (\cref{def:generalized-simplicial-model}) is obtained via the notion of hypergraph.
A \emph{hypergraph} is a pair $\langle V,E \rangle$ where $V$ is a set of vertices, and $E \subseteq \Pow{V}$ is a set of hyperedges. Thus, essentially, it is the same data as a simplicial complex, except that the set $E$ of hyperedges is not required to be downward-closed. Indeed, simplicial complexes are a special case of hypergraphs. In a general hypergraph, we lose the geometric intuition of having higher-dimensional cells ($n$-simplexes) that can share a common sub-simplex; instead, we think of a hyperedge simply as a relation linking $n$ vertices together.

With that in mind, we can reformulate \cref{def:generalized-simplicial-model} as follows.
A \emph{hypergraph model} is a tuple $\cH = \la V, E, \chi, \ell \ra$, where $\langle V,E \rangle$ is a hypergraph, $\chi : V \to A$ is a colouring of the vertices such that every hyperedge has distinct colours, and $\ell : E \to \Pow{\AP}$ labels each hyperedge with a set of atomic proposition.
Notice that compared to generalized simplicial models, we got rid of one piece of data, the set of distinguished worlds~$W$.

It is immediate to see that any generalized simplicial model $\C = \la V, S, \chi, W, \ell \ra$ can be turned into a hypergraph model $\la V, W, \chi, \ell \ra$ where the set of hyperedges is $W$. Conversely, any hypergraph model $\cH = \la V, E, \chi, \ell \ra$ can be turned into a generalized simplicial model $\la V, \dclosure{E}, \chi, E, \ell \ra$, where the set of simplexes is the downward-closure of~$E$, and the distinguished set of worlds is~$E$ itself.
This correspondence is bijective due to the fact that we always require the set~$W$ to contain all the facets of~$\C$.
Thus, both notions are equivalent, and simply the result of a slight change of vocabulary. In this paper, we prefer to keep the geometric intuition of simplicial complexes, at the price of keeping track of the extra set~$W$.

Hypergraph models are discussed more thoroughly in \cite{hypergraph}, to study a different epistemic logic where formulas are separated into several sorts: ``agent formulas' and ``world formulas''.
\end{remark}

\subsection{Reasoning about alive and dead agents}

Until now, we discussed agents being ``alive'' or ``dead'' as a meta-level property of the model.
It is a natural idea to try to internalise this notion in the logic, i.e., to have formulas expressing whether an agent is alive or dead, such as ``agent $a$ knows that agent $b$ is dead''.
Fortunately, such formulas can already be expressed in our logic without any extra syntax, as derived operators:
$$
\deadprop{a} \,:=\, K_a\, \false \qquad \qquad
\aliveprop{a} \,:=\, \neg \deadprop{a}
$$
It is easy to check that the semantics of these formulas is, as expected (cf.\ \cref{def:alive}):
$$ \C,w \models \aliveprop{a} \quad \text{iff} \quad a \in \chi(w) $$

\begin{example}
Interestingly, that means we can investigate some structural properties of the models, without referring to the atomic propositions.
We illustrate this with the two distributed computing models of \cref{ex:message-passing,ex:immediate-snapshot}.
\begin{itemize}
\item The simplicial model depicted in in \cref{ex:message-passing}, where the set of worlds $W = \{w_0, \ldots, w_9\}$ contains only the facets, is an example of a model with \emph{detectable crashes}. That is, in every world where some agent is dead, one of the alive agents is aware of that fact.
Let~$\C_{\ref{ex:message-passing}}$ denote the simplicial model of \cref{ex:message-passing}.
Then for instance, in world $w_1$, only the agents~$a$ and~$b$ are alive, i.e., $\C_{\ref{ex:message-passing}}, w_1 \models \aliveprop{a} \land \aliveprop{b} \land \deadprop{c}$.
Moreover, $a$ does not know that $c$ is dead, but $b$ knows it: $\C_{\ref{ex:message-passing}}, w_1 \models \neg K_a\, \deadprop{c} \land K_b\,\deadprop{c}$.
%NO ISOLATED WORLD ANYMORE IN Example 12
%In the isolated world~$w_0$, agent~$a$ knows that both other agents are dead: $\C_{\ref{ex:message-passing}}, w_0 \models K_a(\deadprop{b} \land \deadprop{c})$.
\item The simplicial model $\C_{\ref{ex:immediate-snapshot}}$ of \cref{ex:immediate-snapshot} is neither minimal nor maximal. Thus, the agents might be aware or not of which other agents are alive, depending on where we are in the model.
In world $w_{17}$ in the centre of the picture, all three agents are alive, and they know that everyone is alive. For instance: $\C_{\ref{ex:immediate-snapshot}}, w_{17} \models K_a\,\aliveprop{b} \land K_a\,\aliveprop{c}$.
Closer to the border, in world $w_{15}$, agent~$a$ still knows that~$b$ is alive, but considers possible that~$c$ might be dead: $\C_{\ref{ex:immediate-snapshot}}, w_{15} \models K_a\,\aliveprop{b} \land \neg K_a\,\aliveprop{c}$.
In the worlds that contain the top vertex $w_0$, agent~$a$ considers possible that everyone might be dead: $\C_{\ref{ex:immediate-snapshot}}, w_{12} \models \neg K_a\,\aliveprop{b} \land \neg K_a\,\aliveprop{c}$.
\end{itemize}
\end{example}

%\todo{What is the correct definition of dead and alive? Do we need axiom Union then?}

\noindent
For groups of agents, we also use the following abbreviations: 
$$
\deadprop{B} \,:=\, \bigwedge_{a \in B} \deadprop{a} \qquad \qquad
\aliveprop{B} \,:=\, \neg D_B \false
$$
meaning that all the agents in $B$ are dead (resp., alive).
Note that $\deadprop{B}$ is not equivalent to $D_B\, \false$: the formula $D_B\, \false$ is true when at least one agent $a \in B$ is dead.

\subsection{Axiomatization: $\KBfour$ and beyond}
\label{sec:Axiomatization}

Simplicial models satisfy all the usual axioms of multi-agent epistemic logic, except for the axiom of \emph{truth}.
The logic we get is called $\KBfour$, and comprises the following axioms:
\begin{align*}
&\makebox[0.8cm]{\textbf{K:}\hfill} D_B (\phi \Rightarrow \psi) \Rightarrow (D_B\,\phi \Rightarrow D_B\,\psi)\\
&\makebox[0.8cm]{\textbf{B:}\hfill} \phi \Rightarrow D_B \neg D_B \neg \phi\\
&\makebox[0.8cm]{\textbf{4:}\hfill} D_B\,\phi \Rightarrow D_B D_B\,\phi
\end{align*}

\noindent
It is well known that Axiom $\mathbf{5}$ is provable in $\KBfour$ (see e.g.~\cite{sep-logic-modal}), so we also have:
\begin{align*}
\makebox[0.8cm]{\textbf{5:}\hfill} \neg D_B\,\phi \Rightarrow D_B \neg D_B\,\phi
%\KBfour \vdash \neg D_B\,\phi \Rightarrow D_B \neg D_B\,\phi
\end{align*}

\noindent
The difference between $\KBfour$ and the more standard multi-agent epistemic logics $\Sfive$ is that we do 
not necessarily have axiom $\mathbf{T}$: $K_a\,\phi \Rightarrow \phi$.
Indeed, in any world of a simplicial model where the agent~$a$ is dead, axiom~$\mathbf{T}$ will be violated, since $K_a\,\false$ is satisfied.
Here are a few examples of valid formulas in $\KBfour$, related to the life and death of agents.

\begin{itemize}
\item \makebox[5.5cm]{Dead agents know everything:\hfill} $\KBfour \vdash \deadprop{a} \Rightarrow K_a\,\phi$.\\
\makebox[5.5cm]{More generally, for any $a \in B$:\hfill} $\KBfour \vdash \deadprop{a} \Rightarrow D_B\,\phi$. %\todo{This is going to be true as well on the simplicial *set* model}
\item \makebox[5.5cm]{Alive agents satisfy Axiom \textbf{T}:\hfill} $\KBfour \vdash \aliveprop{a} \Rightarrow (K_a\,\phi \Rightarrow \phi)$.\\
\makebox[5.5cm]{More generally:\hfill} $\KBfour \vdash \aliveprop{B} \Rightarrow (D_B\,\phi \Rightarrow \phi)$. %\todo{Is that true also for simplicial sets?}
\item \makebox[5.5cm]{Alive agents know they are alive:\hfill} $\KBfour \vdash \aliveprop{a} \Rightarrow K_a\,\aliveprop{a}$.\\
\makebox[5.5cm]{More generally:\hfill} $\KBfour \vdash \aliveprop{B} \Rightarrow D_B\,\aliveprop{B}$.
%\item \makebox[5.5cm]{Only alive agents matter for $K_a \phi$:\hfill} $\KBfour \vdash K_a\,\phi \Leftrightarrow (\aliveprop{a} \Rightarrow K_a\,\phi)$.\\
%\makebox[5.5cm]{More generally:\hfill} $\KBfour \vdash D_B\,\phi \Leftrightarrow (\aliveprop{B} \Rightarrow D_B\,\phi)$.
\end{itemize}

%\noindent
%As an application of the last formula, notice that a formula of the form $K_a K_b\,\phi$ is equivalent to $K_a (\aliveprop{b} \Rightarrow K_b\,\phi)$.
%So, to check whether this formula is true in some pointed model $(M,w)$, we only need to check that $K_b\,\phi$ is true in the worlds $w' \sim_a w$ where $b$ is alive.
%
%\bigskip

We also consider six additional axioms that are not provable in $\KBfour$. The first one called \emph{monotonicity} is standard when dealing with distributed knowledge.
The second axiom, called \emph{union}, arises from the interplay between distributed knowledge and the possibility of having dead agents. It ensures that each world has a unique maximal set of alive agents, making the dead/alive status of individual agents, rather than groups, the primary concern.
The third axiom, \emph{non-emptiness}, says that every world has at least one agent that is alive.
The fourth one, the axiom of \emph{properness}, says that if two worlds have the same set of alive agents and no alive agent can distinguish them, then they must satisfy the same formulas.
%two worlds with the same set of alive agents can always be distinguished by at least one alive agent.
It is best understood when taking $B = A$, in which case it says that in the worlds where everyone is alive, $\phi \Rightarrow D_A\,\phi$ holds.
The last two axioms, \emph{minimality} and \emph{maximality}, capture the sub-classes of minimal and maximal simplicial models, respectively.
They are explained in more detail in \cref{ex:min-max-axioms} below. In the following, we denote by $\complement{B}$ the complement of the set of agents $B$, i.e., $\complement{B} = A \setminus B$.
\begin{align*}
&\makebox[1.5cm]{\textbf{Mono:}\hfill}
  \makebox[8.8cm]{$D_B\, \phi \Rightarrow D_{B'}\,\phi$ \hfill}
  \text{for all } B \subseteq B' \subseteq A\\
&\makebox[1.5cm]{\textbf{Union:}\hfill}
  \makebox[8.8cm]{$\aliveprop{B} \land \aliveprop{B'} \Rightarrow \aliveprop{B\cup B'}$ \hfill}
  \text{for all } B, B' \subseteq A \\
&\makebox[1.5cm]{\textbf{NE:}\hfill}
  \textstyle \bigvee_{a \in A} \aliveprop{a}\\
&\makebox[1.5cm]{\textbf{P:}\hfill}
  \makebox[8.8cm]{$\aliveprop{B} \land \deadprop{\complement{B}} \land \phi \Rightarrow D_B(\deadprop{\complement{B}} \Rightarrow \phi)$\hfill}
  \text{for all } B \subseteq A\\
&\makebox[1.5cm]{\textbf{Min:}\hfill}
  \makebox[8.8cm]{$\aliveprop{B} \land \deadprop{\complement{B}} \Rightarrow D_B\,\deadprop{\complement{B}}$ \hfill}
  \text{for all } B \subsetneq A\\
&\makebox[1.5cm]{\textbf{Max:}\hfill}
  \makebox[8.8cm]{$\aliveprop{B} \Rightarrow \neg D_B \neg \deadprop{\complement{B}}$ \hfill}
  \text{for all } B \subsetneq A
t\end{align*}
%\JL[inline]{Something missing for ``proper'' frames, maybe:
%$$\aliveprop{B} \land \deadprop{\complement{B}} \land D_B(\deadprop{\complement{B}}) \Rightarrow \left(\phi \Rightarrow D_B\,\phi\right)$$
%(only works for facets?) or
%$$\aliveprop{B} \land \deadprop{\complement{B}} \land \phi \Rightarrow D_B(\deadprop{\complement{B}} \Rightarrow \phi)$$
%(also sub-simplexes are proper)
%}
%\todo{And in simplicial *complexes*, DB is equivalent to a conjunction of Ki, i in B}

\begin{example} \label{ex:min-max-axioms}
We illustrate the axioms \textbf{Min} and \textbf{Max} with the three models below, denoted by $\C_4, \C_5, \C_6$ from left to right. Notice that $\C_4$ is minimal, $\C_6$ is maximal, and $\C_5$ is neither minimal nor maximal.
\begin{center}
	\begin{tikzpicture}[auto,scale=1.2,cloudgrey/.style={draw=black,thick,circle,fill=cyan,inner sep=1pt,minimum size=11pt},cloud/.style={draw=black,thick,circle,fill=magenta!60,inner sep=1pt,minimum size=11pt}, cloudblack/.style={draw=black,thick,circle,fill=green,inner sep=1pt,minimum size=11pt}]
	\draw[thick, draw=black, fill=blue!60, fill opacity=0.15]
	  (4,0) -- (5,-0.577) -- (5,0.577) -- cycle;
	\draw[thick, draw=black, fill=blue!60, fill opacity=0.15]
	  (7.154,0) -- (6.154,-0.577) -- (6.154,0.577) -- cycle;
	\node[cloudblack] (b1) at (4,0) {$c$};
	\node[cloudblack] (b2) at (7.154,0) {$c$};
	\node[cloud] (g1) at (5,-0.577) {$b$};
	\node[cloudgrey] (w1) at (5,0.577) {$a$};
	\node[cloud] (g2) at (6.154,0.577) {$b$};
	\node[cloudgrey] (w2) at (6.154,-0.577) {$a$};
	\node at (4.65,0) {$w_1$};
	\node at (6.504,0) {$w_3$};
	\draw[thick] (w1) edge node[above] {$w_2$} (g2);
	\end{tikzpicture}
	\hspace{1.2cm}
	\begin{tikzpicture}[auto,rotate=30,scale=1.2,cloudgrey/.style={draw=black,thick,circle,fill=cyan,inner sep=1pt,minimum size=11pt},cloud/.style={draw=black,thick,circle,fill=magenta!60,inner sep=1pt,minimum size=11pt}, cloudblack/.style={draw=black,thick,circle,fill=green,inner sep=1pt,minimum size=11pt}]
	\draw[thick, draw=black, fill=blue!60, fill opacity=0.15]
	  (4,0) -- (5,-0.577) -- (5,0.577) -- cycle;
	\node at (4.65,0) {$w_1$};
	\node[cloudgrey] (b1) at (4,0) {$a$};
	\node[cloudblack] (g1) at (5,-0.577) {$c$};
	\node[cloud] (w1) at (5,0.577) {$b$};
	\draw[thick] (b1) edge node {$w_2$} (w1);
	\draw[thick] (g1) edge node[text opacity=0] {$w_0$} (b1); %invisible text for alignment
	\end{tikzpicture}
	\hspace{1.8cm}
	\begin{tikzpicture}[auto,rotate=30,scale=1.2,cloudgrey/.style={draw=black,thick,circle,fill=cyan,inner sep=1pt,minimum size=11pt},cloud/.style={draw=black,thick,circle,fill=magenta!60,inner sep=1pt,minimum size=11pt}, cloudblack/.style={draw=black,thick,circle,fill=green,inner sep=1pt,minimum size=11pt}]
	\draw[thick, draw=black, fill=blue!60, fill opacity=0.15]
	  (4,0) -- (5,-0.577) -- (5,0.577) -- cycle;
	\node at (4.65,0) {$w_1$};
	\node[cloudgrey,label=left:{$w_5$}] (b1) at (4,0) {$a$};
	\node[cloudblack,label=right:{$w_7$}] (g1) at (5,-0.577) {$c$};
	\node[cloud,label=above:{$w_6$}] (w1) at (5,0.577) {$b$};
	\draw[thick] (w1) edge node {$w_3$} (g1);
	\draw[thick] (b1) edge node {$w_2$} (w1);
	\draw[thick] (g1) edge node {$w_4$} (b1);
	\end{tikzpicture}
	\[
	\begin{array}{@{\hskip 1cm}l@{\hskip 0.8cm}l@{\hskip 0.8cm}l}
	\C_4,w_2 \models \neg K_a\,\deadprop{c} &
	\C_5,w_2 \models \neg D_{\{a,b\}}\,\deadprop{c} &
	\C_6,w_1 \models \neg D_{\{b,c\}} \neg \deadprop{a} \\
	\C_4,w_2 \models \neg K_b\,\deadprop{c} &
	\C_5,w_1 \models D_{\{b,c\}} \neg \deadprop{a} &
	\C_6,w_1 \models \neg D_{\{a\}}\neg \deadprop{\{b,c\}} \\
	\C_4,w_2 \models D_{\{a,b\}}\,\deadprop{c} &
	 &
	\end{array}
	\]
\end{center}
\begin{itemize}
\item Axiom \textbf{Min} can be understood intuitively as saying that crashes must be \emph{detectable} (cf.\ \cref{ex:message-passing}). Indeed, it says that whenever some set of agents $\complement{B}$ have crashed, there is distributed knowledge among the remaining agents that they have crashed.
This can be seen in world~$w_2$ of model $\C_4$, where the set of alive agents is $B = \{a,b\}$. Neither~$a$ not~$b$, individually, know that agent~$c$ is dead. However, there is distributed knowledge among $\{a,b\}$ that~$c$ is dead.
Thus, Axiom \textbf{Min} is valid in model $\C_4$.
The way to invalidate Axiom \textbf{Min} is to have a world which is a sub-world of another, such as world~$w_2$ in model~$\C_5$. There, we do not have $D_{\{a,b\}}\,\deadprop{c}$, because of the possibility of world~$w_1$ where~$c$ is alive.
\item Axiom \textbf{Max} says, intuitively, that all crash patterns are possible and undetectable.
Thus, all the sub-worlds always exist.
For example, in world~$w_1$ of model~$\C_6$, all agents are alive. Any subset~$B$ of the alive agents considers possible that everyone else might be dead. Particular instances of Axiom \textbf{Max} for $B = \{b,c\}$ and $B = \{a\}$ are written below the picture. The first one is satisfied because of the existence of world~$w_3$; the second one, because of world~$w_5$. Thus, Axiom \textbf{Max} is valid in model $\C_6$.
The way to invalidate Axiom \textbf{Max} is to have a missing sub-world, such as in model~$\C_5$.
Since the $bc$-coloured edge is not a world of~$\C_5$, Axiom \textbf{Max} fails for $B = \{b,c\}$.
\end{itemize}
\end{example}

\begin{remark}
In the conference version of this paper~\cite{stacs22}, we had a different set of axioms. This is due to two facts: (i) we only considered standard knowledge $K_a\,\phi$ instead of distributed knowledge, and (ii) we worked with the sub-class of minimal models only, rather that the full generality presented here (see \cref{rem:generalized-models}).
Moreover, the original version~\cite{stacs22} was missing Axiom~\textbf{P}. This was fixed later on arXiv~\cite{stacs22-arxiv}.
Thus, because of (i), only the case of a single agent $B = \{a\}$ is required; and because of (ii), the two axioms \textbf{P} and \textbf{Min} are merged into a single axiom called \textbf{SA}.
With those two assumptions in mind, we can check that our axioms are indeed consistent with the one of~\cite{stacs22-arxiv},
$\textbf{SA}_\mathbf{a}\!\!: \aliveprop{a} \land \deadprop{\complement{\{a\}}} \land \phi \Rightarrow K_a\,\phi$.
\end{remark}

\begin{remark}
The Axioms $\textbf{Min}$ and $\textbf{Max}$ for $B = A$ are vacuously true.
For Axiom~$\textbf{P}$, we only really need the instances where $\phi = p \in \AP$ is an atomic proposition.
\end{remark}

One can check that \textbf{Mono}, \textbf{Union}, \textbf{NE} and \textbf{P} are valid in all simplicial models.
The axiom \textbf{Min} is valid (exactly) in all minimal simplicial models; and the axiom \textbf{Max} is valid (exactly) in all maximal simplicial models.
Hence, let us write $\SC$ (``the logic of simplicial complexes'') for the proof system given by the axioms $\KBfour + \textbf{Mono} + \textbf{Union} + \textbf{NE} + \textbf{P}$, as well as all propositional tautologies, closure by modus ponens, and the necessitation rule: if $\phi$ is a tautology, then $D_B\,\phi$ is a tautology.
We also write $\SCmin$ and $\SCmax$ for the proof system $\SC$ augmented, respectively, with the axioms \textbf{Min} and \textbf{Max}.

\begin{proposition}
\label{prop:soundness}
The proof system $\SC$ (resp., $\SCmin$, $\SCmax$) is sound with respect to the class of generalized (resp.\ minimal, maximal) simplicial models.
\end{proposition}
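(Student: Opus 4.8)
The plan is to verify soundness in the standard way for a normal modal logic: check that each inference rule preserves validity, and that each axiom is valid in the appropriate class of models. The crux is to recognise the satisfaction clause for $D_B$ as an ordinary Kripke box with respect to the relation
$$
w \sim_B w' \quad :\Longleftrightarrow \quad B \subseteq \chi(w \cap w'),
$$
defined on the set of worlds $W$. Once this is done, modus ponens, the necessitation rule, and Axiom \textbf{K} are handled by the familiar argument for normal modal logics, and each remaining axiom reduces to a purely combinatorial property of the relations $\sim_B$. Concretely, $\SC$-soundness over all generalized models requires validating $\KBfour$, \textbf{Mono}, \textbf{Union}, \textbf{NE} and \textbf{P} in every model; then \textbf{Min} (resp.\ \textbf{Max}) is validated only over minimal (resp.\ maximal) models, yielding $\SCmin$ and $\SCmax$.

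First I would establish two structural properties of $\sim_B$ that come directly from the chromatic condition (distinct colours on each simplex). Symmetry is immediate since $w \cap w' = w' \cap w$. For transitivity, the key observation is that if $B \subseteq \chi(w \cap w')$ then, for each $b \in B$, the unique $b$-coloured vertex of $w$ and the unique $b$-coloured vertex of $w'$ coincide, both being the $b$-coloured vertex lying in $w \cap w'$. Chaining this identification through $w \sim_B w'$ and $w' \sim_B w''$ shows the $B$-coloured vertices of $w$ all lie in $w''$, hence $w \sim_B w''$. Symmetry gives Axiom \textbf{B} and transitivity gives Axiom \textbf{4} by the usual frame correspondence, and Axiom \textbf{5} then comes for free in $\KBfour$. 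For \textbf{Mono}, I note that $B \subseteq B'$ yields $\sim_{B'} \subseteq \sim_B$, so the set of $D_{B'}$-successors of a world is contained in its set of $D_B$-successors, which immediately gives $D_B\,\phi \Rightarrow D_{B'}\,\phi$.

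Next I would compute the semantics of the derived ``alive'' predicate, showing $\C, w \models \aliveprop{B}$ iff $B \subseteq \chi(w)$: indeed $D_B\,\false$ fails exactly when $w$ itself is a $\sim_B$-successor of $w$, i.e.\ when $B \subseteq \chi(w)$, and otherwise $\chi(w \cap w') \subseteq \chi(w)$ rules out any successor at all. From this, \textbf{Union} ($B \subseteq \chi(w)$ and $B' \subseteq \chi(w)$ imply $B \cup B' \subseteq \chi(w)$) and \textbf{NE} (every world is a non-empty simplex, hence carries at least one colour) are immediate. For \textbf{P}, the antecedent $\aliveprop{B} \land \deadprop{\complement{B}}$ forces $\chi(w) = B$; then any $\sim_B$-successor $w'$ satisfying $\deadprop{\complement{B}}$ has $\chi(w') \subseteq B$, so $B \subseteq \chi(w \cap w') \subseteq \chi(w')$ gives $\chi(w') = B$, and the chromatic identification makes $w$ and $w'$ share all their vertices. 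Since a simplex is determined by its vertex set, $w' = w$, whence $\C, w' \models \phi$. This is precisely the step that uses properness of the model (no repeated simplexes on the same vertices).

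Finally I would treat the two class-specific axioms. For \textbf{Min} in a minimal model, when $\chi(w) = B$ every $\sim_B$-successor $w'$ satisfies $w \subseteq w'$ (again by the chromatic identification of $B$-coloured vertices), and since $w$ is a facet this forces $w' = w$, so $\deadprop{\complement{B}}$ holds at every successor and $D_B\,\deadprop{\complement{B}}$ is valid. For \textbf{Max} in a maximal model, the face of $w$ coloured exactly by $B$ is itself a simplex, hence a world since $W = S$, and it witnesses $\neg D_B \neg \deadprop{\complement{B}}$. I expect the main obstacle to be the transitivity argument for $\sim_B$ together with the rigidity argument for \textbf{P}: both hinge on the same subtle point, namely that the chromatic condition upgrades ``sharing a $B$-coloured face'' into a genuine equality of the $B$-coloured vertices, rather than a merely relational statement; everything else is routine modal bookkeeping.
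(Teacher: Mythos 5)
Your proposal is correct and follows essentially the same route as the paper's proof: soundness is reduced to checking each axiom, with transitivity of $\sim_B$ (hence Axiom \textbf{4}) derived from the uniqueness of coloured vertices in a chromatic simplex, \textbf{Mono} from $\sim_{B'}\subseteq\,\sim_B$, \textbf{Union} and \textbf{NE} from elementary properties of simplexes, \textbf{P} from the fact that a simplex is determined by its vertex set, \textbf{Min} from maximality of facets, and \textbf{Max} from the $B$-coloured face being a world. The only cosmetic quibble is that the rigidity used for \textbf{P} is built into the definition of a simplicial complex rather than being the ``properness'' condition (which in this paper names a property of the Kripke-side models); this does not affect the argument.
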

\begin{proof}
The proof of soundness is straightforward as usual by induction on the proof of a formula~$\phi$. We only check that the axioms of $\SC$ are valid in all simplicial models.

Let $\C = \la V,S,\chi,W,\ell \ra$ be a simplicial model.
The axioms $\textbf{K}$, $\textbf{B}$ and $\textbf{4}$ hold because the satisfaction relation on simplicial models is a Kripke-style semantics in disguise (see \cref{sec:kripke}, in particular \cref{thm:truth-equiv}). For now, let us give a direct proof for Axiom $\textbf{4}$. Let $w \in W$ be a world of $\C$ and assume that $\C,w \models D_B\,\phi$.
In order to show that $\C,w \models D_B D_B\,\phi$, let $w' \in W$ such that $B \subseteq \chi(w \cap w')$ and let $w'' \in W$ such that $B \subseteq \chi(w' \cap w'')$.
Since $\C$ is a chromatic simplicial complex, each colour appears at most once in a simplex. So, in fact, we have $B \subseteq \chi(w \cap w' \cap w'')$, and in particular $B \subseteq \chi(w \cap w'')$. Since we assumed that $\C,w \models D_B\,\phi$, we obtain $\C,w'' \models \phi$ as required.

The proof is similar for Axiom $\textbf{Mono}$: assume that $\C,w \models D_B\,\phi$ and that $B \subseteq B'$.
To show $\C,w \models D_{B'}\,\phi$, let $w' \in W$ such that $B' \subseteq \chi(w \cap w')$. Then $B \subseteq \chi(w \cap w')$, so by assumption $\C,w' \models \phi$.
In the same vein, $\textbf{Union}$ follows from the fact that if the vertices of a simplex are colored by colors $B$ and $B'$, then they are colored by $B\cup B'$. That is, if $\C, w\models \aliveprop{B}$ and $\C, w\models \aliveprop{B'}$, then $\C, w\models \aliveprop{B\cup B'}$.
The validity of $\textbf{NE}$ comes from the fact that a simplex is always non-empty (see \cref{def:simplicial-complex}). So for any $w \in W$, there is at least one vertex $v \in w$. Then for $a := \chi(v)$, we have $\C,w \models \aliveprop{a}$.
Validity of Axiom $\textbf{P}$ is a bit more involved. Assume that $\C,w \models \aliveprop{B} \land \deadprop{\complement{B}} \land \phi$, i.e., $\C,w \models \phi$ and moreover the set of colours of the vertices of~$w$ is exactly~$B$.
To prove that $\C,w \models D_B(\deadprop{\complement{B}} \Rightarrow \phi)$, let $w' \in W$ such that $B \subseteq \chi(w \cap w')$. 
So $w'$ contains all the vertices of $w$.
If we assume moreover that $\C,w' \models \deadprop{\complement{B}}$, then $w'$ cannot contain any extra vertex, i.e.\ $w = w'$.
Thus we must have $\C,w' \models \phi$, which concludes the proof.

To show that Axiom $\textbf{Min}$ is valid in every minimal simplicial model is very similar to the one of Axiom~$\textbf{P}$ above. Indeed, assume that the set of colours in $w \in W$ is exactly~$B$.
Since $w$ is a facet, the only possible $w' \in W$ such that $B \subseteq \chi(w \cap w')$ is $w' = w$ itself.
In particular, we do have $\C,w' \models \deadprop{\complement{B}}$.

Finally, for Axiom $\textbf{Max}$, assume that $w \in W$ contains at least the colours in $B$. Let $w' \subseteq w$ be the sub-simplex of~$w$ whose colours are exactly those of $B$ (potentially, $w' = w$).
Since~$\C$ is a maximal model, we must have $w' \in W$; and moreover $\C,w' \models \deadprop{\complement{B}}$.
So we have $\C,w \models \neg D_B \neg \deadprop{\complement{B}}$ as required.
\end{proof}

Completeness also holds for $\SC$, $\SCmin$ and $\SCmax$, but the proof is more intricate.
Indeed, we will use a detour via an equivalence with Kripke models, that we develop in \cref{sec:kripke}.
We then prove the three completeness results in \cref{sec:completeness}.

\section{Equivalent classes of Kripke models}
\label{sec:kripke}

In normal modal logics, whose semantics is based on Kripke models, there is a well-known correspondence between axioms of the logic and properties of the corresponding Kripke frames~\cite{sep-logic-modal}.
Namely, Axiom \textbf{K} holds in all Kripke models; while Axioms \textbf{B} and \textbf{4} are valid exactly on the class of Kripke models whose accessibility relation is symmetric and transitive, respectively.
So the logic $\KBfour$ is sound and complete with respect to the class of symmetric and transitive Kripke models.
This will be our starting point to define the class of Kripke models that is equivalent to simplicial models.
However, as we saw in \cref{sec:Axiomatization}, simplicial models have some additional built-in assumptions, that we need to impose on Kripke models too.
Crucially, since non-pure simplicial models do not obey Axiom \textbf{T}, we consider Kripke models whose accessibility relation is not necessarily reflexive.

\subsection{Partial epistemic models}
\label{sec:partial-epistemic-models}

Relations that are symmetric and transitive are called \emph{Partial Equivalence Relations} in the context of PER semantic models of programming languages. They also appear e.g.\ in~\cite{KPER}, where they are called ``Kripke logical partial equivalence relations''.

\begin{definition}
A \emph{Partial Equivalence Relation} (PER) on a set $X$ is a relation ${R \subseteq X \times X}$ that is symmetric and transitive (but not necessarily reflexive).
\end{definition}

The \emph{domain} of a PER $R$ is the set $\dom{R} = \{ x \in X \mid R(x,x)\} \subseteq X$, and it is easy to see that $R$ is an equivalence relation on its domain, and empty outside of it.
Thus, PERs are equivalent to the ``local equivalence relations'' defined in~\cite{Ditmarsch21}.
We now fix a set of agents~$A$.

\begin{definition}
A \emph{partial epistemic frame} $M = \la W,\sim \ra$ is a Kripke frame such that each relation $(\sim_a)_{a \in A}$ is a PER.
\end{definition}

We say that agent~$a$ is \emph{alive} in a world~$w$ when $w \in \dom{\sim_a}$, i.e., when $w \sim_a w$.
%In that case, we write $[w]_a$ for the equivalence class of $w$ with respect to $\sim_a$, within $\dom{\sim_a}$.
We write $\live{w}$ for the set of agents that are alive in world $w$. % and $\dead{w}$ for the set of agents that are dead in world $w$ (the complement of $\live{w}$).
Finally, we say that a world $w$ is a \emph{sub-world} of $w'$ when $\live{w} \subsetneq \live{w'}$ and $w \sim_a w'$ for all $a \in \live{w}$.
We now define four properties of partial epistemic frames, echoing the four Axioms \textbf{NE}, \textbf{P}, \textbf{Min} and \textbf{Max} defined in \cref{sec:Axiomatization}.

\begin{definition}
\label{def:PER-frame-properties}
Let $M = \la W,\sim \ra$ be a partial epistemic frame. We say that:
\begin{itemize}
\item $M$ \emph{has no empty world} when $\live{w} \neq \emptyset$ for all $w \in W$.
\item $M$ is \emph{proper} when two distinct worlds with the same set of alive agents can always be distinguished by at least one alive agent. More formally, $M$ is proper when for every $w \neq w' \in W$ such that $\live{w} = \live{w'}$, there exists $a \in \live{w}$ such that $w \not \sim_a w'$.
%if $w \sim_a w'$ for every $a \in \live{w}$, then $w = w'$.
\item $M$ is \emph{minimal} if it has no sub-world, i.e., for all $w,w' \in W$, if $\live{w} \subsetneq \live{w'}$, then there exists $a \in \live{w}$ such that $w \not \sim_a w'$.
\item $M$ is \emph{maximal} if it has all sub-worlds, i.e., for all $w' \in W$ and for all non-empty $B \subsetneq \live{w'}$, there exists $w \in W$ such that $\live{w} = B$ and $w \sim_a w'$ for all $a \in B$.
\end{itemize}
\end{definition}

\begin{remark} \label{rem:terminology-proper}
We have slightly changed our terminology compared to the conference version of this work.
Indeed, the property that we used to called ``proper'' in~\cite{stacs22} is actually equivalent to the conjunction ``non-empty and proper and minimal'' in the sense of \cref{def:PER-frame-properties}.
Since we are now interested in studying generalized simplicial models, and not just the subclass of minimal ones, we have refined this into three separate properties.
We believe that our new usage of the word ``proper'', which is now less specific than the one of~\cite{stacs22}, better captures what is usually understood as proper in the context of \textbf{S5} epistemic frames.
\end{remark}

\begin{example} 
\label{ex:proper-minimal-maximal}
Four partial epistemic frames over the set of agents $A = \{a,b,c\}$ are represented below.
The frame at the top left is not proper, while the three other frames are proper.
The frame at top right is neither minimal nor maximal: the world $w_3$ is a sub-world of $w_2$, but not all possible sub-worlds of $w_2$ exist.
The frame at the bottom left is minimal: neither $w_4$ nor $w_5$ is a sub-world of the other.
The frame at the bottom right is maximal: both $w_6$ and $w_8$ are sub-worlds of $w_7$, and there can be no other sub-world without breaking properness.

\begin{center}
\begin{tabular}{c@{\hskip 2cm}c}
\begin{tikzpicture}[auto]
\tikzset{every loop/.style={}}
\node (p) at (-2,0) {$w_0$};
\node (q) at (0,0) {$w_1$};
\path (p) edge[loop above] node {$a,b$} (p)
      (p) edge node {$a,b$} (q)
      (q) edge[loop above] node {$a,b$} (q);
\end{tikzpicture}
&
\begin{tikzpicture}[auto]
\tikzset{every loop/.style={}}
\node (p) at (-2,0) {$w_2$};
\node (q) at (0,0) {$w_3$};
\path (p) edge[loop above] node {$a,b,c$} (p)
      (p) edge node {$a,b$} (q)
      (q) edge[loop above] node {$a,b$} (q);
\end{tikzpicture}
\\[12pt]
\begin{tikzpicture}[auto]
\tikzset{every loop/.style={}}
\node (p) at (-2,0) {$w_4$};
\node (q) at (0,0) {$w_5$};
\path (p) edge[loop above] node {$a,b$} (p)
      (p) edge node {$a$} (q)
      (q) edge[loop above] node {$a,c$} (q);
\end{tikzpicture}
&
\begin{tikzpicture}[auto]
\tikzset{every loop/.style={}}
\node (p) at (-2,0) {$w_6$};
\node (q) at (0,0) {$w_7$};
\node (r) at (2,0) {$w_8$};
\path (p) edge[loop above] node {$a$} (p)
      (p) edge node {$a$} (q)
      (q) edge[loop above] node {$a,b$} (q)
      (q) edge node {$b$} (r)
      (r) edge[loop above] node {$b$} (r);
\end{tikzpicture}
\end{tabular}
\end{center}
\end{example}

\begin{definition}
\label{def:partial-epistemic-model}
A \emph{partial epistemic model}  $\model = \la W,\sim,L \ra$  over the set of agents $A$ consists of a partial epistemic frame 
 $\la W, \sim \ra$ together with function $L : W \to \Pow{\AP}$.
%
%A \emph{morphism} of partial epistemic models $f : \model \to \model'$
%is a morphism of the underlying partial epistemic frames such that for every world $w \in W$ and $w' \in f(w)$,
%$L'(w') \cap \AP_{\live{w}} =  L(w) \cap \AP_{\live{w}}$.
\end{definition}
Intuitively, $L(w)$ is the set of atomic propositions that are true in the world $w$.
Note that partial epistemic models are simply Kripke models (in the usual sense of normal modal logics), such that all the accessibility relations $(\sim_a)_{a \in A}$ are PERs.
Thus, we can straightforwardly define the semantics of an epistemic formula $\phi\in \lang_D$ in such a model.
Formally, given a pointed partial epistemic model $(\model,w)$,
we define by induction on~$\phi$ the \emph{satisfaction relation} $\model,w \models \phi$ as follows:
\[ \begin{array}{lcl}
\model,w \models p & \text{iff} & p \in L(w) \\
\model,w \models \neg\phi & \text{iff} & \model,w \not\models \phi \\
\model,w \models \phi\et\psi & \text{iff} & \model,w \models \phi \text{ and } \model,w \models \psi \\
\model,w \models D_B\,\phi & \text{iff} & \model,w' \models \phi \text{ for all } w' \text{ such that } w \sim_B w'
\end{array} \]
where $\sim_B$ is the intersection of the relations $(\sim_a)_{a \in B}$, i.e., $w \sim_B w'$ iff $w \sim_a w'$ for all $a \in B$.

\subsection{Relating simplicial models and partial epistemic models}
\label{sec:model-equivalence}

In this section, we show how to canonically associate a proper partial epistemic frame with any chromatic simplicial complex, and vice versa.
More precisely, for any generalized simplicial model $\C$, we construct an associated partial epistemic model $\kappa(\C)$ which is proper and has no empty world.
Conversely, for any proper partial epistemic model $M$ that has no empty world, we associate a generalized simplicial model $\sigma(M)$.
We also show that in both cases, the notions of ``minimal'' and ``maximal'' models are preserved.
In \cref{thm:truth-equiv}, we state the key property of this section: the two maps $\kappa$ and $\sigma$ preserve the satisfaction relation.
Similar correspondences appears in~\cite{gandalf-journal,Ditmarsch21,stacs22}, with some differences:
\begin{itemize}
\item Here, we work in a more general framework of \emph{generalized simplicial models}, rather than just the minimal ones where worlds are facets.
As a consequence, the corresponding class of Kripke models is larger (because we changed the meaning of ``proper'', cf.\ \cref{rem:terminology-proper}).
\item On the other hand, here we are not concerned with the morphisms between models. Thus, we do not prove that $\sigma$ and $\kappa$ form a categorical equivalence between simplicial models and Kripke models.
For instance, while~\cite{gandalf-journal} and~\cite{stacs22} show that $\kappa \circ \sigma(M)$ is isomorphic to $M$; and~\cite{Ditmarsch21} %\todo{What is the ref?} 
shows that $\kappa \circ \sigma(M)$ is bisimilar to $M$; here we only prove that the satisfaction relation is preserved, which is the weakest of those three properties.
Nonetheless, it will be sufficient for our purpose, the completeness results of \cref{sec:completeness}.
%A full categorical treatment of generalized simplicial models involves some technical considerations which we postpone to future work.
\end{itemize}

\begin{definition}
\label{def:adjF}
Let $\C = \la V, S, \chi, W, \ell \ra$ be a generalized simplicial model on the set of agents~$A$ and atomic propositions~$\AP$. Its associated partial epistemic model is $\kappa(\C)=\la W, \sim, L \ra$, whose set of worlds $W$ is the same as the one of~$\C$, and whose relation~$\sim_a$, for each agent $a \in A$,
is given by $w \sim_a w'$ iff $a \in \chi(w \cap w')$.
The labelling is simply defined as $L(w) = \ell(w)$.
%The image of a morphism $f: \C \rightarrow \D$ in $\SimCpx{A}$, is the morphism $\kappa(f) : \kappa(\C) \to \kappa(\D)$ in $\ProperPEFrame{A}$ that takes a facet $X \in \Facets(\C)$ to 
%$$\kappa(f)(X) \;=\; \{Z \in \Facets(\D) \mid f(X) \subseteq Z \}$$
\end{definition}

\begin{proposition}
\label{lemma:proper}
$\kappa(\C)$ is a proper partial epistemic frame that has no empty world.
\end{proposition}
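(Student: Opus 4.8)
The plan is to verify, in turn, each of the three claims packed into the statement: that $\kappa(\C)$ is a partial epistemic frame (i.e.\ each $\sim_a$ is a PER), that it has no empty world, and that it is proper. The unifying tool throughout will be the defining property of a chromatic simplicial complex from \cref{def:simplicial-complex}, namely that $\chi$ assigns distinct colours to the vertices of any simplex, so that a face of a simplex is determined uniquely by its set of colours.

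First I would record the basic observation that $\live{w} = \chi(w)$. Indeed, by \cref{def:adjF} agent $a$ is alive in $w$ iff $w \sim_a w$ iff $a \in \chi(w \cap w) = \chi(w)$. This identifies the set of alive agents with the set of colours of a world, which is what lets the frame-theoretic properties be read off from the combinatorics of the complex.

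Next, the PER conditions. Symmetry is immediate since $w \cap w' = w' \cap w$, so $a \in \chi(w \cap w')$ iff $a \in \chi(w' \cap w)$. For transitivity, suppose $w \sim_a w'$ and $w' \sim_a w''$; then there are $a$-coloured vertices $v \in w \cap w'$ and $v' \in w' \cap w''$. Both $v$ and $v'$ lie in the simplex $w'$ and both carry colour $a$, so by the chromatic condition $v = v'$; this common vertex then lies in $w \cap w''$, giving $a \in \chi(w \cap w'')$, i.e.\ $w \sim_a w''$. The no-empty-world claim is the easiest point: every world $w \in W \subseteq S$ is a simplex, hence non-empty, so it has at least one vertex whose colour witnesses $\live{w} = \chi(w) \neq \emptyset$.

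The main work, and the only genuinely non-routine step, is properness. Take $w \neq w'$ with $\live{w} = \live{w'}$, i.e.\ $\chi(w) = \chi(w')$, and suppose toward a contradiction that $w \sim_a w'$ for every $a \in \live{w}$. Then $\chi(w) \subseteq \chi(w \cap w')$, and since $w \cap w' \subseteq w$ gives the reverse inclusion, we obtain $\chi(w \cap w') = \chi(w)$. Because the face $w \cap w'$ of $w$ has the same colour set as $w$, and colours in a simplex are distinct, $w \cap w'$ and $w$ have the same cardinality; being nested, they coincide, so $w \subseteq w'$. Running the symmetric argument with $\chi(w') = \chi(w \cap w')$ and $w \cap w' \subseteq w'$ yields $w' \subseteq w$, hence $w = w'$, contradicting the hypothesis. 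Thus some alive agent separates $w$ and $w'$, which is exactly properness. I expect the bookkeeping in this last step---correctly invoking the ``unique face of a given colour set'' property in both directions---to be the part that needs the most care, while everything else is a direct unfolding of the definitions.
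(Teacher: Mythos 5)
Your proposal is correct and follows essentially the same route as the paper's proof: the PER conditions and non-emptiness are read off from the chromatic condition and the non-emptiness of simplexes, and properness comes down to the fact that a simplex with the same colour set as one of its faces must equal that face. The only cosmetic difference is that you argue properness by contradiction (assuming no alive agent distinguishes $w$ from $w'$ and deriving $w = w'$), whereas the paper argues directly by exhibiting a vertex $v \in w \setminus w'$ whose colour witnesses $w \not\sim_{\chi(v)} w'$; these are contrapositive forms of the same argument.
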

\begin{proof}
The relation $\sim_a$ is easily seen to be a symmetric and transitive, because since the simplicial complex $\C$ is chromatic, there can be at most one vertex $v \in w \cap w'$ with $\chi(v) = a$.
Moreover, since the worlds of $\kappa(\C)$ are simplexes of $\C$, and simplexes are always non-empty by definition, we immediately see that $\kappa(\C)$ has no empty world.
Finally, to show that $\kappa(\C)$ is proper, consider two distinct worlds $w$ and $w'$ in $\kappa(\C)$, i.e., two simplexes of $\C$, and assume they have the same set of alive agents, i.e., $\chi(w) = \chi(w')$.
Since a simplex is uniquely determined by its set of vertices, there is at least one vertex of $w$, say $v$, that does not belong to $w'$ (otherwise we would have $w = w'$).
Let $a = \chi(v)$ be the colour of~$v$.
Then $a$ is alive in~$w$ because $a \in \chi(w \cap w)$;
and $w \not \sim_a w'$ because $v \not \in w \cap w'$ and there can be only one vertex with colour $a$ in $w$.
\end{proof}

\begin{proposition}
\label{lemma:min-max-kappa}
If $\C$ is minimal (resp., maximal), then $\kappa(\C)$ is minimal (resp., maximal).
\end{proposition}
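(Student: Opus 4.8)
The plan is to reduce both statements to a single dictionary: in $\kappa(\C)$ the sub-world relation coincides exactly with the proper sub-simplex relation of $\C$. First I would record the following translation. For a world $w$ (i.e.\ a simplex of $\C$), the set of alive agents is $\live{w} = \chi(w)$. For two simplexes $w, w'$, the condition ``$w \sim_a w'$ for all $a \in \live{w}$'' unfolds, by \cref{def:adjF}, to: $w \cap w'$ contains a vertex of every colour occurring in $w$. Since $\C$ is chromatic, each colour of $w$ is realised by a \emph{unique} vertex (as noted after \cref{def:simplicial-complex}), so this forces every vertex of $w$ to lie in $w'$, i.e.\ $w \subseteq w'$. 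Conversely $w \subseteq w'$ trivially gives $w \sim_a w'$ for all $a \in \chi(w)$. Combining this with $\live{w} \subsetneq \live{w'} \iff \chi(w) \subsetneq \chi(w')$, I conclude that $w$ is a sub-world of $w'$ in $\kappa(\C)$ precisely when $w \subsetneq w'$ as simplexes of $\C$.

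Given this dictionary, the minimal case is immediate. Suppose $\C$ is minimal, so $W = \Facets(\C)$. To verify that $\kappa(\C)$ is minimal, take $w, w' \in W$ with $\live{w} \subsetneq \live{w'}$; I must exhibit an agent $a \in \live{w}$ with $w \not\sim_a w'$. If no such agent existed, the dictionary would yield $w \subsetneq w'$, contradicting the fact that $w$, being a facet of $\C$, is a maximal simplex. Hence $\kappa(\C)$ has no sub-world, i.e.\ it is minimal.

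For the maximal case, suppose $\C$ is maximal, so $W = S$. I must show every sub-world is realised: given $w' \in W$ and a non-empty $B \subsetneq \live{w'} = \chi(w')$, I construct a witness. Using the chromatic property, let $w \subseteq w'$ be the unique face of $w'$ whose set of colours is exactly $B$. Since $\C$ is maximal, $w \in S = W$, so $w$ is a genuine world, and $\live{w} = \chi(w) = B$ by construction. Finally $w \subseteq w'$ yields $w \sim_a w'$ for every $a \in B$ via the dictionary, so $w$ is the required sub-world and $\kappa(\C)$ is maximal.

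The only real content is the dictionary of the first paragraph, and the one place where care is needed is the use of the chromatic hypothesis to pass from the agent-wise indistinguishability condition to the set-theoretic inclusion $w \subseteq w'$; everything downstream is bookkeeping. I do not anticipate a serious obstacle, since the ``no sub-world'' and ``all sub-worlds'' conditions both become statements purely about the face relation of $\C$, which are exactly what minimality ($W = \Facets(\C)$) and maximality ($W = S$) encode.
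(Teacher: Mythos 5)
Your proof is correct and follows essentially the same route as the paper's: both arguments reduce the sub-world relation in $\kappa(\C)$ to the sub-simplex relation in $\C$ via the chromatic property, then invoke $W = \Facets(\C)$ (resp.\ $W = S$) to rule out (resp.\ produce) sub-worlds. Your explicit ``dictionary'' is just a cleanly isolated statement of the step the paper performs inline, so there is nothing substantive to add.
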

\begin{proof}
Assume the simplicial model $\C$ is minimal, i.e., that all worlds are facets: ${W = \Facets(\C)}$.
Let $w,w'$ be two worlds of $\kappa(\C)$ with $\live{w} \subsetneq \live{w'}$.
Then there must be at least one vertex of $w$, say $v$, that does not belong to $w'$: otherwise we would have $w \subsetneq w'$, which contradicts the fact that $w$ is a facet.
Let $a = \chi(v)$ be the colour of~$v$; then $a$ is alive in~$w$ and $w \not \sim_a w'$.
Thus $\kappa(\C)$ is minimal.

Assume now that $\C$ is maximal, i.e., all simplexes are worlds: $W = S$.
Let $w'$ be a world of $\kappa(\C)$, whose set of alive agents is $\live{w'} = \chi(w')$.
Let $B \subsetneq \chi(w')$ be a non-empty subset of alive agents, and let $w \subsetneq w'$ be the face of $w'$ that consists of all vertices whose colour is in~$B$.
Then we have $w \in W$ (because all simplexes are worlds), and it is easy to check that $w$ is a sub-world of $w'$ in $\kappa(\C)$.
So $\kappa(\C)$ is maximal.
\end{proof}

Conversely, we now consider a partial epistemic model $M=\la W,\sim,L \ra$,
and we define the associated simplicial model $\sigma(M)$.
Intuitively, each world $w \in W$ where $k$ agents are alive will be represented by a simplex of dimension~$k-1$, whose vertices are coloured by $\live{w}$.
These simplexes must then be ``glued'' together according to the indistinguishability relations.
Formally, this is done by a quotient construction, described in \cref{def:adjG} below.
When $a$ is alive in a world $w$, we write $[w]_a$ for the equivalence class of $w$ w.r.t.\ $\sim_a$, within $\dom{\sim_a}$.

\begin{definition}
\label{def:adjG}
Let $M = \la W, \sim,L \ra$ be a proper partial epistemic model with no empty world.
Its associated chromatic simplicial complex is $\sigma(M) = \la V, S, \chi, \widehat{W}, \ell \ra$, where:
\begin{itemize}
\item The set of vertices is $V = \{(a,[w]_a) \mid w \in W, a \in \live{w} \}$. We denote such a vertex $(a,[w]_a)$ by $v^w_{a}$ for succinctness; but note that $v^w_{a} = v^{w'}_{a}$ whenever $w \sim_a w'$.
\item The set $S$ of simplexes is generated by sets of the form $X_w = \{ v^w_{a} \mid a \in \live{w}\}$ for each $w \in W$; as well as all their sub-simplexes.
\item The colouring is given by $\chi(v^w_a) = a$.
\item The set of worlds is $\widehat{W} = \{ X_w \mid w \in W \}$.
\item The labelling is $\ell(X_w) = L(w)$.
\end{itemize}
\end{definition}

\begin{proposition}
\label{prop:sigma-well-defined}
$\sigma(M)$ is indeed a generalized simplicial model. Moreover, if $M$ is minimal (resp., maximal) then $\sigma(M)$ is minimal (resp., maximal).
\end{proposition}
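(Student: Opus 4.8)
The plan is to verify the two claims in turn: first that $\sigma(M)$ meets every requirement of \cref{def:generalized-simplicial-model}, and then that the minimal/maximal properties transfer across $\sigma$.

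For the first claim I would check the defining conditions one at a time. That $\la V, S \ra$ is a simplicial complex is nearly immediate from the construction: each vertex $v^w_a$ lies in its generating simplex $X_w$ (which is non-empty because $M$ has no empty world), so every singleton belongs to $S$; and $S$ is downward-closed by fiat, being defined as the collection of all sub-simplexes of the $X_w$. The colouring $\chi(v^w_a) = a$ is well-defined since the colour is recorded in the first coordinate of the pair $(a, [w]_a)$, and it assigns distinct colours inside each simplex because the colours occurring in $X_w$ are exactly the distinct agents of $\live{w}$. I would then verify $\Facets(\sigma(M)) \subseteq \widehat{W} \subseteq S$: the right inclusion holds since each $X_w \in S$, and for the left inclusion any facet $F$ is contained in some generating simplex $X_w$, hence equals it by maximality of $F$.

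The delicate point, and the one I expect to be the main obstacle, is that the labelling $\ell(X_w) = L(w)$ is well-defined. This requires the assignment $w \mapsto X_w$ to be injective, for otherwise a simplex with $X_w = X_{w'}$ but $w \neq w'$ could receive two conflicting labels $L(w) \neq L(w')$. Unwinding the vertex identifications $v^w_a = v^{w'}_a \iff w \sim_a w'$, one sees that $X_w = X_{w'}$ holds exactly when $\live{w} = \live{w'}$ and $w \sim_a w'$ for every $a \in \live{w}$. This is precisely the configuration forbidden by the \emph{properness} of $M$, which guarantees that distinct worlds with the same alive set are separated by some alive agent. Properness therefore yields injectivity of $w \mapsto X_w$, making $\ell$ well-defined; this is the single place in the whole argument where properness is genuinely needed.

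For the second claim I would translate the frame-level notions from \cref{def:PER-frame-properties} into statements about the simplexes $X_w$. The key lemma, obtained by reading off colours and comparing vertices, is that $X_w \subsetneq X_{w'}$ holds if and only if $w$ is a sub-world of $w'$ in $M$, that is, $\live{w} \subsetneq \live{w'}$ and $w \sim_a w'$ for all $a \in \live{w}$. Granting this dictionary: if $M$ is minimal it has no sub-worlds, so no $X_w$ is a proper sub-simplex of another, whence every $X_w$ is a facet and $\widehat{W} = \Facets(\sigma(M))$, i.e.\ $\sigma(M)$ is minimal. If $M$ is maximal, I would take an arbitrary simplex $Y \in S$, write it as the face of some $X_{w'}$ with non-empty colour set $B \subseteq \live{w'}$; when $B = \live{w'}$ we have $Y = X_{w'} \in \widehat{W}$, and otherwise maximality of $M$ supplies a world $w$ with $\live{w} = B$ and $w \sim_a w'$ for all $a \in B$, forcing $X_w = Y$ via the vertex identifications. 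Thus every simplex is a world and $\sigma(M)$ is maximal.
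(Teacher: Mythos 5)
Your proposal is correct and follows essentially the same route as the paper's proof: the same construction-level checks for the chromatic simplicial complex and the facet condition, the same use of properness to make the labelling $\ell(X_w) = L(w)$ well-defined via injectivity of $w \mapsto X_w$, and the same translation of sub-simplex inclusions $X_w \subseteq X_{w'}$ into sub-worlds of $M$ for the minimal and maximal cases. The only (welcome) refinement is that you treat the case $B = \live{w'}$ explicitly when invoking maximality, which the paper glosses over.
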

\begin{proof}
The set $S$ of simplexes is downward-closed by construction, and every singleton $\{v^w_a\}$ belongs to $S$ since it is a sub-simplex of $X_w$.
All vertices of $X_w$ have distinct colours by construction, so $\la V, S, \chi \ra$ is indeed a chromatic simplicial complex.
We still need to show that $\widehat{W}$ contains all facets. That is also true by construction, since every simplex is a sub-simplex of some $X_w$, and a facet can only be a sub-simplex of itself.
Lastly, for the labelling to be well-defined, we need to make sure that $X_w \neq X_{w'}$ whenever $w \neq w'$, i.e., that there is a bijection between $W$ and $\widehat{W}$.
Assume by contradiction that this is not the case: then $X_w = X_{w'}$ implies that $\live{w} = \live{w'}$, and that $v^w_a = v^{w'}_a$ for all $a \in \live{w}$.
This is not possible because we assumed that $M$ is proper.

Assume now that $M$ is minimal. We want to show that every $X_w \in \widehat{W}$ is a facet of $\sigma(M)$.
It suffices to show that for all $w \neq w'$, $X_w \not \subseteq X_{w'}$. 
Assume by contradiction that $X_w \subseteq X_{w'}$. We already proved that equality is impossible, so we must have $\live{w} \subsetneq \live{w'}$, and for every $a \in \live{w}$, $v^w_a = v^{w'}_a$.
This contradicts the minimality of~$M$.

Finally, assume instead that $M$ is maximal, and let us show that every simplex of $\sigma(M)$ belongs to $\widehat{W}$. Let $X$ be a simplex of $\sigma(M)$, that is, $X \subseteq X_w$ for some~$w$. We want to show that there exists $w' \in W$ such that $X = X_{w'}$.
Let $B = \chi(X)$ be the set of colours of the vertices of~$X$.
Since $B \subseteq \live{w}$, by maximality of~$M$, there exists a world $w' \in W$ such that $\live{w'} = B$ and $w' \sim_a w$ for all $a \in B$.
Then for every $a \in B$ we have $v^{w'}_a = v^w_a$, so $X = X_{w'}$. %$X = \{ v^w_a \mid a \in B \} = \{ v^{w'}_a \mid a \in \live{w'} \} = X_{w'}$.
\end{proof}

We now check that the associated models given by $\kappa$ and $\sigma$ preserve the semantics of epistemic logic formulas in $\lang_D$.

\begin{theorem} \label{thm:truth-equiv}
Given a pointed simplicial model $(\C,w)$, we have $\C,w \models \varphi$ iff $\kappa(\C),w \models \varphi$.
Conversely, given a pointed partial epistemic model $(M,w)$ which is proper and has no empty world, we have $M,w \models \varphi$ iff $\sigma(M),X_w \models \varphi$.
\end{theorem}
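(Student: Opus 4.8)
The plan is to prove both equivalences by structural induction on the formula $\varphi$, exploiting the fact that in each case the two satisfaction relations are defined by identical semantic clauses once the relevant accessibility relations are shown to match. The cases of atomic propositions and of the Boolean connectives $\neg$ and $\wedge$ are immediate. For $\kappa$, the labelling is literally $L = \ell$ and the set of worlds is unchanged, so the base case and the propositional induction steps transfer verbatim. For $\sigma$, the same holds because $\ell(X_w) = L(w)$, and by \cref{prop:sigma-well-defined} the assignment $w \mapsto X_w$ is a bijection between $W$ and $\widehat{W}$ (this is exactly where properness of $M$ is used). Thus the only real content is the modal case $D_B$.

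For the first equivalence, the key observation is that the $B$-accessibility relation of $\kappa(\C)$ coincides with the condition appearing in the simplicial semantics. Unfolding \cref{def:adjF}, we have $w \sim_B w'$ in $\kappa(\C)$ iff $w \sim_a w'$ for every $a \in B$, iff $a \in \chi(w \cap w')$ for every $a \in B$, iff $B \subseteq \chi(w \cap w')$. Since both $D_B$-clauses quantify over the same set of worlds $W$, applying the induction hypothesis to the subformula at each accessible world yields $\C, w \models D_B\,\varphi$ iff $\kappa(\C), w \models D_B\,\varphi$.

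For the second equivalence, the main step is the analogous correspondence between the relation $\sim_B$ of $M$ and the face-sharing condition in $\sigma(M)$, transported along the bijection $w \mapsto X_w$. The crucial sub-lemma is that, for a single agent $a$, the colour $a$ belongs to $\chi(X_w \cap X_{w'})$ iff $w \sim_a w'$: indeed, any common vertex of colour $a$ must be $v^w_a = v^{w'}_a = (a,[w]_a) = (a,[w']_a)$, and such a shared vertex exists precisely when $a$ is alive in both $w$ and $w'$ and $[w]_a = [w']_a$, i.e.\ when $w \sim_a w'$ (here one uses that $\sim_a$ is a PER, so $w \sim_a w'$ already forces $a \in \live{w} \cap \live{w'}$). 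Intersecting over $a \in B$ gives $B \subseteq \chi(X_w \cap X_{w'})$ iff $w \sim_B w'$. Since the worlds of $\sigma(M)$ over which $D_B$ quantifies are exactly the $X_{w'}$ with $w' \in W$, the induction hypothesis then delivers $M, w \models D_B\,\varphi$ iff $\sigma(M), X_w \models D_B\,\varphi$.

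I expect the main obstacle to be the book-keeping in the $\sigma$ direction: one must argue carefully that the $D_B$ modality in $\sigma(M)$ ranges exactly over the distinguished worlds $\widehat{W}$ and not over arbitrary simplexes, and that the vertex identification $v^w_a = v^{w'}_a$ faithfully captures $w \sim_a w'$ in both directions. This rests on properness (for the bijection $w \leftrightarrow X_w$) and on the PER structure of the relations, both already established in \cref{lemma:proper,prop:sigma-well-defined}.
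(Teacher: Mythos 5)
Your proof is correct and follows essentially the same route as the paper's: a structural induction whose only non-trivial case is $D_B$, resolved in each direction by matching the accessibility relation with the face-sharing condition ($w \sim_B w'$ iff $B \subseteq \chi(w \cap w')$ for $\kappa$, and $w \sim_a w'$ iff $v^w_a = v^{w'}_a$ iff $a \in \chi(X_w \cap X_{w'})$ for $\sigma$). The extra care you take with the PER structure and the bijection $w \mapsto X_w$ is sound and merely makes explicit what the paper leaves implicit.
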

\begin{proof}
The first equivalence is straightforward by induction on the structure of the formula~$\phi$.
Indeed, the base case of atomic propositions comes from the fact that we keep the same labelling $L(w) = \ell(w)$ in \cref{def:adjF}. The cases of the operators $\land$ and $\neg$ are obvious using the induction hypothesis.
And for a formula of the form $D_B\,\phi$, notice that since we defined $w \sim_a w'$ iff $a \in \chi(w \cap w')$ in \cref{def:adjF}, we also get $w \sim_B w'$ iff $B \subseteq \chi(w \cap w')$.

The other half of the theorem is also proved by induction on the formula~$\phi$. Atomic propositions, conjunction and negation are straightforward.
For a formula of the form $D_B\,\phi$, all we have to show is that $w \sim_B w'$ in the model $M$ iff $B \subseteq \chi(X_w \cap X_{w'})$ in $\sigma(M)$.
This follows from the fact that $w \sim_a w'$ iff $v^w_a = v^{w'}_a$ iff $a \in \chi(X_w \cap X_{w'})$.
\end{proof}

\begin{example}
The toy model below depicts a simplicial model (left) with set of worlds $W = \{w_1, \ldots, w_7\}$. On the right is the equivalent partial epistemic model obtained by applying~$\kappa$.
Alternatively, one can also apply $\sigma$ to the model on the right in order to produce the simplicial model depicted on the left.
Some edges that can be deduced by transitivity have been omitted on the picture of the epistemic model on the right.
Notice that the set of alive agents in a world can be read directly from the reflexive loops.
In both models, the worlds~$w_5$ and $w_6$ are sub-worlds of $w_1$; and world $w_7$ is a sub-world of $w_4$.
\begin{center}
	\begin{tikzpicture}[auto,scale=1.2,cloudgrey/.style={draw=black,thick,circle,fill=cyan,inner sep=1pt,minimum size=11pt},cloud/.style={draw=black,thick,circle,fill=magenta!60,inner sep=1pt,minimum size=11pt}, cloudblack/.style={draw=black,thick,circle,fill=green,inner sep=1pt,minimum size=11pt}]
	\draw[thick, draw=black, fill=blue!60, fill opacity=0.15]
	  (4,0) -- (5,-0.577) -- (5,0.577) -- cycle;
	\draw[thick, draw=black, fill=blue!60, fill opacity=0.15]
	  (7.154,0) -- (6.154,-0.577) -- (6.154,0.577) -- cycle;
	\node[cloudblack] (b1) at (4,0) {$c$};
	\node[cloudblack, label=right:{$w_7$}] (b2) at (7.154,0) {$c$};
	\node[cloud] (g1) at (5,-0.577) {$b$};
	\node[cloudgrey] (w1) at (5,0.577) {$a$};
	\node[cloud] (g2) at (6.154,0.577) {$b$};
	\node[cloudgrey] (w2) at (6.154,-0.577) {$a$};
	\node at (4.65,0) {$w_1$};
	\node at (6.504,0) {$w_4$};
	\draw[thick] (w1) edge node[above] {$w_2$} (g2);
	\draw[thick] (w2) edge node[below] {$w_3$} (g1);
	\draw[thick] (w1) edge node[above, xshift=-4pt] {$w_5$} (b1);
	\draw[thick] (g1) edge node[below, xshift=-4pt] {$w_6$} (b1);
	\end{tikzpicture}
	\hspace{1.5cm}
	\begin{tikzpicture}[auto, scale=1.4, font={\small}, {every loop/.style}={looseness=2, min distance=2mm}]
	\node (w1) at (-1,0) {$w_1$};
	\node (w2) at (0,0.5) {$w_2$};
	\node (w3) at (0,-0.5) {$w_3$};
	\node (w4) at (1,0) {$w_4$};
	\node (w5) at (-2,0.5) {$w_5$};
	\node (w6) at (-2,-0.5) {$w_6$};
	\node (w7) at (2,0) {$w_7$};
	
	\path (w1) edge[in=110,out=70,loop] node[above] {\scriptsize $a,\!b,\!c$} (w1)
		  (w2) edge[in=110,out=70,loop] node[above] {\scriptsize $a,\!b$} (w2)
		  (w3) edge[in=110,out=70,loop] node[above] {\scriptsize $a,\!b$} (w3)
		  (w4) edge[in=110,out=70,loop] node[above] {\scriptsize $a,\!b,\!c$} (w4)
		  (w5) edge[in=110,out=70,loop] node[above] {\scriptsize $a,\!c$} (w5)
		  (w6) edge[in=110,out=70,loop] node[above] {\scriptsize $b,\!c$} (w6)
		  (w7) edge[in=110,out=70,loop] node[above] {\scriptsize $c$} (w7)
	      (w1) edge node[inner sep=1pt] {\scriptsize $a$} (w2)
	      (w1) edge node[inner sep=1pt,swap] {\scriptsize $b$} (w3)
	      (w1) edge node[inner sep=1pt] {\scriptsize $a,\!c$} (w5)
	      (w1) edge node[inner sep=1pt] {\scriptsize $b,\!c$} (w6)
	      (w2) edge node[inner sep=1pt] {\scriptsize $b$} (w4)
	      (w3) edge node[inner sep=1pt,swap] {\scriptsize $a$} (w4)
	      (w4) edge node[inner sep=1pt] {\scriptsize $c$} (w7);
	\end{tikzpicture}
\end{center}
\end{example}

\begin{example}
Recall the synchronous broadcast model with detectable crashes of \cref{ex:message-passing}. Its associated partial epistemic model is depicted below. Note that both models are minimal in the appropriate sense.

\begin{center}
	\tikzfig{synchronous-broadcast-1crash}
	\hspace{1.5cm}
	\tikzfig{synchronous-broadcast-kripke-1crash}
\end{center}
\end{example}

\section{Completeness results}
\label{sec:completeness}

In this section, we show the completeness results that we mentioned after \cref{prop:soundness}. Namely, we will see that the axiom system $\SC$ (resp., $\SCmin$, $\SCmax$) is complete with respect to the class of generalized (resp., minimal, maximal) simplicial models.
In the presence of the distributed knowledge operator, completeness proofs usually proceed in two steps (see e.g.\ \cite{BaltagS20,FaginHV92}).
First, we define a \emph{canonical pseudo-model} whose worlds are maximal consistent sets of formulas. Then, this pseudo-model needs to be \emph{unravelled} in order to obtain an actual model.

We follow these two routine steps in \cref{sec:Mc} and \cref{sec:unravel}, where we recall the definitions and main properties of the canonical pseudo-model and the unravelling construction. Even though our setting is slightly non-standard (with partial epistemic frames and extra axioms), everything works as usual in these two sections.
\cref{sec:proprifying} deals with the fact that our models are proper.
Finally in \cref{sec:completeness-proof}, we put all the pieces together to show that $\SC$ is complete with respect to the class of proper partial epistemic models with no empty world.
Completeness for generalized simplicial models then follows directly from \cref{thm:truth-equiv}.
The proofs of completeness for $\SCmin$ and $\SCmax$ work the same, with a couple of extra conditions to be checked at the end. Hence we focus on $\SC$ for the time being.

\subsection{The canonical pseudo-model}
\label{sec:Mc}

%\todo{JL: last pass}

A pseudo-model is similar to a Kripke model, except that we have an indistinguishability relation $\sim_B$ for each group of agents $B \subseteq A$.
In the context of this paper, we will consider pseudo-models where those relations $\sim_B$ are partial equivalence relations (PER).
Any partial epistemic model (as in \cref{def:partial-epistemic-model}) yields a pseudo-model by setting $\sim_B\, = \bigcap_{a \in B} \sim_a$. 
However, in general, this equality may not hold in a pseudo-model.

\begin{definition}
A \emph{pseudo-model} $M = \la W, \sim, L \ra$ over the set of agents $A$ consists of:
\begin{itemize}
\item a set of worlds $W$;
\item a PER $\sim_B\,\subseteq W \times W$ for each $B \subseteq A$,
such that (i) $\sim_{B'}\, \subseteq\, \sim_{B}$ whenever $B \subseteq B' \subseteq A$, and (ii) for every $w \in W$ and $B,B'\subseteq A$, if $w \sim_B w$ and $w \sim_{B'} w$, then $w \sim_{B\cup B'} w$;
\item a valuation function $L : W \to \Pow{\AP}$.
\end{itemize}
\end{definition}

The satisfaction relation $M,w \models \phi$ on pseudo-models is defined inductively on the structure of the formula $\phi \in \mathcal{L}_D$, as we did in \cref{sec:partial-epistemic-models}, except that to define the semantics of the distributed knowledge operator $D_B\,\phi$ we rely on the relation $\sim_B$ of the pseudo-model, rather than the intersection of the single-agent relations.

Let $\Gamma \subseteq \mathcal{L}_D$ be a set of formulas. We write $\Gamma \vdash_{\SC} \phi$ when the formula $\phi$ is provable from the hypothesis~$\Gamma$ in the proof system $\SC$. 
We say that $\Gamma$ is \emph{consistent} when $\Gamma \not \vdash_{\SC} \false$, and that $\Gamma$ is \emph{maximal consistent} when moreover, for every $\phi \not \in \Gamma$, we have $\Gamma \cup \{\phi\} \vdash_{\SC} \false$.

\begin{definition}
\label{def:Mc}
The \emph{canonical pseudo-model} $\Mc = \la \Wc, \simc, \Lc \ra$ is defined as follows:
\begin{itemize}
\item $\Wc = \{ \Gamma \mid \Gamma \textup{ is a maximal consistent set of formulas} \}$.
\item $\Gamma \simc_B \Delta$ \; iff \; $D_B\,\phi \in \Gamma$ implies $\phi \in \Delta$.
\item $\Lc(\Gamma) = \Gamma \cap \AP$.
\end{itemize}
\end{definition}

First, let us check that $\Mc$ is indeed a pseudo-model. Symmetry and transitivity of $\simc_B$ are proved as usual using Axioms $\mathbf{B}$ and $\mathbf{4}$, respectively.
To see that $\simc_{B'}\,\subseteq\,\simc_{B}$ for $B \subseteq B'$, assume that $\Gamma \simc_{B'} \Delta$ and that $D_B\,\phi \in \Gamma$.
Using the axiom \textbf{Mono} and the fact that $\Gamma$ is maximal consistent, we must have $D_{B'}\,\phi \in \Gamma$.
Then $\phi \in \Delta$ because we assumed $\Gamma \simc_{B'} \Delta$, so $\Gamma \simc_{B} \Delta$ as required.
Finally, assuming that $\Gamma \simc_{B} \Gamma$ and $\Gamma \simc_{B'} \Gamma$, we want to show that $\Gamma \simc_{B\cup B'} \Gamma$.
First, notice that $\aliveprop{B} \in \Gamma$: otherwise, we would have $D_B\,\false \in \Gamma$, i.e.\ $\false \in \Gamma$, and so $\Gamma$ would be inconsistent.
Similarly, $\aliveprop{B'} \in \Gamma$, so $\aliveprop{B \cup B'} \in \Gamma$ by axiom \textbf{Union}.
Let $\Delta^- = \{ \phi \mid D_{B\cup B'}\,\phi \in \Gamma \}$.
Then $\Delta^-$ is consistent, otherwise we would have $D_{B \cup B'}\, \false \in \Gamma$, which we ruled out.
We can thus extend $\Delta^-$ to a maximal and consistent set $\Delta \supseteq \Delta^-$, which satisfies $\Gamma \simc_{B \cup B'} \Delta$.
By symmetry and transitivity of $\simc_{B \cup B'}$, we get $\Gamma \simc_{B \cup B'} \Gamma$ as required.

\begin{lemma}[Truth Lemma]
\label{lem:truth}
For any formula $\phi \in \mathcal{L}_D$ and any maximal consistent set of formulas $\Gamma \in \Wc$,
we have $\phi \in \Gamma$ iff $\Mc,\Gamma \models \phi$.
\end{lemma}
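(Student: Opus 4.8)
The plan is to prove the Truth Lemma by induction on the structure of the formula $\phi$, following the standard modal-logic argument but being careful about the two nonstandard features of our setting: the relations $\simc_B$ are only PERs (not reflexive), and we have extra axioms beyond $\KBfour$. The base case $\phi = p \in \AP$ is immediate from the definition $\Lc(\Gamma) = \Gamma \cap \AP$. The Boolean cases $\neg\psi$ and $\psi \et \psi'$ follow routinely from the induction hypothesis, using that $\Gamma$ is \emph{maximal} consistent: for negation, exactly one of $\psi, \neg\psi$ lies in $\Gamma$; for conjunction, $\psi \et \psi' \in \Gamma$ iff both conjuncts are in $\Gamma$, by propositional reasoning inside a maximal consistent set.

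The interesting case is $\phi = D_B\,\psi$, where I must show $D_B\,\psi \in \Gamma$ iff $\Mc,\Gamma \models D_B\,\psi$, i.e.\ iff $\Delta \models \psi$ (equivalently $\psi \in \Delta$, by the induction hypothesis) for every $\Delta$ with $\Gamma \simc_B \Delta$. The forward direction is easy and is essentially the definition of $\simc_B$: if $D_B\,\psi \in \Gamma$ and $\Gamma \simc_B \Delta$, then $\psi \in \Delta$ by construction, so by the induction hypothesis $\Mc,\Delta \models \psi$; as this holds for all such $\Delta$, we get $\Mc,\Gamma \models D_B\,\psi$. The converse is the crux and requires the \emph{existence lemma}: assuming $D_B\,\psi \notin \Gamma$, I must exhibit a witness $\Delta \in \Wc$ with $\Gamma \simc_B \Delta$ and $\psi \notin \Delta$. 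The standard construction is to set $\Delta^- = \{\,\theta \mid D_B\,\theta \in \Gamma\,\} \cup \{\neg\psi\}$ and show it is consistent; then any maximal consistent extension $\Delta \supseteq \Delta^-$ witnesses the claim, since $\Gamma \simc_B \Delta$ holds by construction of $\Delta^-$, and $\neg\psi \in \Delta$ forces $\psi \notin \Delta$.

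The main obstacle, as usual, is proving that $\Delta^-$ is consistent. Suppose not; then by compactness there are finitely many $\theta_1, \dots, \theta_n$ with $D_B\,\theta_i \in \Gamma$ such that $\vdash (\theta_1 \et \cdots \et \theta_n) \Rightarrow \psi$. Applying the necessitation rule and Axiom $\mathbf{K}$ (distributing $D_B$ over implication and conjunction), one derives $\vdash (D_B\,\theta_1 \et \cdots \et D_B\,\theta_n) \Rightarrow D_B\,\psi$; since each $D_B\,\theta_i \in \Gamma$ and $\Gamma$ is maximal consistent, this yields $D_B\,\psi \in \Gamma$, contradicting our assumption. This step uses only $\mathbf{K}$ and necessitation, so it goes through unchanged in $\KBfour$ despite the absence of reflexivity; the extra axioms $\mathbf{Mono}$, $\mathbf{Union}$, $\mathbf{NE}$, $\mathbf{P}$ are not needed here (they were already consumed in verifying that $\Mc$ is a well-formed pseudo-model, and they will matter only later when the pseudo-model is unravelled into a genuine partial epistemic model).

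I expect the only subtlety worth flagging is that, because $\simc_B$ is merely a PER, one should not assume $\Gamma \simc_B \Gamma$; but this causes no trouble, since the existence-lemma argument produces the witness $\Delta$ directly from $\Gamma$ without ever invoking reflexivity. Thus the proof structure is identical to the classical Henkin-style Truth Lemma, and the PER setting only affects the interpretation of the relations, not the inductive argument itself.
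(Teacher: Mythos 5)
Your proposal is correct and follows essentially the same route as the paper's proof: induction on $\phi$, with the $D_B\,\psi$ case handled by the standard existence-lemma construction $\Delta^- = \{\neg\psi\} \cup \{\theta \mid D_B\,\theta \in \Gamma\}$, consistency established via Axiom $\mathbf{K}$ and necessitation, and Lindenbaum's Lemma to extend to a witness $\Delta$ with $\Gamma \simc_B \Delta$. Your closing observations — that the extra axioms are consumed in verifying $\Mc$ is a pseudo-model rather than in the Truth Lemma itself, and that the lack of reflexivity of $\simc_B$ plays no role here — are accurate and match the paper's treatment.
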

\begin{proof}
Proceed by induction on the structure of $\phi$.
The base case of atomic propositions holds by definition of $\Lc$. For the boolean connectives, the proof is trivial.

Let us do the case of $D_B\,\phi$.
Assume that $D_B\,\phi \in \Gamma$ and let $\Delta \in \Wc$ such that $\Gamma \simc_{B} \Delta$. % We want to show that $\Mc,\Delta \models \phi$.
By definition of $\simc$, we have $\phi \in \Delta$, so by induction hypothesis, $\Mc,\Delta \models \phi$.
Thus $\Mc,\Gamma \models D_B\,\phi$.
Conversely, assume that $\Mc,\Gamma \models D_B\,\phi$, and suppose by contradiction that $D_B\,\phi \not \in \Gamma$.
Then the set $\Delta^- = \{ \neg \phi \} \cup \{ \psi \mid D_B\,\psi \in \Gamma \}$ is consistent.
Indeed, if $\Delta^-$ was inconsistent, we would have a proof of $\vdash_{\SC} \psi_1 \land \ldots \land \psi_k \Rightarrow \phi$ where $D_B\,\psi_i \in \Gamma$ for every~$i$. 
Then, using Axiom~\textbf{K}, we could prove $\vdash_{\SC} D_B\,\psi_1 \land \ldots \land D_B\,\psi_k \Rightarrow D_B\,\phi$. Because $\Gamma$ is maximal consistent, this implies that $D_B\,\phi \in \Gamma$, which contradicts our assumption.
So $\Delta^-$ is consistent, and by Lindenbaum's Lemma, we can extend it to a maximal consistent set $\Delta \supseteq \Delta^-$.
By construction, $\Gamma \simc_{B} \Delta$, and by induction hypothesis, $\Mc, \Delta \not \models \phi$.
This contradicts the initial assumption that $\Mc,\Gamma \models D_B\,\phi$. Therefore $D_B\,\phi \in \Gamma$, which concludes the proof.
\end{proof}

%\todo{RK: Mention also the article of Baltag on quantum models \cite{baltagCorrelatedKnowledgeEpistemicLogic2010}.}
\begin{remark}
	In this article, pseudo-models serve only as a means to show the completeness of $\SC$.
	It is possible, however, to take pseudo-models as a primitive notion, and to define a semantics for $\mathcal{L}_D$ based on them.
	This yields a non-standard notion of distributed knowledge.
	This approach has been studied in a companion paper~\cite{goubaultSemisimplicialSetModels2023}.
	Remarkably, pseudo-models also have a geometric counterpart: they amount to replacing simplicial complexes by semi-simplicial sets.
	Another paper that used such pseudo-models as the main object of study is~\cite{baltagCorrelatedKnowledgeEpistemicLogic2010}, in order to model observability in quantum systems. 
\end{remark}

\subsection{Unravelling a pseudo-model}
\label{sec:unravel}

As we mentioned at the beginning of \cref{sec:Mc}, partial epistemic models can be viewed as a special case of pseudo-models.
However, the canonical model $\Mc$ is not among this subclass of pseudo-models, because $\simc_B \,\neq \bigcap_{a \in B} \simc_{\{a\}}$.
We now describe a general construction called \emph{unravelling}, which can turn any pseudo-model into a (bisimilar) partial epistemic model.
Later, we will use this construction to unravel the canonical model.

Let $M = \la W, \sim, L \ra$ be a pseudo-model.
A \emph{history} of $M$ is a finite sequence of the form $h = (w_0, B_1, w_1, \ldots, B_k, w_k)$ for some $k \geq 0$, such that $w_{i-1} \sim_{B_{i}} w_{i}$ for all $1 \leq i \leq k$.
We write $\last(h) = w_k$ for the last element of a history, and we write $h \rightarrow_a h'$ if $h' = (h, B_{k+1}, w_{k+1})$ with $a \in B_{k+1}$

\begin{definition}
The \emph{unravelling} of $M$ is a partial epistemic model $\Mu = \la \Wu, \simu, \Lu \ra$ defined as follows:
\begin{itemize}
\item $\Wu$ is the set of histories of $M$,
\item $\simu_a$ is the transitive and symmetric closure of $\rightarrow_a$, i.e., $\simu_a\, = \left( \rightarrow_a \cup \leftarrow_a \right)^+$,
\item $\Lu(h) = L(\last(h))$.
\end{itemize}
\end{definition}

It is immediate to see that $U(M)$ is a partial epistemic model, since $\simu_a$ is symmetric and transitive by definition.
Before we can prove that unravelling a pseudo-model preserves the satisfaction relation (\cref{lem:unravelling-preserves-truth}), we first show a useful lemma relating the relation $\sim_B$ of a pseudo-model with the one of its unravelling.

\begin{lemma}
\label{lem:unravelling-preserves-simB}
Let $M$ be a pseudo-model and $U(M)$ its unravelling.
Let $h, h' \in H$ be histories, and $B \subseteq A$ a set of agents.
If $h \simu_a h'$ for all $a \in B$, then $\last(h) \sim_B \last(h')$.
\end{lemma}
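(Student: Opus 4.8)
The plan is to exploit the forest structure underlying the set of histories. Say that a history $g'$ is a \emph{child} of $g$ when $g' = (g, B_{k+1}, w_{k+1})$ for some group $B_{k+1}$ and world $w_{k+1}$; removing the last pair sends each nontrivial history to a unique parent, so the histories form a forest. Each generating step $g \rightarrow_a g'$ of $\simu_a$ is precisely a child edge whose group $B_{k+1}$ contains $a$. First I would record the local fact that along any such edge the last worlds are $\sim_B$-related whenever $B \subseteq B_{k+1}$: indeed $\last(g) \sim_{B_{k+1}} \last(g')$ by the definition of a history, hence $\last(g) \sim_B \last(g')$ by the monotonicity property (i) of pseudo-models, which gives $\sim_{B_{k+1}} \subseteq\, \sim_B$.

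The heart of the argument is a characterization of $\simu_a$ in terms of this forest. In a forest there is a unique simple path between any two nodes lying in the same tree, and removing any edge of that path separates its endpoints; consequently every walk between two histories must traverse every edge of their unique connecting path. Since each $\rightarrow_a$ or $\leftarrow_a$ step is a forest edge whose group contains $a$, this shows that $h \simu_a h'$ holds iff $h$ and $h'$ lie in the same tree and \emph{every} edge of the unique path between them carries a group containing $a$.

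Now I would combine this over all $a \in B$. The key point is that the same unique path witnesses $\simu_a$ for every $a$, so if $h \simu_a h'$ for all $a \in B$, then each edge of that single path carries a group containing $a$ for every $a \in B$, i.e.\ a group that contains $B$. Applying the local fact above to each edge yields $\last(g) \sim_B \last(g')$ across every edge of the path; chaining these along the path, and using that $\sim_B$ is symmetric and transitive, produces $\last(h) \sim_B \last(h')$ as required.

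The main obstacle to anticipate is exactly that, in a pseudo-model, $\sim_B$ need not equal $\bigcap_{a \in B} \sim_a$, so one cannot simply collect the per-agent conclusions $\last(h) \sim_a \last(h')$ and intersect them. The forest argument is what circumvents this: the uniqueness of the connecting path forces all the per-agent witnesses to coincide, producing edges whose groups contain $B$ \emph{jointly}, which is precisely what lets monotonicity deliver $\sim_B$ directly rather than merely the intersection of the $\sim_a$. An alternative, more computational route would induct on the combined path length, but it ends up re-deriving the same uniqueness fact, so the forest formulation seems cleanest.
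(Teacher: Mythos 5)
Your treatment of the case $h \neq h'$ is essentially the paper's own argument: the unique simple path in the forest of histories is exactly the paper's ``backwards to the common prefix, then forwards'' path, every $(\rightarrow_a \cup \leftarrow_a)$-walk must traverse all of its edges, so each edge's group contains every $a \in B$ and hence contains $B$; monotonicity and transitivity of $\sim_B$ then finish the job. That part is fine.

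The genuine gap is the case $h = h'$, which your characterization silently gets wrong. The relation $\simu_a$ is only the \emph{transitive-symmetric} closure of $\rightarrow_a$, not its reflexive closure, so $h \simu_a h$ is a nontrivial hypothesis: it requires a non-empty walk, typically a detour $h \rightarrow_a h'' \leftarrow_a h$. For $h = h'$ the unique simple path in the forest is empty, so your claimed equivalence ``$h \simu_a h'$ iff every edge of the unique path carries a group containing $a$'' would make $h \simu_a h$ vacuously true for every $a$, which is false; and your final chaining step along an empty path yields nothing, since $\sim_B$ is a PER and $\last(h) \sim_B \last(h)$ is precisely the non-automatic fact to be proved. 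Worse, the detour histories $h''$ witnessing $h \simu_a h$ may be \emph{different for each} $a \in B$, so there is no single path whose edges jointly carry $B$, and the uniqueness argument cannot be repaired. This is exactly where the paper invokes condition (ii) in the definition of a pseudo-model: from $\last(h) \sim_{\{a\}} \last(h)$ for each $a \in B$ separately, repeated application of (ii) gives $\last(h) \sim_B \last(h)$. Your proof never uses condition (ii), which is a strong signal that the diagonal case has been lost; you need to add it as a separate case handled by that axiom.
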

\begin{proof}
Let us first assume that $h \neq h'$; we will treat the other case separately.
Let~$h''$ be the common prefix of $h$ and $h'$, and let us write $h = (h'', B_1, w_1, \ldots, B_k, w_k)$ and $h' = (h'', B'_1, w'_1, \ldots, B'_\ell, w'_\ell)$.
For each agent $a \in B$, notice that there is a unique non-redundant path from $h$ to $h'$ for the relation $(\rightarrow_a \cup \leftarrow_a)$, which first goes backwards from $h$ to $h''$, then forwards from $h''$ to $h'$, as follows:
 $h \leftarrow_a \ldots \leftarrow_a h'' \rightarrow_a \ldots \rightarrow_a h'$.
Since any proof that $h \simu_a h'$ must go through this path, we must have $a \in B_i$ for all $1 \leq i \leq k$, and $a \in B'_j$ for all $1 \leq j \leq \ell$.
The same fact holds for each $a \in B$, so in fact $B \subseteq B_i$ and $B \subseteq B'_j$ for all $i,j$, 
and since $M$ is a pseudo-model, $\sim_{B_i} \,\subseteq\, \sim_B$ and $\sim_{B'_j} \,\subseteq\, \sim_B$.
Thus, all the worlds of~$M$ along this path are related by $\sim_B$:
\[\last(h) = w_k \sim_B \ldots \sim_B w_1 \sim_B \last(h'') \sim_B w'_1 \sim_B \ldots \sim_B w'_\ell = \last(h').\]
Finally, by transitivity of $\sim_B$, we get $\last(h) \sim_B \last(h')$ as required.

We still need to prove the lemma for $h = h'$.
The difficulty is that to have $h \simu_a h$,
%since~$\simu_a$ is not assumed to be reflexive,
we must take a detour via another history $h \rightarrow_a h'' \leftarrow_a h$.
However, unlike in the first half of the proof, the choice of $h''$ might differ for each $a \in B$.
This is where condition (ii) in the definition of a pseudo-model comes into play.
Clearly, for each $a \in B$, $h \simu_a h$ implies that $\last(h) \sim_{\{a\}} \last(h)$.
Using condition (ii) of the pseudo-model $M$ repeatedly, we get $\last(h) \sim_{B} \last(h)$, which concludes the proof.
\end{proof}

\begin{lemma}
\label{lem:unravelling-preserves-truth}
For every history $h \in H$ and formula $\phi \in \mathcal{L}_D$, $M,\last(h) \models \phi$ iff $U(M),h \models \phi$.
\end{lemma}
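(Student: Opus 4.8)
The plan is to prove Lemma~\ref{lem:unravelling-preserves-truth} by a routine induction on the structure of the formula $\phi$, where the only interesting case is the distributed knowledge modality $D_B\,\phi$. The base case of atomic propositions holds immediately because we defined $\Lu(h) = L(\last(h))$, so the labellings agree; the cases of $\neg$ and $\land$ follow directly from the induction hypothesis. The whole content of the lemma is therefore concentrated in showing that for every history $h$, we have $U(M),h \models D_B\,\phi$ iff $M,\last(h) \models D_B\,\phi$.

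First I would unfold both satisfaction relations. On the simplicial/partial-epistemic side, $U(M),h \models D_B\,\phi$ means $U(M),h' \models \phi$ for every history $h'$ with $h \simu_B h'$ (i.e.\ $h \simu_a h'$ for all $a \in B$); on the pseudo-model side, $M,\last(h) \models D_B\,\phi$ means $M,w' \models \phi$ for every world $w'$ with $\last(h) \sim_B w'$. Using the induction hypothesis, $U(M),h' \models \phi$ is equivalent to $M,\last(h') \models \phi$, so it suffices to match up the two quantification ranges via the map $h' \mapsto \last(h')$.

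The direction where I would use Lemma~\ref{lem:unravelling-preserves-simB} is the one that shows $M,\last(h) \models D_B\,\phi$ implies $U(M),h \models D_B\,\phi$: given any $h'$ with $h \simu_a h'$ for all $a \in B$, that lemma yields $\last(h) \sim_B \last(h')$, so the pseudo-model hypothesis gives $M,\last(h') \models \phi$, and by the induction hypothesis $U(M),h' \models \phi$. For the converse direction I would argue that whenever $\last(h) \sim_B w'$ for some world $w'$ of $M$, the extended history $h' = (h, B, w')$ is a legitimate history of $M$ (since $\last(h) \sim_B w'$) satisfying $h \simu_a h'$ for every $a \in B$ (because $h \rightarrow_a h'$ whenever $a \in B$), and moreover $\last(h') = w'$. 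Thus every $\sim_B$-successor $w'$ of $\last(h)$ is realised as $\last(h')$ for some $\simu_B$-related history $h'$, so $U(M),h \models D_B\,\phi$ forces $M,w' \models \phi$ via the induction hypothesis.

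\textbf{The main obstacle} is making sure the two quantifications are genuinely co-extensive rather than merely comparable in one direction: Lemma~\ref{lem:unravelling-preserves-simB} only asserts $h \simu_B h' \Rightarrow \last(h) \sim_B \last(h')$, which handles one inclusion, and I must supply the reverse realisability argument (extending $h$ by one step labelled $B$) to cover the other. The subtle point to verify carefully is that the one-step extension $(h, B, w')$ really does satisfy $h \simu_a h'$ for \emph{all} $a \in B$ simultaneously—this is immediate from $h \rightarrow_a (h, B, w')$ for each $a \in B$—so that no genuine $\sim_B$-successor of $\last(h)$ escapes the range of the histories quantified over in $U(M)$. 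With both inclusions in hand, the equivalence of the two universal quantifications closes the inductive step.
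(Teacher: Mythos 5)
Your proposal is correct and follows essentially the same route as the paper's proof: induction on $\phi$, with the left-to-right direction of the $D_B$ case handled by Lemma~\ref{lem:unravelling-preserves-simB} and the right-to-left direction handled by extending $h$ with the one-step history $(h,B,w')$. No gaps.
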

\begin{proof}
This is proved by induction on the structure of the formula $\phi$. The cases of atomic propositions and boolean connectives are straightforward, so we focus on the case of $D_B\,\phi$.

For the left-to-right implication, assume that $M,\last(h) \models D_B\,\phi$, and let $h' \in H$ be a history such that $h \simu_B h'$, i.e., $h \simu_a h'$ for all $a \in B$.
By \cref{lem:unravelling-preserves-simB} we get $\last(h) \sim_B \last(h')$, which implies that $M,\last(h') \models \phi$, and by induction hypothesis $U(M),h' \models \phi$.

For the right-to-left implication, assume that $U(M),h \models D_B\,\phi$ and let $w' \in W$ such that $\last(h) \sim_B w'$ in $M$. Consider the history $h' = (h, B, w')$.
Then $h \rightarrow_a h'$ for each $a \in B$, therefore, $h \simu_B h'$.
Thus $U(M),h' \models \phi$ because we assumed that $U(M),h \models D_B\,\phi$, and by induction hypothesis, $M,\last(h') \models \phi$ i.e.\ $M,w' \models \phi$ as required.
\end{proof}

\begin{remark}
In fact, the map $\last : H \to W$ can be shown to be a bisimulation between $M$ and $U(M)$.
\end{remark}

\subsection{Making the model proper}
\label{sec:proprifying}

Even though the canonical model $\Mc$ can be shown to be proper thanks to Axiom~\textbf{P}, the unravelling construction introduces some redundancy and as a consequence, $U(\Mc)$ is not proper.
However, as we will see in the next section, $U(\Mc)$ has a good enough property: two ``equivalent'' worlds always satisfy the same set of formulas.
This allows us to construct a bisimilar proper model, by removing the redundant worlds, as we describe in this section.

Let $M = \la W, \sim, L \ra$ be a partial epistemic model, and recall that $\live{w} = \{ a \in A \mid w \sim_a w \}$ is the set of alive agents in~$w$.
We say that two worlds $w,w'$ are \emph{equivalent}, written $w \equiv w'$, if $\live{w} = \live{w'}$ and for all $a \in \live{w}$, $w \sim_a w'$.
Thus, $M$ is proper if and only if $w \equiv w'$ implies $w = w'$.
Here, we assume a weaker property: that if $w \equiv w'$, then $L(w) = L(w')$.
From this, one can deduce by an easy induction that for all $\phi \in \mathcal{L}_D$, $M,w \models \phi$ iff $M,w' \models \phi$.

\begin{definition}
The model $\Mproper = \la \Wproper, \simproper, \Lproper \ra$ is defined as follows:
\begin{itemize}
\item $\Wproper$ is the set of equivalence classes of the relation $\equiv$. We write $[w] \in \Wproper$ for the equivalence class of $w \in W$.
\item $[w_1] \simproper_a [w_2]$ iff $w_1 \sim_a w_2$.
\item $L'([w]) = L(w)$.
\end{itemize}
\end{definition}

It is straightforward to check that the definitions of $\simproper$ and $\Lproper$ do not depend on the choice of representative of the equivalence class, and that $\Mproper$ is a partial epistemic model.

\begin{lemma}
\label{lem:proper-preserves-truth}
The model $\Mproper$ is proper, and moreover $M,w \models \phi$ iff $\Mproper,[w] \models \phi$.
\end{lemma}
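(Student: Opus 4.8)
The plan is to establish the two claims separately: first that $\Mproper$ is proper, which is essentially definitional, and then that truth is preserved, by a routine induction on formulas whose only interesting case is $D_B$. Throughout I take for granted, as the paragraph preceding the statement already notes, that $\simproper$ and $\Lproper$ are well defined on equivalence classes, and that $\Mproper$ is a genuine partial epistemic model, so that its group relation is the intersection $\simproper_B = \bigcap_{a \in B} \simproper_a$ (in contrast with a pseudo-model).

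For properness, the key observation is that the relation $\equiv$ used to build $\Wproper$ is exactly the ``world-indistinguishability'' that properness forbids. Concretely, an agent $a$ is alive in $[w]$ iff $[w] \simproper_a [w]$, i.e.\ iff $w \sim_a w$, so $\live{[w]} = \live{w}$; and $[w] \simproper_a [w']$ holds iff $w \sim_a w'$. Hence the hypothesis of properness for $\Mproper$ — that $\live{[w]} = \live{[w']}$ and $[w] \simproper_a [w']$ for every $a \in \live{[w]}$ — unfolds to $\live{w} = \live{w'}$ together with $w \sim_a w'$ for all $a \in \live{w}$, which is precisely $w \equiv w'$. By definition of the quotient this gives $[w] = [w']$, so $\Mproper$ is proper.

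For the truth equivalence I would prove $\Mproper,[w] \models \phi$ iff $M,w \models \phi$ by induction on $\phi$. Atomic propositions hold because $\Lproper([w]) = L(w)$, and the boolean cases are immediate from the induction hypothesis. For $D_B\,\phi$, the one fact I need is the correspondence $[w] \simproper_B [v] \text{ iff } w \sim_B v$: since $\Mproper$ is a partial epistemic model, $[w] \simproper_B [v]$ means $[w] \simproper_a [v]$ for every $a \in B$, i.e.\ $w \sim_a v$ for every $a \in B$, i.e.\ $w \sim_B v$. With this in hand both directions are formal: if $M,w \models D_B\,\phi$ and $[w] \simproper_B [v]$, then $w \sim_B v$, so $M,v \models \phi$ and the induction hypothesis gives $\Mproper,[v] \models \phi$; conversely, given $w \sim_B v$ in $M$ we have $[w] \simproper_B [v]$, so from $\Mproper,[w] \models D_B\,\phi$ we get $\Mproper,[v] \models \phi$ and hence $M,v \models \phi$.

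I do not expect a genuine obstacle here; the content lies entirely in setting up the right correspondences. The only point deserving care is that the argument relies on $\Mproper$ being a bona fide partial epistemic model rather than a pseudo-model, so that its group relation is the intersection of the single-agent relations — this is what makes $[w] \simproper_B [v] \Leftrightarrow w \sim_B v$ hold and lets the $D_B$ case go through without invoking any pseudo-model side conditions. The preliminary remark that $\equiv$-equivalent worlds of $M$ satisfy the same formulas is used only implicitly, to ensure that $\Lproper$, and hence the whole induction, is independent of the chosen representative.
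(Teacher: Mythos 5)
Your proposal is correct and follows essentially the same route as the paper's proof: properness is obtained by unfolding $[w]\equiv[w']$ to $w\equiv w'$ and invoking the quotient, and truth preservation is a routine induction whose only non-trivial case, $D_B\,\phi$, rests on the correspondence $[w]\simproper_B[v]$ iff $w\sim_B v$. The paper states these steps more tersely, but the content is identical, and your explicit remarks about well-definedness and about $\Mproper$ being a genuine partial epistemic model (so that $\simproper_B$ is the intersection of the single-agent relations) are exactly the facts the paper relies on implicitly.
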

\begin{proof}
To see that $\Mproper$ is proper, notice that $[w] \equiv [w']$ implies that $w \equiv w'$. So~$w$ and~$w'$ belong to the same equivalence class, i.e.\ $[w] = [w']$ and the model is proper.

To prove the second part of the lemma, proceed by induction on the formula~$\phi$.
The cases of atomic propositions and boolean connectives are trivial.
So assume that $M,w \models D_B\,\phi$, and let $[w'] \in \Wproper$ be such that $[w] \simproper_a [w']$ for all $a \in B$.
Then $w \sim_a w'$ for all $a \in B$, thus $M,w' \models \phi$. By induction hypothesis, $\Mproper,[w'] \models \phi$.
The converse is identical.
%
%Conversely, assume $\Mproper,[w] \models D_B\,\phi$, and let $w' \in W$ be such that $w \sim_a w'$ for all $a \in B$.
%Then $[w] \sim_a [w']$ for all $a \in B$, i.e.\ $\Mproper,[w'] \models \phi$. By induction hypothesis, $M,w' \models \phi$.
\end{proof}

\subsection{Proofs of completeness}
\label{sec:completeness-proof}

We are almost ready to prove completeness for the axiom system $\SC$.
What remains to be checked is that the unravelled canonical model $U(\Mc)$ can be made proper using the construction in \cref{sec:proprifying}, and that the resulting model $\proper{U(\Mc)}$ has no empty world.

\begin{lemma}
\label{lem:alive-history}
Let $h$ be a history of $\Mc$, and write $\Gamma = \last(h)$.
Then $a \in \live{h}$ iff $\aliveprop{a} \in \Gamma$.
\end{lemma}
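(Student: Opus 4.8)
The plan is to prove the two implications separately, exploiting the fact that $a \in \live{h}$ (i.e.\ $h \simu_a h$ in $U(\Mc)$) corresponds precisely to the reflexivity $\Gamma \simc_{\{a\}} \Gamma$ of the single-agent relation in the canonical pseudo-model, which in turn is governed by whether $D_{\{a\}}\false$ belongs to $\Gamma$. For the forward direction I would invoke \cref{lem:unravelling-preserves-simB}, and for the backward direction I would reuse the Lindenbaum-style extension argument already carried out in the Truth Lemma (\cref{lem:truth}).

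For the ``only if'' direction, suppose $a \in \live{h}$, that is $h \simu_a h$. Applying \cref{lem:unravelling-preserves-simB} with $B = \{a\}$ and $h' = h$ yields $\last(h) \simc_{\{a\}} \last(h)$, i.e.\ $\Gamma \simc_{\{a\}} \Gamma$. By the definition of $\simc$, this means that $D_{\{a\}}\phi \in \Gamma$ implies $\phi \in \Gamma$ for every $\phi$. If $D_{\{a\}}\false$ were in $\Gamma$, we would then get $\false \in \Gamma$, contradicting the consistency of $\Gamma$. Hence $D_{\{a\}}\false \notin \Gamma$, and since $\Gamma$ is maximal consistent, $\aliveprop{a} = \neg D_{\{a\}}\false \in \Gamma$.

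For the ``if'' direction, suppose $\aliveprop{a} \in \Gamma$, so that $D_{\{a\}}\false \notin \Gamma$. I would first show that the set $\Delta^- = \{ \phi \mid D_{\{a\}}\phi \in \Gamma \}$ is consistent: otherwise there would be $\phi_1, \dots, \phi_k \in \Delta^-$ with $\vdash_{\SC} \phi_1 \land \dots \land \phi_k \Rightarrow \false$, and then Axiom~\textbf{K} together with necessitation would give $\vdash_{\SC} D_{\{a\}}\phi_1 \land \dots \land D_{\{a\}}\phi_k \Rightarrow D_{\{a\}}\false$, forcing $D_{\{a\}}\false \in \Gamma$, a contradiction. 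By Lindenbaum's Lemma, extend $\Delta^-$ to a maximal consistent set $w'$; then $\Gamma \simc_{\{a\}} w'$ by construction, so $h' = (h, \{a\}, w')$ is a genuine history of $\Mc$ with $h \rightarrow_a h'$. Since $\rightarrow_a \,\subseteq\, \simu_a$ we get $h \simu_a h'$, and by symmetry of $\simu_a$ also $h' \simu_a h$; transitivity then yields $h \simu_a h$, i.e.\ $a \in \live{h}$.

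The two implications being essentially mirror images of arguments already present in the paper, I do not expect a serious obstacle. The one point requiring a little care is the ``if'' direction: one must remember that $\simu_a = (\rightarrow_a \cup \leftarrow_a)^+$ is a transitive (not reflexive-transitive) closure, so $h \simu_a h$ is witnessed by the genuine length-two path $h \rightarrow_a h' \leftarrow_a h$ rather than by any automatic reflexive loop, and that the consistency of $\Delta^-$ is exactly the contrapositive of $D_{\{a\}}\false \in \Gamma$. Both are immediate once spelled out.
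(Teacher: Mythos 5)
Your proof is correct, and its overall skeleton matches the paper's: both directions are handled separately, the forward one by reducing $h \simu_a h$ to reflexivity of $\simc_a$ at $\Gamma$ and ruling out $K_a\,\false \in \Gamma$ by consistency, and the backward one by exhibiting a history $h'$ with $h \rightarrow_a h'$ and closing with symmetry and transitivity of $\simu_a$. The ingredients differ in two places, though. In the forward direction you invoke \cref{lem:unravelling-preserves-simB} (with $B=\{a\}$, $h'=h$) to get $\Gamma \simc_a \Gamma$ outright, whereas the paper just unpacks one step of the path $h \simu_a h$ to find a neighbouring history $h_0$ and works with $\Gamma \simc_a \last(h_0)$; your version is more modular and arguably cleaner, since the $h=h'$ case of that lemma was proved precisely to handle this reflexivity subtlety. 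In the backward direction the paper derives $\Gamma \simc_a \Gamma$ directly from the $\KBfour$-provable formula $\aliveprop{a} \Rightarrow (K_a\,\phi \Rightarrow \phi)$ and takes $h' = (h,\{a\},\Gamma)$, i.e.\ it reuses $\Gamma$ itself as the witness; you instead run a Lindenbaum extension of $\{\phi \mid D_{\{a\}}\phi \in \Gamma\}$ to produce a possibly different maximal consistent $w'$. Both are sound; the paper's route is shorter because it exploits a theorem already recorded in \cref{sec:Axiomatization}, while yours is self-contained modulo Axiom~\textbf{K}, necessitation and Lindenbaum, mirroring the argument in the Truth Lemma. Your closing remark about $\simu_a$ being a transitive (not reflexive-transitive) closure identifies exactly the point that makes this lemma non-trivial.
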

\begin{proof}
If $a \in \live{h}$, there must be some history $h_0$ such that $h \rightarrow_a h_0$. Writing $\Gamma' = \last(h_0)$, this means that $\Gamma \simc_a \Gamma'$.
So we cannot have $K_a\,\false \in \Gamma$, otherwise $\false \in \Gamma'$ and $\Gamma'$ would be inconsistent.
Thus $\neg K_a\,\false \in \Gamma$ (because $\Gamma$ is maximal), i.e.\ $\aliveprop{a} \in \Gamma$.

Conversely, assume that $\aliveprop{a} \in \Gamma$.
We have seen in \cref{sec:Axiomatization} that the formula $\aliveprop{a} \Rightarrow (K_a\,\phi \Rightarrow \phi)$ is valid in $\KBfour$ (a fortiori in~$\SC$).
Thus for every formula $\phi$, $K_a\,\phi \in \Gamma$ implies $\phi \in \Gamma$, i.e.\ $\Gamma \simc_a \Gamma$.
Writing $h' = (h, \{a\}, \Gamma)$, we have shown that $h \rightarrow_a h'$.
By symmetry and transitivity, this yields $h \simu_a h$, which concludes the proof.
\end{proof}

\noindent
The following condition is required in order to apply the construction of \cref{sec:proprifying}.

\begin{proposition}
In the model $U(\Mc)$, if $h \equiv h'$ then $\Lu(h) = \Lu(h')$.
\end{proposition}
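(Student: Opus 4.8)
The plan is to reduce the statement to Axiom \textbf{P}, applied only to atomic propositions, and transported across the canonical relation $\simc_B$ for the common set $B$ of alive agents. Write $\Gamma = \last(h)$ and $\Gamma' = \last(h')$, both maximal consistent sets, and set $B = \live{h}$; by the hypothesis $h \equiv h'$ we have $\live{h} = \live{h'} = B$. Since $\Lu(h) = \Lc(\last(h)) = \Gamma \cap \AP$ and similarly $\Lu(h') = \Gamma' \cap \AP$, the goal is exactly $\Gamma \cap \AP = \Gamma' \cap \AP$, and by symmetry it suffices to prove $\Gamma \cap \AP \subseteq \Gamma'$.

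First I would collect the membership facts needed to invoke the axioms. By \cref{lem:alive-history}, for every agent $a$ we have $a \in B$ iff $\aliveprop{a} \in \Gamma$ iff $\aliveprop{a} \in \Gamma'$; hence, by maximality, $\deadprop{a} \in \Gamma \cap \Gamma'$ for each $a \in \complement{B}$, so the conjunction $\deadprop{\complement{B}}$ belongs to both $\Gamma$ and $\Gamma'$. Axiom \textbf{NE} together with maximality guarantees $B \neq \emptyset$, so that $D_B$ is a legitimate operator. Next, from $h \equiv h'$ we have $h \simu_a h'$ for all $a \in B$, and \cref{lem:unravelling-preserves-simB} yields $\last(h) \sim_B \last(h')$ in $\Mc$, that is, $\Gamma \simc_B \Gamma'$. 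Since $\simc_B$ is a PER, symmetry and transitivity give $\Gamma \simc_B \Gamma$; consequently $D_B\,\false \notin \Gamma$ (otherwise $\false \in \Gamma$), so $\aliveprop{B} = \neg D_B\,\false \in \Gamma$ by maximality.

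The core step is then the atomic transport. Take any $p \in \Gamma \cap \AP$. Then $\aliveprop{B} \land \deadprop{\complement{B}} \land p \in \Gamma$, so the instance $\phi = p$ of Axiom \textbf{P}, combined with modus ponens and maximality, gives $D_B(\deadprop{\complement{B}} \Rightarrow p) \in \Gamma$. Transporting this along $\Gamma \simc_B \Gamma'$ (by definition of $\simc_B$) produces $(\deadprop{\complement{B}} \Rightarrow p) \in \Gamma'$; since $\deadprop{\complement{B}} \in \Gamma'$, a final modus ponens inside $\Gamma'$ yields $p \in \Gamma'$. This establishes $\Gamma \cap \AP \subseteq \Gamma'$, and the reverse inclusion is symmetric, whence $\Lu(h) = \Lu(h')$.

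The main obstacle I anticipate is not the transport itself but correctly assembling the group-level premises of Axiom \textbf{P}: one must promote the individual aliveness facts $\aliveprop{a} \in \Gamma$ to the group formula $\aliveprop{B} \in \Gamma$, which I obtain from the PER structure via $\Gamma \simc_B \Gamma$ (alternatively one could iterate Axiom \textbf{Union}), and one must check that \cref{lem:unravelling-preserves-simB} genuinely applies, including in the reflexive case $h = h'$ that underlies $\Gamma \simc_B \Gamma$. Everything else is routine bookkeeping with maximal consistent sets and modus ponens.
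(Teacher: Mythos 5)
Your proof is correct and follows essentially the same route as the paper's: establish $\deadprop{\complement{B}}$ in both maximal consistent sets via \cref{lem:alive-history}, obtain $\Gamma \simc_B \Gamma'$ from \cref{lem:unravelling-preserves-simB}, apply Axiom~\textbf{P} to the atomic proposition $p$, and transport $D_B(\deadprop{\complement{B}} \Rightarrow p)$ along $\simc_B$ before a final modus ponens. The only difference is that you spell out explicitly how $\aliveprop{B} \in \Gamma$ is obtained (via $\Gamma \simc_B \Gamma$ from the PER structure), a step the paper leaves implicit in the phrase ``since $\Gamma$ is maximal and consistent''.
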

\begin{proof}
Consider two histories $h, h'$ of $\Mc$, and assume that $h \equiv h'$, i.e.\ $\live{h} = \live{h'}$ and for all $a \in \live{h}$, $h \simu_a h'$.
Let $\Gamma = \last(h)$ and $\Delta = \last(h')$.
By \cref{lem:alive-history}, $\aliveprop{a} \in \Gamma \iff \aliveprop{a} \in \Delta$.

Let $p \in \Lu(h) = \Lc(\Gamma)$ an atomic proposition. We have $p \in \Gamma$ by definition of $\Lc$.
Let $B = \live{h} =  \{ a \mid \aliveprop{a} \in \Gamma \}$.
Since $\Gamma$ is maximal and consistent, it contains the formula $\aliveprop{B} \land \deadprop{\complement{B}} \land p$.
By Axiom~\textbf{P}, $\Gamma$ must also contain $D_B\,(\deadprop{\complement{B}} \Rightarrow p)$.
By \cref{lem:unravelling-preserves-simB}, $h \simu_a h'$ for all $a \in B$ implies that $\Gamma \simc_B \Delta$.
By definition of~$\simc$, the set $\Delta$ then contains the formula $\deadprop{\complement{B}} \Rightarrow p$.
And since $\Delta$ is maximal consistent, and contains the formula $\deadprop{\complement{B}}$, we finally have $p \in \Delta$, i.e.\ $p \in \Lu(h')$.

The converse inclusion $\Lu(h') \subseteq \Lu(h)$ is proved symmetrically.
\end{proof}

\begin{proposition}
\label{prop:Mc-proper}
The model $\proper{U(\Mc)}$ has no empty world.
\end{proposition}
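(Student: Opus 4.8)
The claim is that the proper version of the unravelled canonical model, $\proper{U(\Mc)}$, has no empty world — meaning every world (equivalence class of histories) has at least one alive agent. By the definition of $\proper{U(\Mc)}$, its worlds are equivalence classes $[h]$ of histories of $\Mc$, and $a \in \live{[h]}$ iff $a \in \live{h}$ (since equivalent histories have the same alive set). So it suffices to show that every history $h$ of $\Mc$ has $\live{h} \neq \emptyset$.

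**The plan.** Let me first recall what we already have at our disposal. By \cref{lem:alive-history}, for a history $h$ with $\Gamma = \last(h)$, we have $a \in \live{h}$ iff $\aliveprop{a} \in \Gamma$. Therefore $\live{h} = \emptyset$ would be equivalent to $\aliveprop{a} \notin \Gamma$ for all $a \in A$, i.e., since $\Gamma$ is maximal consistent, $\deadprop{a} = K_a\,\false \in \Gamma$ for every agent $a$. The key observation is that this situation is ruled out by Axiom \textbf{NE}. Indeed \textbf{NE} asserts $\bigvee_{a \in A} \aliveprop{a}$, which is a theorem of $\SC$, hence belongs to every maximal consistent set $\Gamma$. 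Since $\Gamma$ is maximal consistent and contains this disjunction, it must contain $\aliveprop{a}$ for at least one $a \in A$ (otherwise $\Gamma$ would contain the negation of each disjunct, hence $\neg \bigvee_{a} \aliveprop{a}$, contradicting consistency with \textbf{NE}).

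**Carrying it out.** First I would note that it suffices to prove $\live{h} \neq \emptyset$ for every history $h$ of $\Mc$, since passing to equivalence classes and then to $\proper{(-)}$ does not change the set of alive agents: by construction of $\Mproper$ in \cref{sec:proprifying}, $[w] \simproper_a [w]$ iff $w \sim_a w$, so $\live{[h]} = \live{h}$. Then, fixing a history $h$ and setting $\Gamma = \last(h)$, I would invoke Axiom \textbf{NE} together with \cref{lem:alive-history}: since $\Gamma \in \Wc$ is maximal consistent and $\bigvee_{a \in A} \aliveprop{a} \in \SC$, maximality forces $\aliveprop{a} \in \Gamma$ for some $a \in A$, and \cref{lem:alive-history} then gives $a \in \live{h}$. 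Hence $\live{[h]} = \live{h} \neq \emptyset$, and every world of $\proper{U(\Mc)}$ has at least one alive agent.

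**Expected difficulty.** This proof is essentially routine: the only real content is recognizing that Axiom \textbf{NE} is precisely the syntactic counterpart of the ``no empty world'' condition, and that \cref{lem:alive-history} bridges the syntactic membership $\aliveprop{a} \in \Gamma$ with the semantic aliveness $a \in \live{h}$ in the unravelling. The main thing to be careful about is the bookkeeping around the three successive constructions ($\Mc \rightsquigarrow U(\Mc) \rightsquigarrow \proper{U(\Mc)}$): one must confirm that the alive set of a world is genuinely preserved at each stage, so that a nonempty alive set at the level of histories survives all the way to the final proper model. None of these steps introduces any new alive/dead distinctions, so this verification is immediate, and there is no substantive obstacle.
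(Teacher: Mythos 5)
Your proof is correct and follows essentially the same route as the paper's: reduce to showing that every history $h$ of $\Mc$ has a nonempty alive set (since passing to equivalence classes preserves aliveness), then combine Axiom \textbf{NE} with the maximal consistency of $\last(h)$ and \cref{lem:alive-history} to extract an agent $a$ with $a \in \live{h}$. No gaps; the extra care you take in noting that the alive set survives the $\proper{(-)}$ construction matches the paper's one-line justification of the same point.
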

\begin{proof}
It is sufficient to show that $U(\Mc)$ has no empty world, since any agent $a$ which is alive in $h$ is also alive in $[h]$, because $h \simu_a h$ implies $[h] \simproper_a [h]$.

So let $h$ be a history of $\Mc$, and write $\Gamma = \last(h)$. We want to find some agent $a \in A$ such that $a \in \live{h}$.
Since $\Gamma$ is maximal and consistent, and obeys the Axiom~\textbf{NE}, there must be some agent $a \in A$ such that $\aliveprop{a} \in \Gamma$.
By \cref{lem:alive-history}, this entails $a \in \live{h}$.
\end{proof}

\begin{theorem}
\label{thm:completeness-SC-kripke}
The system $\SC$ is complete with respect to the class of proper partial epistemic models with no empty world.
\end{theorem}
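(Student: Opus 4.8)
The plan is to prove completeness by the standard contrapositive argument, assembling the machinery developed in \cref{sec:Mc,sec:unravel,sec:proprifying}. Suppose $\SC \not\vdash \phi$; the goal is to exhibit a proper partial epistemic model with no empty world in which $\phi$ fails. Since $\SC \not\vdash \phi$, the singleton $\{\neg\phi\}$ is consistent, so by Lindenbaum's Lemma it extends to a maximal consistent set $\Gamma \in \Wc$ with $\neg\phi \in \Gamma$. By the Truth Lemma (\cref{lem:truth}), $\Mc,\Gamma \models \neg\phi$, that is, $\phi$ fails at $\Gamma$ in the canonical pseudo-model.

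The obstacle is that $\Mc$ is only a \emph{pseudo-model}: its relations $\simc_B$ do not in general arise as intersections $\bigcap_{a\in B}\simc_{\{a\}}$, so $\Mc$ is not itself a partial epistemic model. This is exactly what the unravelling of \cref{sec:unravel} repairs. First I would pass to $U(\Mc)$, which is a genuine partial epistemic model, and consider the trivial history $h_0 = (\Gamma)$, for which $\last(h_0) = \Gamma$. By \cref{lem:unravelling-preserves-truth}, $U(\Mc),h_0 \models \neg\phi$.

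Next I would address properness. The unravelling introduces redundant histories, so $U(\Mc)$ need not be proper. However, the proposition immediately preceding \cref{prop:Mc-proper} shows that equivalent histories $h \equiv h'$ carry the same labelling $\Lu(h)=\Lu(h')$; this is precisely the hypothesis required by the quotient construction of \cref{sec:proprifying}, and it is the step where Axiom \textbf{P} does its work. Applying that construction yields the proper model $\proper{U(\Mc)}$, and by \cref{lem:proper-preserves-truth} truth values are preserved, so $\proper{U(\Mc)},[h_0] \models \neg\phi$. Finally, \cref{prop:Mc-proper} guarantees that $\proper{U(\Mc)}$ has no empty world.

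Putting the pieces together, $\proper{U(\Mc)}$ is a proper partial epistemic model with no empty world in which $\phi$ is falsified at the world $[h_0]$. By contraposition, every formula valid on this class of models is provable in $\SC$, which is the desired completeness statement. I expect the only genuine subtlety to lie in the bookkeeping of which side conditions must be discharged before the two-stage detour ($M \mapsto U(M) \mapsto \proper{U(M)}$) applies, namely that the labelling is constant on $\equiv$-classes and that non-emptiness survives both unravelling and quotienting; since all of these have been arranged in the results of \cref{sec:unravel,sec:proprifying} and in \cref{prop:Mc-proper}, the theorem is essentially their assembly.
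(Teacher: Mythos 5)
Your proposal is correct and follows essentially the same route as the paper's own proof: Lindenbaum plus the Truth Lemma to falsify $\phi$ in the canonical pseudo-model, then the two-stage passage through $U(\Mc)$ and $\proper{U(\Mc)}$ using \cref{lem:unravelling-preserves-truth,lem:proper-preserves-truth}, with \cref{prop:Mc-proper} supplying non-emptiness. Your explicit bookkeeping of the side conditions (the labelling being constant on $\equiv$-classes via Axiom~\textbf{P}) is exactly what the paper discharges in the propositions preceding the theorem.
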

\begin{proof}
We prove the converse of completeness: if a formula $\phi \in \mathcal{L}_D$ is not provable, then it is not valid in all models.
So assume that $\not \vdash_{\SC} \phi$, i.e.\ $\{\neg \phi\}$ is a consistent set of formulas.
By Lindenbaum's Lemma, there is a maximal consistent set $\Gamma$ such that $\neg \phi \in \Gamma$.
By the Truth Lemma (\cref{lem:truth}), $\Mc,\Gamma \models \neg \phi$, and by \cref{lem:unravelling-preserves-truth,lem:proper-preserves-truth}, $\proper{U(\Mc)}, [(\Gamma)] \models \neg \phi$.
Since $\proper{U(\Mc)}$ is a proper partial epistemic model with no empty world, this concludes the proof.
\end{proof}

While \cref{thm:completeness-SC-kripke} might seem somewhat arbitrary, our real goal was to prove completeness with respect to the class of generalized simplicial models:

\begin{corollary}
\label{thm:completeness-SC-simplicial}
The system $\SC$ is complete with respect to the class of simplicial models.
\end{corollary}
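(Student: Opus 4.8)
The plan is to deduce this completeness result directly from its Kripke-semantic counterpart, \cref{thm:completeness-SC-kripke}, by transporting a countermodel along the translation $\sigma$ of \cref{def:adjG}. As is standard, I would argue the contrapositive: assuming $\not\vdash_{\SC} \phi$, I must produce a pointed generalized simplicial model in which $\phi$ fails, which shows that $\phi$ is not valid over all simplicial models.

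Concretely, I would proceed in three steps. First, invoke \cref{thm:completeness-SC-kripke}: since $\not\vdash_{\SC} \phi$, there exist a proper partial epistemic model $M = \la W, \sim, L \ra$ with no empty world and a world $w \in W$ such that $M, w \not\models \phi$. Second, since $M$ is proper and has no empty world, the model $\sigma(M)$ is defined, and by \cref{prop:sigma-well-defined} it is a genuine generalized simplicial model. Third, apply the backward direction of \cref{thm:truth-equiv}, whose hypotheses --- properness and absence of empty worlds --- are met by $M$ by construction: it gives $M, w \models \psi$ iff $\sigma(M), X_w \models \psi$ for every $\psi \in \mathcal{L}_D$. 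Taking $\psi = \phi$ yields $\sigma(M), X_w \not\models \phi$, so $(\sigma(M), X_w)$ is the required simplicial countermodel.

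The deduction itself is short, so the only real obstacle is bookkeeping of hypotheses: both $\sigma$ and the backward half of \cref{thm:truth-equiv} are available \emph{only} for partial epistemic models that are proper and have no empty world, so it is essential that \cref{thm:completeness-SC-kripke} delivers a countermodel of exactly that shape --- which it does, since its proof produces $\proper{U(\Mc)}$, shown proper and with no empty world in \cref{prop:Mc-proper} and the construction of \cref{sec:proprifying}. Together with soundness (\cref{prop:soundness}), this establishes that $\SC$ is sound and complete for generalized simplicial models. The analogous results for $\SCmin$ and $\SCmax$ follow from the same template, additionally using that $\sigma$ preserves minimality and maximality (\cref{prop:sigma-well-defined}), so that the transported countermodel lands in the appropriate subclass.
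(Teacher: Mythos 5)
Your proposal is correct and follows essentially the same route as the paper: the paper's proof is just the contrapositive phrasing of yours, passing validity over simplicial models to validity over proper partial epistemic models with no empty world via \cref{thm:truth-equiv} (through $\sigma$) and then invoking \cref{thm:completeness-SC-kripke}. Your extra bookkeeping of the hypotheses needed for $\sigma$ and the backward half of \cref{thm:truth-equiv} is accurate and matches what the paper's construction guarantees.
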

\begin{proof}
Assume a formula $\phi \in \mathcal{L}_D$ is valid in all simplicial models.
By \cref{thm:truth-equiv}, $\phi$ is also valid in all proper partial epistemic models with no empty worlds.
So by \cref{thm:completeness-SC-kripke}, $\phi$ is provable in the system $\SC$.
\end{proof}

\paragraph*{Completeness for $\SCmin$ and $\SCmax$.} We now prove completeness of $\SCmin$ and $\SCmax$ with respect to the class of minimal (resp.\ maximal) simplicial models.
The proof is almost the same as the one for $\SC$: we write $\Mcmin$ and $\Mcmax$ for the canonical pseudo-models whose worlds are sets of formulas that are maximal and consistent with respect to the logic $\SCmin$ (resp.\ $\SCmax$).
All the machinery of \cref{sec:Mc,sec:unravel,sec:proprifying} works the same.
The only extra properties that we need to show are the following:

\begin{proposition}
The partial epistemic model $\proper{U(\Mcmin)}$ is minimal, and the partial epistemic model $\proper{U(\Mcmax)}$ is maximal.
\end{proposition}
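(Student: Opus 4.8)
The plan is to prove the two statements independently, in each case reasoning inside the canonical pseudo-model $\Mcmin$ (resp.\ $\Mcmax$) and then transporting the conclusion to $\proper{U(\cdot)}$ using the two lemmas already established: \cref{lem:alive-history}, which identifies the alive agents of a history $h$ with the formulas $\aliveprop{a}$ belonging to $\last(h)$, and \cref{lem:unravelling-preserves-simB}, which links $\simu$ in the unravelling with $\simc_B$ in the pseudo-model. Throughout I use that for a world $[h]$ of $\proper{U(\cdot)}$ one has $\live{[h]} = \live{h}$, and that $[h] \simproper_a [h']$ holds iff $h \simu_a h'$.

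For minimality I would argue by contradiction. Suppose $\proper{U(\Mcmin)}$ has a sub-world, i.e.\ there are worlds $[h],[h']$ with $\live{[h]} \subsetneq \live{[h']}$ and $[h] \simproper_a [h']$ for all $a \in \live{[h]}$. Write $B = \live{h}$, $\Gamma = \last(h)$, $\Delta = \last(h')$. By \cref{lem:alive-history} (together with Axiom \textbf{Union}), $\Gamma$ contains $\aliveprop{B} \land \deadprop{\complement{B}}$, and since $B \subsetneq A$, Axiom \textbf{Min} forces $D_B\,\deadprop{\complement{B}} \in \Gamma$. From $[h] \simproper_a [h']$ for all $a \in B$ and \cref{lem:unravelling-preserves-simB} we obtain $\Gamma \simc_B \Delta$, whence $\deadprop{\complement{B}} \in \Delta$ by the definition of $\simc_B$. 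Applying \cref{lem:alive-history} to $\Delta$ then gives $\live{h'} \subseteq B$, contradicting $\live{[h]} \subsetneq \live{[h']}$. Hence $\proper{U(\Mcmin)}$ is minimal.

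For maximality I would build the missing sub-world explicitly. Given a world $[h']$ and a non-empty $B \subsetneq \live{[h']}$, set $\Delta = \last(h')$; then $\aliveprop{B} \in \Delta$ and $B \subsetneq A$, so Axiom \textbf{Max} yields $\neg D_B \neg \deadprop{\complement{B}} \in \Delta$. As in the Truth Lemma (\cref{lem:truth}), this shows that $\Sigma^- = \{\deadprop{\complement{B}}\} \cup \{\psi \mid D_B\,\psi \in \Delta\}$ is consistent, so it extends to a maximal consistent set $\Sigma$ with $\Delta \simc_B \Sigma$. Taking the history $h = (h', B, \Sigma)$, we have $h' \rightarrow_a h$, hence $[h] \simproper_a [h']$, for every $a \in B$. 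It then remains to verify $\live{[h]} = B$: the inclusion $\live{h} \subseteq B$ is immediate from $\deadprop{\complement{B}} \in \Sigma$ via \cref{lem:alive-history}, while the reverse inclusion uses the $\KBfour$-valid formula $\aliveprop{B} \Rightarrow D_B\,\aliveprop{B}$ (alive agents know they are alive): from $\aliveprop{B} \in \Delta$ we get $D_B\,\aliveprop{B} \in \Delta$, hence $\aliveprop{B} \in \Sigma$ by $\Delta \simc_B \Sigma$, so $B \subseteq \live{h}$.

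The consistency of $\Sigma^-$ and the bookkeeping with $\simc_B$ are routine, mirroring the canonical-model constructions already carried out. The step I expect to be the main obstacle is precisely the verification that the witness built in the maximality case has alive set \emph{exactly} $B$: the inclusion $\live{h} \subseteq B$ is automatic, but the reverse inclusion is not, and it genuinely depends on the validity of $\aliveprop{B} \Rightarrow D_B\,\aliveprop{B}$; without this fact the constructed witness could a priori lose some of the agents of $B$, and the maximality condition would fail.
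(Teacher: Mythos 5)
Your proof is correct and follows essentially the same route as the paper's: for minimality, a contradiction via Axiom \textbf{Min}, \cref{lem:unravelling-preserves-simB} and \cref{lem:alive-history}; for maximality, the explicit witness $h = (h', B, \Gamma)$ built from Axiom \textbf{Max} and a Lindenbaum extension of $\{\deadprop{\complement{B}}\} \cup \{\psi \mid D_B\,\psi \in \Delta\}$. The one divergence is precisely the step you flag as the ``main obstacle'': the inclusion $B \subseteq \live{h}$ does \emph{not} in fact require the theorem $\aliveprop{B} \Rightarrow D_B\,\aliveprop{B}$ --- it falls out of the unravelling itself, since $h' \rightarrow_a h$ for every $a \in B$ yields $h \simu_a h$ by symmetry and transitivity of $\simu_a$, hence $a \in \live{h}$; your detour through $\simc_B$ and \cref{lem:alive-history} is nonetheless valid, just heavier than necessary.
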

\begin{proof}
To prove that $\proper{U(\Mcmin)}$ is minimal, it is sufficient to show that $U(\Mcmin)$ is minimal.
Let $h,h'$ be histories of $\Mcmin$ such that $\live{h} \subsetneq \live{h'}$, and let us write $B = \live{h}$, $\Gamma = \last(h)$ and $\Delta = \last(h')$.
Assume for contradiction that for all $a \in B$, $h \simu_a h'$. By \cref{lem:unravelling-preserves-simB}, this entails ${\Gamma \simc_B \Delta}$.
Since $\Gamma$ is maximal and consistent, and using \cref{lem:alive-history}, the formula $\aliveprop{B} \land \deadprop{\complement{B}}$ belongs to $\Gamma$. Using Axiom~\textbf{Min}, $\Gamma$ must also contain the formula $D_B\,\deadprop{\complement{B}}$, and since $\Gamma \simc_B \Delta$, we obtain that $\deadprop{\complement{B}} \in \Delta$.
But this is a contradiction: since we assumed that $\live{h} \subsetneq \live{h'}$, there exists an agent $a \not \in B$ such that $a \in \live{h'}$, i.e.\ $\aliveprop{a} \in \Delta$ by \cref{lem:alive-history}.

For the second part of the statement, again it suffices to prove that $U(\Mcmax)$ is maximal.
Let $h'$ be a history of $\Mcmax$, with $\Delta = \last(h')$, and let $B \subsetneq \live{h'}$.
We want to exhibit a sub-world~$h$ of~$h'$ whose set of alive agents is~$B$.
For every $a \in B$, we have $\aliveprop{a} \in \Delta$ by \cref{lem:alive-history}, so using Axiom~\textbf{Max} and the fact that $\Delta$ is maximal and consistent, we get $\neg D_B \neg \deadprop{\complement{B}} \in \Delta$.
Then the set $\Gamma^- = \{ \deadprop{\complement{B}} \} \cup \{ \psi \mid D_B\,\psi \in \Delta \}$ is consistent, using the same reasoning as in the proof of \cref{lem:truth}.
By Lindenbaum's Lemma, there is a maximal consistent set $\Gamma \supseteq \Gamma^-$.
Moreover, $\Gamma \simc_B \Delta$ by construction (and symmetry of~$\simc$). 
Let $h = (h', B, \Gamma)$. Then we have $h \simu_a h'$ for every $a \in B$, so in particular $B \subseteq \live{h}$. 
The converse inclusion stems from the fact that $\deadprop{\complement{B}} \in \Gamma$ and \cref{lem:alive-history}.
Hence $h$ is a sub-world of $h'$ such that $\live{h} = B$.
\end{proof}

With the above proposition, and using the same reasoning as before, we get a proof of completeness of $\SCmin$ and $\SCmax$ with respect to the classes of minimal/maximal proper partial epistemic models with no empty world.
More interestingly, we can lift this to simplicial models, once again using \cref{thm:truth-equiv}, and the fact that the notions of minimal/maximal models are preserved by the equivalence (see \cref{prop:sigma-well-defined}).
Finally:

\begin{theorem}
\label{thm:completeness-SC-min-max}
The proof system $\SCmin$ (resp.\ $\SCmax$) is complete with respect to the class of minimal (resp.\ maximal) simplicial models.
\end{theorem}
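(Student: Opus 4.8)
The plan is to replay, essentially verbatim, the completeness argument already established for $\SC$ (\cref{thm:completeness-SC-kripke} and \cref{thm:completeness-SC-simplicial}), but now with respect to the stronger proof systems $\SCmin$ and $\SCmax$. The only genuinely new semantic content is the minimality (resp.\ maximality) of the canonical model produced at the end of the construction, and this is exactly what the preceding proposition supplies. So the bulk of the work is organisational: checking that each step of \cref{sec:Mc,sec:unravel,sec:proprifying} is insensitive to the presence of the extra axiom.

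First I would fix attention on $\SCmin$ (the case of $\SCmax$ being symmetric). I define the canonical pseudo-model $\Mcmin$ exactly as in \cref{def:Mc}, except that ``maximal consistent'' now means maximal consistent with respect to $\SCmin$. Since $\SCmin$ contains all of $\SC$, every argument used to verify that $\Mc$ is a pseudo-model and that the Truth Lemma (\cref{lem:truth}) holds goes through unchanged, as these proofs only use Axioms \textbf{K}, \textbf{B}, \textbf{4}, \textbf{Mono}, \textbf{Union} and the structural rules, all still available. Likewise the unravelling and properification constructions are purely structural and make no reference to the particular axiom set, so \cref{lem:unravelling-preserves-truth,lem:proper-preserves-truth,lem:alive-history} and \cref{prop:Mc-proper} transfer directly. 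At this point one has, exactly as in \cref{thm:completeness-SC-kripke}, that $\proper{U(\Mcmin)}$ is a proper partial epistemic model with no empty world in which any $\SCmin$-unprovable formula fails.

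The new ingredient is to observe that $\proper{U(\Mcmin)}$ is in addition minimal, which is precisely the content of the preceding proposition (and dually, $\proper{U(\Mcmax)}$ is maximal). Combining this with the transfer to simplicial models is then routine: applying $\sigma$ to $\proper{U(\Mcmin)}$ and invoking \cref{prop:sigma-well-defined}, which guarantees that $\sigma$ sends minimal (resp.\ maximal) partial epistemic models to minimal (resp.\ maximal) simplicial models, together with the semantics-preservation statement \cref{thm:truth-equiv}, yields a \emph{minimal} (resp.\ \emph{maximal}) simplicial model refuting the unprovable formula. Hence every formula valid in all minimal (resp.\ maximal) simplicial models is provable in $\SCmin$ (resp.\ $\SCmax$), which is the desired completeness statement.

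The step I expect to require the most care, though it is really the substance of the preceding proposition rather than of this theorem, is verifying that minimality survives the two successive constructions of unravelling and properification. Concretely, one must check that Axiom \textbf{Min}, which only constrains each maximal consistent set individually, forces the \emph{global} structural condition ``no sub-worlds'' to persist after unravelling: this is where \cref{lem:unravelling-preserves-simB} is essential, since it lets one pull a purely local containment of alive-agent sets $\live{h} \subsetneq \live{h'}$ back to a relation $\Gamma \simc_B \Delta$ at the level of the canonical pseudo-model, where Axiom \textbf{Min} can finally be applied to derive the contradiction. Once this is in hand, everything else is bookkeeping.
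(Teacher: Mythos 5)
Your proposal matches the paper's own argument essentially verbatim: the paper likewise observes that the machinery of the canonical pseudo-model, unravelling and properification transfers unchanged to $\SCmin$ and $\SCmax$, relies on the preceding proposition for minimality/maximality of $\proper{U(\Mcmin)}$ and $\proper{U(\Mcmax)}$, and then lifts the result to simplicial models via \cref{thm:truth-equiv} and \cref{prop:sigma-well-defined}. Your closing remark correctly locates the only nontrivial step (using \cref{lem:unravelling-preserves-simB} to pull the sub-world condition back to the canonical pseudo-model where Axiom \textbf{Min}/\textbf{Max} applies), which is exactly where the paper places it too.
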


\section{Dynamics: communication pattern models}

%\todo[inline]{Should be Section 6 - no DEL}
\label{sec:dynamics}

In this section, we describe how a generalized simplicial model evolves when the agents share information by communicating.
We use the framework of \emph{communication patterns}~\cite{BaltagS20,Castaneda21pattern},
which we slightly modify in two ways: (i) we define it entirely on (generalized) simplicial models, rather than Kripke models, and (ii) we allow the processes to crash during a communication event.
The first modification was also performed in~\cite{Castaneda22pattern} (Definition 24), in a setting without crashes.
Their proposed definition is very similar to our \cref{def:pattern-update}; in fact, it is a special case of it.
The second point, adding the possibility of crashes, has not been done previously with communication patterns to our knowledge.
Conceptually this is quite straightforward, but some care is required in order to avoid some technical issues (see \cref{rem:ell-well-defined} and \cref{ex:broadcast-comm-pattern}).
Similar issues arise when we add the possibility of crashes to the action model formalism, as noticed in~\cite{nakai2023partial}.

\paragraph*{Local simplicial models}

In this section, in contrast to the rest of the paper, we will adopt the distributed computing practice of labelling the vertices (rather than the worlds) of a simplicial model with atomic propositions.
Thus, as in previous papers (e.g.~\cite{gandalf-journal, Ditmarsch2020KnowledgeAS, Ditmarsch21}), we assume that the set $\AP$ of atomic proposition is partitioned into sets $\AP = \bigcup_{a \in A} \AP_a$, so that each atomic proposition ``belongs'' to a particular agent.
Then a \emph{local simplicial model} $\C = \la V, S, \chi, W, \ell \ra$ is given by a chromatic simplicial complex $\la V, S, \chi \ra$ and a distinguished set of worlds~$W$, as in \cref{def:generalized-simplicial-model}, except that the labelling $\ell$ assigns to each vertex $v \in V$ of color~$\chi(v) = a$, a set of atomic propositions concerning agent~$a$, $\ell(v) \subseteq \AP_a$.

Note that every local simplicial model gives rise to a (generalized) simplicial model in the sense of \cref{def:generalized-simplicial-model}: the labelling of a given world~$w \in W$ is then obtained by taking the union of the labellings of its vertices: $\ell(w) = \bigcup_{v \in w} \ell(v)$.
Local simplicial models are strictly less general than the  simplicial models of \cref{def:generalized-simplicial-model}.
Indeed, local simplicial models obey the so-called \emph{Locality axiom} (see~\cite{gandalf-journal}), which says that every agent~$a$ knows the status (true or false) of all the atomic propositions in $\AP_a$.
The locality assumption will be crucial when we define the product update model (see \cref{rem:ell-well-defined}).

\paragraph*{Communication patterns}

%This allows to show how the simplicial model of \cref{ex:message-passing} can be computed. A full treatment of dynamic epistemic logic in the setting of generalized simplicial models is deferred to future work.

Communication patterns %are an alternative to the \emph{action models} that are commonly used in DEL. They
rely on communication graphs, which indicate how information flows between the agents: an arrow from~$a$ to~$b$ in a communication graph indicates that agent~$a$ successfully sends a message to agent~$b$, containing all the information currently known to~$a$.
In distributed computing, this is known as a \emph{full-information protocol}.
In~\cite{Castaneda22pattern}, communication graphs are always assumed to be reflexive, so that each agent remembers the information that they had at the previous round. Here, we relax this assumption, and inspired by \cref{sec:partial-epistemic-models} we will interpret lack of reflexivity as representing the death (a.k.a.\ crash, in distributed computing) of an agent.

\begin{definition}[Communication pattern]
A \emph{communication graph} $G \subseteq A \times A$ is a binary relation on the set of agents. When $G$ is clear from context, we write $a \to b$ instead of $(a,b) \in G$. The \emph{in-neighbourhood} of~$a$ in~$G$ is denoted $\inN{G}{a} = \{ b \in A \mid b \to a \}$, and the \emph{out-neighbourhood} is $\outN{G}{a} = \{ b \in A \mid a \to b \}$.
We say that agent~$a$ is \emph{alive} in~$G$ when $a \to a$, and that~$a$ is \emph{dead} otherwise.
A~\emph{communication pattern} $P$ is a set of communication graphs, i.e.\ $P \subseteq \mathscr{P}(A \times A)$.
\end{definition}

Communication patterns describe a round-based communicative event where every agent tries to broadcast its current local state to all other agents; but some of those messages might be lost.
At each round, a communication graph~$G \in P$ is chosen arbitrarily, and describes which messages failed to arrive during this round: an edge $a \to b$ in~$G$ indicates that~$a$ successfully delivered its message to~$b$. 
Moreover, some agents might crash during the round, possibly after sending messages to other agents. A crash is indicated by the lack of a reflexive edge $a \to a$ in~$G$.
Communication patterns are closely related to \emph{dynamic networks}~\cite{Kuhn11dynamic}, a very general distributed computing model which subsumes not only message-passing models but also round-based shared memory models such as immediate snapshot.

Given a (local) simplicial model $\C = \la V, S, \chi, W, \ell \ra$ and a communication pattern~$P$, we denote by $\C \odot P$ the \emph{updated simplicial model} which represents the knowledge of the agents after some communicative event $G \in P$ occurred.
Informally, its worlds should be pairs $(w,G)$ where $w \in W$ is a world of~$\C$ and $G \in P$ is a communication graph allowed by~$P$. Moreover, we require that~$G$ is \emph{compatible} with~$w$, in the sense that agents that are dead in~$w$ cannot send messages in~$G$: $a \not\in \chi(w)$ implies $\outN{G}{a} = \emptyset$.
Two worlds $(w,G)$ $(w', G')$ should be indistinguishable by some agent~$a$ when in both communication graphs~$G$ and~$G'$, $a$ has received messages from the same set of agents, and the worlds $w$ and $w'$ are indistinguishable for all of these agents, i.e.\ $(w,G) \sim_a (w', G')$ iff $\inN{G}{a} = \inN{G'}{a}$ and $\inN{G}{a} \subseteq \chi(w \cap w')$ and $a \in \inN{G}{a}$. Note that the last condition ensures that~$a$ is alive in $w, w', G$ and~$G'$.
One could check that this yields a partial epistemic model\footnote{This partial epistemic model might not be proper in general. One can make it proper as in \cref{sec:proprifying}, but this requires an extra assumption. \cref{rem:ell-well-defined} discusses the same issue in the simplicial setting.}; but in the definition below, we directly construct the corresponding simplicial model.

Let us first introduce some notations.
Given a vertex~$v$ of a simplicial model and a set~$B$ of agents, we write $\star{B}{v}$ for the set of simplexes coloured by~$B$ containing~$v$.
\[ \star{B}{v} = \{ X \in S \mid \chi(X) = B \text{ and } v \in X \}\]
Given a world~$w$ of a simplicial model and a set~$B \subseteq \chi(w)$, we write $\restrict{w}{B}$ for the sub-simplex of~$w$ containing exactly the vertices whose colour is in~$B$. Note that $\restrict{w}{B}$ need not be a world in general.
We will use the simplex $\restrict{w}{B}$ to represent the new local state of an agent~$a$ after it receives (full-information) messages from the set~$B$ of agents.
Finally, to increase readability, we annotate vertices with their colour, e.g.\ we write $v_a \in V$ as shorthand for $v \in V$ such that $\chi(v) = a$. For instance, $\restrict{w}{B} = \{ v_a \in w \mid a \in B \}$.

\begin{definition}
\label{def:pattern-update}
The \emph{updated simplicial model} is given by $\C \odot P = \la V', S', \chi', W', \ell' \ra$, where:
\begin{itemize}
\item $V' = \{ (v_a, X) \mid v_a \in V \text{ and } X \in \star{B}{v_a} \text{ where } B = \inN{G}{a} \text{ for some } G \in P \}$.
\item $S' = \dclosure{W'}$, the downward-closure of $W'$.
\item $\chi'(v_a,X) = a$.
\item $W' = \{ w \odot G \mid w \in W \text{, } G \in P \text{ and } G \text{ is compatible with } w \}$,\\
where $w \odot G = \{ (v_a, \restrict{w}{\inN{G}{a}}) \mid v_a \in w \text{ and $a$ is alive in $G$} \}$.% a \in \inN{G}{a} \}$.
\item $\ell'(v_a, X) = \ell(v_a)$.
\end{itemize}
\end{definition}

\begin{remark} \label{rem:ell-well-defined}
Note that it is possible to have $w \odot G = w' \odot G'$ for two distinct worlds~$w,w'$ and communication graphs $G, G'$.
This is due to the possibility of crashing agents, as illustrated \cref{ex:broadcast-comm-pattern} below.
This is where the requirement that the initial simplicial model~$\C$ must be local becomes crucial.
Indeed, in a non-local model, we put atomic propositions on the worlds, not vertices, so the last item of the definition should become $\ell'(w \odot G) = \ell(w)$.
However, this is not well-defined when $w \odot G = w' \odot G'$ and $\ell(w) \neq \ell(w')$.
Intuitively, two worlds $w$ and $w'$ of the original model have been ``merged'' and we do not know which one to take the labelling from.
Locality ensures that whenever two worlds are merged, they already had the same labelling in the initial model.
\end{remark}

\begin{example}[Synchronous broadcast with crash failures] \label{ex:broadcast-comm-pattern}
We now define the communication pattern that produces the simplicial model of \cref{ex:message-passing}.
Consider the set of agents $A = \{a, b, c\}$ and the following communication graphs on~$A$:
\begin{mathpar}
	\begin{tikzpicture}[auto, scale=1.2, font={\small},
	-{stealth[length=1mm,width=1mm]}, shorten <=1pt,
	{every loop/.style}={looseness=2, min distance=3mm}]
	\node (a) at (0,1) {$a$};
	\node (b) at (-0.5,0) {$b$};
	\node (c) at (0.5,0) {$c$};
	\path (a) edge[in=110,out=70,loop] (a)
		  (b) edge[in=160,out=200,loop] (b)
		  (c) edge[in=20,out=-20,loop] (c)
		  (a.260) edge (b.60)
		  (b.80) edge (a.240)
		  (b.10) edge (c.170)
		  (c.190) edge (b.-10)
		  (a.280) edge (c.120)
		  (c.100) edge (a.300);
	\end{tikzpicture}
\and
	\begin{tikzpicture}[auto, scale=1.2, font={\small},
	-{stealth[length=1mm,width=1mm]}, shorten <=1pt,
	{every loop/.style}={looseness=2, min distance=3mm}]
	\node (a) at (0,1) {$a$};
	\node (b) at (-0.5,0) {$b$};
	\node (c) at (0.5,0) {$c$};
	\path (b) edge[in=160,out=200,loop] (b)
		  (c) edge[in=20,out=-20,loop] (c)
		  (b.80) edge (a.240)
		  (b.10) edge (c.170)
		  (c.190) edge (b.-10)
		  (c.100) edge (a.300);
	\end{tikzpicture}
\and
	\begin{tikzpicture}[auto, scale=1.2, font={\small},
	-{stealth[length=1mm,width=1mm]}, shorten <=1pt,
	{every loop/.style}={looseness=2, min distance=3mm}]
	\node (a) at (0,1) {$a$};
	\node (b) at (-0.5,0) {$b$};
	\node (c) at (0.5,0) {$c$};
	\path (b) edge[in=160,out=200,loop] (b)
		  (c) edge[in=20,out=-20,loop] (c)
		  (a.260) edge (b.60)
		  (b.80) edge (a.240)
		  (b.10) edge (c.170)
		  (c.190) edge (b.-10)
		  (c.100) edge (a.300);
	\end{tikzpicture}
\and
	\begin{tikzpicture}[auto, scale=1.2, font={\small},
	-{stealth[length=1mm,width=1mm]}, shorten <=1pt,
	{every loop/.style}={looseness=2, min distance=3mm}]
	\node (a) at (0,1) {$a$};
	\node (b) at (-0.5,0) {$b$};
	\node (c) at (0.5,0) {$c$};
	\path (b) edge[in=160,out=200,loop] (b)
		  (c) edge[in=20,out=-20,loop] (c)
		  (a.260) edge (b.60)
		  (b.80) edge (a.240)
		  (b.10) edge (c.170)
		  (c.190) edge (b.-10)
		  (a.280) edge (c.120)
		  (c.100) edge (a.300);
	\end{tikzpicture}
%\\
%	\begin{tikzpicture}[auto, scale=1.2, font={\small},
%	-{stealth[length=1mm,width=1mm]}, shorten <=1pt,
%	{every loop/.style}={looseness=2, min distance=3mm}]
%	\node (a) at (0,1) {$a$};
%	\node (b) at (-0.5,0) {$b$};
%	\node (c) at (0.5,0) {$c$};
%	%
%	\path (a) edge[in=110,out=70,loop] (a)
%		  (a.260) edge (b.60)
%		  (a.280) edge (c.120);
%	\end{tikzpicture}
%\and
%	\begin{tikzpicture}[auto, scale=1.2, font={\small},
%	-{stealth[length=1mm,width=1mm]}, shorten <=1pt,
%	{every loop/.style}={looseness=2, min distance=3mm}]
%	\node (a) at (0,1) {$a$};
%	\node (b) at (-0.5,0) {$b$};
%	\node (c) at (0.5,0) {$c$};
%	%
%	\path (a) edge[in=110,out=70,loop] (a)
%		  (a.260) edge (b.60)
%		  (b.80) edge (a.240)
%		  (a.280) edge (c.120);
%	\end{tikzpicture}
%\and
%	\begin{tikzpicture}[auto, scale=1.2, font={\small},
%	-{stealth[length=1mm,width=1mm]}, shorten <=1pt,
%	{every loop/.style}={looseness=2, min distance=3mm}]
%	\node (a) at (0,1) {$a$};
%	\node (b) at (-0.5,0) {$b$};
%	\node (c) at (0.5,0) {$c$};
%	%
%	\path (a) edge[in=110,out=70,loop] (a)
%		  (a.260) edge (b.60)
%		  (b.80) edge (a.240)
%		  (a.280) edge (c.120)
%		  (c.100) edge (a.300);
%	\end{tikzpicture}
\end{mathpar}
We name these graphs $G_1, \ldots, G_4$, from left to right.
Note that we omitted some graphs that can be obtained from those by permuting the names of the agents (i.e., graphs where agent~$b$ or~$c$ crashed instead of~$a$).
%; and graphs where two dead processes exchange messages before crashing (i.e., graphs obtained by adding arrows between $b$ and $c$ in the bottom three graphs).
Intuitively,
\begin{itemize}
\item $G_1$ is an execution where no crash occurred, all messages were successfully delivered;
\item $G_2$, $G_3$, $G_4$ are executions where only process~$a$ crashed, after sending $0$, $1$ or $2$ messages.
%\item $G_5$, $G_6$, $G_7$ are executions where both $b$ and $c$ crashed, and $a$ received $0$, $1$ or $2$ messages.
\end{itemize}
Among those communication graphs, only $G_1$, $G_2$ and $G_3$ have ``detectable crashes'', in the sense that whenever a process is dead, at least one of the remaining agents knows about it (because no message was received from the dead agent).
So let us define two communication patterns: $P_{\text{detectable}}$ contains $G_1,G_2,G_3$ as well as graphs obtained from them by permuting the names of the agents (totalling $10$ graphs);
and $P_{\text{undetectable}}$ contains $G_1,G_2,G_3,G_4$ as well as permutations of them (totalling $13$ graphs).

Let $\C$ be the simplicial model which consists of only one triangle world~$w$ with agents~$a,b,c$.
One can check that computing $\C \odot P_{\text{detectable}}$ yields the (minimal) simplicial model of \cref{ex:message-passing}.
Indeed, world $w \odot G_1$ corresponds to the facet $w_0$; world $w \odot G_2$ corresponds to $w_5$; and world $w \odot G_3$ corresponds to~$w_4$.
Similarly, one can check that $\C \odot P_{\text{undetectable}}$ yields three extra worlds, corresponding to the three edges of world $w_0$ in \cref{ex:message-passing}, where one agent has crashed but none of the others know about it.
%the maximal model of \cref{ex:message-passing}, where all the sub-simplexes are also worlds of the form $w \odot G_i$ for $i = 4, 6$ or $7$.

A more interesting example is to consider what happens when the initial model $\C$ has more than one facet.
In the picture below, we start from the model $\C'$ which comprises two triangle worlds~$w$ and~$w'$ that are glued along their $bc$-coloured edge.
Computing $\C' \odot P_{\text{detectable}}$ gives rise to the simplicial complex depicted on the right, with $19$ worlds named $w_0, \ldots, w_9$ and $w'_0, \ldots, w'_9$ (notice that $w'_5$ is missing).
Similarly, starting with the binary input sphere would yield the same picture as in \cref{fig-synchEvol}.
\begin{center}
	\begin{tikzpicture}[auto,scale=1.2,baseline={(0,0)},
	cloudgrey/.style={draw=black,thick,circle,fill=cyan,inner sep=1pt,minimum size=11pt},cloud/.style={draw=black,thick,circle,fill=magenta!60,inner sep=1pt,minimum size=11pt}, cloudblack/.style={draw=black,thick,circle,fill=green,inner sep=1pt,minimum size=11pt}]
	\draw[thick, draw=black, fill=blue!60, fill opacity=0.15]
	  (4,0) -- (5,-0.577) -- (5,0.577) -- cycle;
	\draw[thick, draw=black, fill=blue!60, fill opacity=0.15]
	  (6,0) -- (5,-0.577) -- (5,0.577) -- cycle;
	\node[cloudgrey] (b1) at (4,0) {$a$};
	\node[cloudgrey] (b2) at (6,0) {$a$};
	\node[cloudblack] (g1) at (5,-0.577) {$c$};
	\node[cloud] (w1) at (5,0.577) {$b$};
	\node at (4.65,0) {$w$};
	\node at (5.35,0.05) {$w'$};
	\node at (5,-1.85) {$\C'$};
	\end{tikzpicture}
	\hspace{2cm}
	\tikzfig{synchronous-broadcast-1crash-2triangles}
\end{center}
For instance, one can check that the worlds~$w_0$ and~$w'_0$ correspond to $w \odot G_1$ and $w' \odot G_1$, respectively.
Similarly, $w_4$ and~$w'_4$ correspond to $w \odot G_3$ and $w' \odot G_3$ (it is a good exercise to verify that these worlds share the same $c$-coloured vertex).
Most interestingly, the world labelled~$w_5$ corresponds to both $w \odot G_2$ and $w' \odot G_2$ at the same time (cf.\ \cref{rem:ell-well-defined}).
Indeed, when the communication graph~$G_2$ occurs, $a$ has crashed and the two agents~$b$ and~$c$ exchange information. But neither~$b$ nor~$c$ is able to distinguish between the initial worlds~$w$ and~$w'$. So no matter whether we started in~$w$ or~$w'$, the two remaining agents end up with the same local state, i.e., $w \odot G_2 = w' \odot G_2$.
This illustrates the fact that working with simplicial complexes automatically makes the model ``proper''. This is because in simplicial models, worlds are not a first-class entity, they are merely a collection of compatible local states, that is, a simplex.
\end{example}

\begin{example}[Immediate snapshot with initial crash failures]
Similarly, the simplicial model of \cref{ex:immediate-snapshot} can be obtained by computing $\C \odot P_{\text{immediate}}$,
where $\C$ is the simplicial model with a single triangle world for three agents $a, b, c$, and $P_{\text{immediate}}$ contains the following communication graphs and their permutations (totalling $25$ graphs):
\begin{mathpar}
	\begin{tikzpicture}[auto, scale=1.2, font={\small},
	-{stealth[length=1mm,width=1mm]}, shorten <=1pt,
	{every loop/.style}={looseness=2, min distance=3mm}]
	\node (a) at (0,1) {$a$};
	\node (b) at (-0.5,0) {$b$};
	\node (c) at (0.5,0) {$c$};
	\path (a) edge[in=110,out=70,loop] (a)
		  (b) edge[in=160,out=200,loop] (b)
		  (c) edge[in=20,out=-20,loop] (c)
		  (a.260) edge (b.60)
		  (b.80) edge (a.240)
		  (b.10) edge (c.170)
		  (c.190) edge (b.-10)
		  (a.280) edge (c.120)
		  (c.100) edge (a.300);
	\end{tikzpicture}
\and
	\begin{tikzpicture}[auto, scale=1.2, font={\small},
	-{stealth[length=1mm,width=1mm]}, shorten <=1pt,
	{every loop/.style}={looseness=2, min distance=3mm}]
	\node (a) at (0,1) {$a$};
	\node (b) at (-0.5,0) {$b$};
	\node (c) at (0.5,0) {$c$};
	\path (a) edge[in=110,out=70,loop] (a)
		  (b) edge[in=160,out=200,loop] (b)
		  (c) edge[in=20,out=-20,loop] (c)
		  (a.260) edge (b.60)
		  (b.80) edge (a.240)
		  (b.10) edge (c.170)
		  (a.280) edge (c.120);
	\end{tikzpicture}
\and
	\begin{tikzpicture}[auto, scale=1.2, font={\small},
	-{stealth[length=1mm,width=1mm]}, shorten <=1pt,
	{every loop/.style}={looseness=2, min distance=3mm}]
	\node (a) at (0,1) {$a$};
	\node (b) at (-0.5,0) {$b$};
	\node (c) at (0.5,0) {$c$};
	\path (a) edge[in=110,out=70,loop] (a)
		  (b) edge[in=160,out=200,loop] (b)
		  (c) edge[in=20,out=-20,loop] (c)
		  (a.260) edge (b.60)
		  (b.80) edge (a.240)
		  (c.190) edge (b.-10)
		  (c.100) edge (a.300);
	\end{tikzpicture}
\and
	\begin{tikzpicture}[auto, scale=1.2, font={\small},
	-{stealth[length=1mm,width=1mm]}, shorten <=1pt,
	{every loop/.style}={looseness=2, min distance=3mm}]
	\node (a) at (0,1) {$a$};
	\node (b) at (-0.5,0) {$b$};
	\node (c) at (0.5,0) {$c$};
	\path (a) edge[in=110,out=70,loop] (a)
		  (b) edge[in=160,out=200,loop] (b)
		  (c) edge[in=20,out=-20,loop] (c)
		  (a.260) edge (b.60)
		  (b.10) edge (c.170)
		  (a.280) edge (c.120);
	\end{tikzpicture}
\\
	\begin{tikzpicture}[auto, scale=1.2, font={\small},
	-{stealth[length=1mm,width=1mm]}, shorten <=1pt,
	{every loop/.style}={looseness=2, min distance=3mm}]
	\node (a) at (0,1) {$a$};
	\node (b) at (-0.5,0) {$b$};
	\node (c) at (0.5,0) {$c$};
	\path (a) edge[in=110,out=70,loop] (a)
		  (b) edge[in=160,out=200,loop] (b)
		  (a.260) edge (b.60)
		  (b.80) edge (a.240);
	\end{tikzpicture}
\and
	\begin{tikzpicture}[auto, scale=1.2, font={\small},
	-{stealth[length=1mm,width=1mm]}, shorten <=1pt,
	{every loop/.style}={looseness=2, min distance=3mm}]
	\node (a) at (0,1) {$a$};
	\node (b) at (-0.5,0) {$b$};
	\node (c) at (0.5,0) {$c$};
	\path (a) edge[in=110,out=70,loop] (a)
		  (b) edge[in=160,out=200,loop] (b)
		  (a.260) edge (b.60);
	\end{tikzpicture}
\and
	\begin{tikzpicture}[auto, scale=1.2, font={\small},
	-{stealth[length=1mm,width=1mm]}, shorten <=1pt,
	{every loop/.style}={looseness=2, min distance=3mm}]
	\node (a) at (0,1) {$a$};
	\node (b) at (-0.5,0) {$b$};
	\node (c) at (0.5,0) {$c$};
	\path (a) edge[in=110,out=70,loop] (a);
	\end{tikzpicture}
\end{mathpar}
The four types of graphs on top, where all processes are alive but some messages might be lost, correspond to the~$13$ facets of the model in \cref{ex:immediate-snapshot}.
The three bottom graphs are those where some initial crash failure(s) occurred: some agents do not participate in the computation. They correspond to the~$9$ edges and~$3$ vertices on the boundary.
\end{example}

%We consider in what follows the protocol complex $\mathcal{A}$ for synchronous broadcast, with at most one (detectable) failure. 

\section{Application to fault-tolerant distributed computing}
\label{sec:application}

The goal of this section is to showcase how the epistemic logic machinery developed in this paper can be used to study concrete distributed computing problems.
More precisely, we study in details the following distributed computing problem: how to prove that consensus cannot be solved in the synchronous broadcast model with one round and one crash failure.
The impossibility result itself is well known, and has been studied extensively in the distributed computing literature, with a very precise analysis of the number of rounds required to solve consensus with various crash assumptions, see e.g.~\cite{DworkM90crash, Unbeatable2014, CastanedaFPRRT23}.
Our focus here is merely to see how to extend the proof technique of \cite{gandalf-journal}, in a setting where processes can crash.

Concurrently with our paper, the same example has been considered in \cite{nakai2023partial}.
There are some slight differences between the two proofs however.
First, they describe the dynamics using the notion of \emph{action models}, extended to take into account crashing processes; while we relied on communication pattern models in \cref{sec:dynamics}.
Secondly, the obstruction formula used in the impossibility proof is different: we use a common knowledge operator, while the proof of \cite{nakai2023partial} uses three nested knowledge operators. This is sufficient for the specific one-round toy example being considered, but does not generalize well to multi-round protocols.
Lastly, the task to be solved itself is slightly different, since we discuss some other variants of the binary task specification.
%We also make explicit the simplicial models that can be used both for a topological proof of impossibility of this particular task in one round, for three processes, and as a simplicial model, for an epistemic logic proof of the same fact. 

\subsection{Background on task solvability for fault-tolerant distributed systems}

In this section, we will assume the reader is familiar with topological methods to study task solvability in distributed computing.
Namely, the initial state of the processes can be described by an \emph{input complex} $\cI$.
After communicating, the final states of the processes can be described by a \emph{protocol complex} $\cP_\cI$, whose topological structure depends on the communication primitives being used by the processes.
The task to be solved can also be described by a simplicial complex, called the \emph{task complex} $\cT$.
The central result of distributed computing is the Asynchronous Computability Theorem of Herlihy and Shavit:
\begin{theorem}[\cite{HS99}]
A task is solvable by a given protocol if and only if there exists a simplicial map $\delta : \cP_\cI \to \cT$ (satisfying some extra conditions).
\end{theorem}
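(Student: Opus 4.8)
The statement is the celebrated \emph{Asynchronous Computability Theorem}, so the honest plan is to follow the two-directional argument of Herlihy and Shavit~\cite{HS99}, reading the phrase ``satisfying some extra conditions'' as the requirement that $\delta$ be \emph{chromatic} (colour-preserving, so that each process decides a value for itself) and be \emph{carried} by the task specification: if $\Delta$ denotes the input/output relation of the task, then every simplex of $\cP_\cI$ lying over an input simplex $\tau \in \cI$ must be mapped by $\delta$ into $\Delta(\tau)$. With this reading, I would prove the equivalence by unpacking the operational meaning of ``solving a task'' and matching it against its topological translation.

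For the necessity direction, I would start from a protocol that solves the task. Operationally, solving means that each process applies a fixed decision function to its final local state. Final local states are exactly the vertices of $\cP_\cI$, and decision values are vertices of $\cT$ of the matching colour, so these per-process decision functions assemble into a single colour-preserving vertex map $\delta$. The key point is that in any single execution the tuple of decided values is a legal output, i.e.\ a simplex of $\cT$; since the facets of $\cP_\cI$ are precisely the global final states, $\delta$ sends every facet (hence every simplex) to a simplex, and is therefore simplicial. Correctness of the protocol with respect to the task relation is exactly the statement that $\delta$ is carried by $\Delta$, which furnishes the ``extra conditions.''

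For the sufficiency direction I would run the argument in reverse: given a chromatic simplicial map $\delta : \cP_\cI \to \cT$ carried by $\Delta$, I define the protocol's decision rule to be $\delta$ evaluated vertex-by-vertex on each process's final view. Because $\delta$ is chromatic and simplicial, the values decided in any concrete execution (a facet of $\cP_\cI$) always form an output simplex of $\cT$; because $\delta$ is carried by $\Delta$, these outputs are admissible for the inputs the execution started from. Hence the protocol solves the task.

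The main obstacle is not in this two-way translation, which is essentially definitional once the protocol (and hence $\cP_\cI$) is fixed, but in the genuine content hidden behind that phrase. In the full, protocol-independent form of the theorem one quantifies existentially over the number of rounds, and the hard half becomes the following: \emph{from} a simplicial map out of some iterated subdivision of $\cI$ one must \emph{construct} an actual protocol realising it. That step relies on the fact that the immediate-snapshot protocol complex is a chromatic subdivision of the input complex, together with a simplicial-approximation argument matching the given map. This is the deep part, and I would simply invoke~\cite{HS99} for it rather than reprove it, since in the present paper the theorem is used only as cited background.
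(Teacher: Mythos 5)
The paper does not prove this statement: it is quoted verbatim from Herlihy and Shavit~\cite{HS99} as background, with the hypotheses deliberately left vague (``satisfying some extra conditions''), and the only related material in the paper is the reformulation of solvability as the existence of a morphism $\delta$ with $\pi_{\cI} \circ \delta = \pi_{\cI}$, which is taken as a \emph{definition} rather than derived. Your sketch is a faithful rendering of the standard argument: the two directions of the protocol-fixed version are indeed essentially definitional (decision maps on final views assemble into a chromatic simplicial map carried by the task relation, and conversely), your reading of the ``extra conditions'' as chromaticity plus being carried by $\Delta$ agrees with the paper's commuting-diagram condition $\pi_{\cI}\circ\delta=\pi_{\cI}$, and you correctly locate the genuinely hard content --- realising a map on an iterated chromatic subdivision by an actual protocol via simplicial approximation --- and defer it to~\cite{HS99}, exactly as the paper itself does. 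Nothing to object to.
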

Thus, a computational question (solvability of a task) is reduced to a topological question (existence of a simplicial map).
A detailed account of topological methods in distributed computing can be found in~\cite{herlihyetal:2013}.
As we have seen, simplicial complexes can also be viewed as models for epistemic logic.
A full reformulation of task solvability in terms of epistemic logic was developed in~\cite{gandalf-journal}.
We briefly recap below the definitions that we will be using here.

Consider a simplicial model $\cI = \la V_{\cI}, S_{\cI}, \chi_{\cI}, W_{\cI}, \ell_{\cI} \ra$ called the \emph{initial simplicial model}.
Each world of $\cI$, with its labeling $\ell$, represents
a possible initial configuration.
Similarly, we will have $\cO=\la V_{\cO}, S_{\cO}, \chi_{\cO}, W_{\cO}, \ell_{\cO} \ra$ a simplicial model for all possible output values. 
In~\cite{gandalf-journal}, we defined a \emph{task} for $\cI$ using a simplicial action model, since we were interpreting DEL. 
Here we take a more ad-hoc approach and simply encode the relation between inputs and outputs that the task should satisfy.
Hence, $\cT=\la V_{\cT}, S_{\cT}, \chi_{\cT}, W_{\cT}, \ell_{\cT} \ra$ is going to be a sub-complex of $\cI \times \cO$, encoding all the allowed combinations of input vectors and output vectors.
%This construction comes with a simplicial morphism, a projection map:  $\pi_{\cI}: \ \mathcal{\cT} \rightarrow \mathcal{\cI}$.

%Let us first recall here what a morphism of simplicial complex is, and define morphisms of simplicial models: 
%     
%\begin{definition}[\cite{kozlov}]
%Let $X=\la V_{X}, S_{X}\ra$ and $Y=\la V_{Y}, S_{Y}\ra$ be two simplicial complexes. A \emph{simplicial morphism} $f: \ X \rightarrow Y$ is a set-theoretic map $f$ from $V_X$ to $V_Y$. Its action on simplexes (that are subsets of vertices in $V_X$) in $S_X$ is the set extension of $f$. 
%\end{definition}
%
%Now, the morphisms of simplicial models are simply morphisms of simplicial complexes which agree with the extra decorations that we have on simplicial models: 

Since the definition of task solvability relies on the existence of a morphism between simplicial models, we need to define what a morphism is in our setting:

\begin{definition}
Let $A=\la V_{A}, S_{A}, \chi_{A}, W_{A}, \ell_{A} \ra$ and $B=\la V_{B}, S_{B}, \chi_{B}, W_{B}, \ell_{B} \ra$ be two (generalized) simplicial models. A \emph{morphism} $f: \ A \rightarrow B$ of simplicial models consists of a morphism of simplicial complexes from $\la V_A,S_A \ra$ to $\la V_B,S_B\ra$, such that, $f(W_A)\subseteq W_B$, for all $v \in V_A$, $\chi_B(f(v))=\chi_A(v)$, and $\ell_B(f(v))=\ell_A(v)$. 
\end{definition}

The protocol that we use to solve a task will be specified by a communication pattern model $P$, as defined in \cref{sec:dynamics}.
Then, the protocol simplicial model will be defined as the updated model $\cP = \cI \odot P$.
Since both the protocol model $\cP$ and the task complex $\cT$ are defined as products, they come with first projection morphisms $\pi_\cI : \cP \to \cI$ and $\pi_\cI : \cT \to \cI$.
The role of these morphisms is to recall, for a given final state or output value, from which input state it originally came from.
With this data, we can reformulate the solvability of a task as follows:

%Suppose now we have a simplicial model, representing the information that some distributed protocol has reached after communication, from some initial simplicial model $\cI$. 
%In what follows, we will use the convenient communication patterns, giving the protocol simplicial model. 
%This is encoded as a morphism $\rho: \mathcal{A} \rightarrow \cI$, which maps each world of $\mathcal{A}$ that is reachable through some communication history (as well as potential deaths of agents), to the worlds, or global states, in $\cI$ that it has came from. In \cite{gandalf-journal}, this was always a projection map, $\pi_\cI: \ \cI[\mathcal{A}] \rightarrow \cI$, 
%but this can be extend to more general maps and constructs, as we are doing here. In particular, this accounts for the case of synchronous broadcast protocols that we are considering here. Then, adapting task solvability as defined in \cite{gandalf-journal}, we get the generalization we need: 

\begin{definition}
\label{thm:Kripketasksolv2}
 A task ${\mathcal{T}}$ is \emph{solvable} using the protocol $P$ 
  if there exists a morphism $\delta: \cP\rightarrow \cT$ such that
$\pi_{\cI}\, \circ\, \delta=\pi_\cI$, i.e., the diagram of simplicial complexes below commutes.
\end{definition}

\begin{center}
\begin{tikzpicture}
  \node (s) {$\cP$};
  \node (xy) [below=2 of s] {$\mathcal{T}$};
  \node (x) [left=of xy] {$\cI$};
  \draw[<-] (x) to node [sloped, above] {$\pi_\cI$} (s);
  %\draw[->, dashed] (s) -- (xy); %[->, dashed, right]
  \draw[->, right] (s) to node {$\delta$} (xy);
  \draw[->] (xy) to node [below] {$\pi_\cI$} (x);
\end{tikzpicture}
\end{center}

The intuition behind this definition is the following.
A world $X$  in $\cP$ corresponds to a global state that is reachable from input $\pi_\cI(X)$ in $\cI$. 
The morphism~$\delta$ takes~$X$ to a world $\delta(X)=(w_{\cI},w_{\cO})$ of $\cT$. 
The commutativity of the diagram expresses the fact that both $X$ and $\delta(X)$ correspond to the same input
assignment $u$.
Now consider a single vertex $v \in X$ with $\chi(v) = a \in A$. Then, agent $a$
decides its value solely according to its knowledge in $\cP$: if another
world $X'$ contains $v$, then $\delta(v) \in \delta(X) \cap \delta(X')$, meaning
that $a$ has to decide the same value in both situations.

\subsection{Knowledge gain as a logical tool for task solvability}

\label{sec:knowledge-gain}

In~\cite{gandalf-journal}, to prove that map $\delta$ of \cref{thm:Kripketasksolv2} does not exist, we rely on a key property of our logic called ``knowledge gain''.
This principle says that agents cannot acquire new knowledge along morphisms of simplicial models. Namely, what is known in the image of a morphism was already known in the domain.
Thus, to prove that the simplicial map~$\delta : \cP \to \cT$ cannot exist, we have to find a formula~$\phi$ such that:
\begin{enumerate}
\item \label{item:obstruction-task} that $\phi$ is true in every world of~$\cT$,
\item \label{item:obstruction-protocol} and that $\phi$ is false in at least one world of~$\cP$.
\end{enumerate}
Then by the knowledge gain property, the map $\delta$ does not exist. Such a formula $\phi$ is called a \emph{logical obstruction}.
Intuitively, the formula $\phi$ describes some amount of knowledge which is a necessary condition to be able to solve the task $\cT$ (\cref{item:obstruction-task}), and is not achieved using protocol~$\cP$ (\cref{item:obstruction-protocol}).
%
%While we are not interested in proving distributed computing results in this paper (the synchronous crash model of \cref{fig:broadcast} is merely an illustrative example), we still check that some version of the knowledge gain property holds, as a sanity check towards future work.

\subparagraph{Knowledge gain for guarded formulas.}

In~\cite{gandalf-journal}, the formulas $\phi$ that could be used as obstruction formulas were all positive formulas.
Here, in the presence of process crashes, we need an additional restriction: $\phi$ must be a \emph{guarded formula}, which we define now.
Formally, the fragment of \emph{guarded positive epistemic formulas} $\phi \in \PL{K,\text{alive}}$ is defined by the grammar:
\begin{align*}
	\varphi ::=\; & \aliveprop{B} \Rightarrow \psi_B \mid \varphi \land \varphi \mid \varphi \lor \varphi \mid 
	D_U\varphi \mid C_U \varphi & & U, B \subseteq A,\; \psi_B \in \PropL{B}\\
	\psi_B ::=\; & p \mid \neg \psi_B \mid \psi_B \land \psi_B & & p \in \AP_B	
\end{align*}
where:
\begin{itemize}
\item $C_U$ is the common knowledge operator \cite{FaginHV92}, which is the least solution of the equation $C_U \varphi = \varphi \vee \bigvee_{u \in U} K_u(C_U\varphi)$, and whose semantics on a simplicial model $M=\la V_{M}, S_{M}, \chi_{M}, W_{M}, \ell_{M} \ra$ is given as follows: 
for every $Y$ in $W_M$ reachable from $X$ in $W_M$ following a sequence of simplexes sharing a $U$-colored simplex, $M,Y \models \varphi$
\item  the formula $\aliveprop{B}$ stands for $\bigwedge_{a \in B} \aliveprop{a}$, 
\item and the formula $\psi_B \in \PropL{B}$ is a \emph{propositional formula} restricted to the agents in $B$. It can only contain atomic propositions concerning the agents in $B$, and no modal operator.
\end{itemize}

\begin{theorem}[knowledge gain, revisited]
\label{thm:lose-knowledge-3}
Consider simplicial models $\C=\la V_{\C}, S_{\C}, \chi_{\C}, W_{\C}, \ell_{\C} \ra$
and
$\D=\la V_{\D}, S_{\D}, \chi_{\D}, W_{\D}, \ell_{\D} \ra$, and a  morphism $f : \C \to \D$.
Let $\varphi \in \PL{K,\textup{alive}}$ be a guarded positive epistemic formula.
Then $\D,f(X) \models \varphi$ implies $\C,X \models \varphi$.
\end{theorem}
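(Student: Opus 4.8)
The plan is to prove the contrapositive-free statement directly by structural induction on the guarded positive formula $\varphi \in \PL{K,\text{alive}}$, showing that $\D,f(X) \models \varphi$ implies $\C,X \models \varphi$. The crucial structural fact I would establish first, as a preliminary observation, is how a morphism $f : \C \to \D$ interacts with shared colored faces: if two worlds $X, X'$ of $\C$ share a $B$-colored face (i.e.\ $B \subseteq \chi_\C(X \inter X')$), then their images share (at least) that same $B$-colored face in $\D$, since $f$ is color-preserving and maps simplexes to simplexes. Conversely—and this is the direction that actually does the work—I would note that $f$ need \emph{not} reflect faces, so the set of worlds $B$-adjacent to $f(X)$ in $\D$ can only be \emph{larger} (after pulling back) than those $B$-adjacent to $X$ in $\C$; this monotonicity is exactly why knowledge can only be lost, not gained, along $f$.

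First I would handle the base case, the guarded atomic clause $\aliveprop{B} \Rightarrow \psi_B$ with $\psi_B \in \PropL{B}$. This is where the guardedness is essential. Assume $\D,f(X) \models \aliveprop{B} \Rightarrow \psi_B$ and that $\C,X \models \aliveprop{B}$; I must show $\C,X \models \psi_B$. From $\C,X \models \aliveprop{B}$ we get $B \subseteq \chi_\C(X)$, so since $f$ preserves colors, $B \subseteq \chi_\D(f(X))$, hence $\D,f(X) \models \aliveprop{B}$ and therefore $\D,f(X) \models \psi_B$. Now because $\psi_B$ is a propositional formula over atomic propositions \emph{belonging to agents in $B$}, and because $f$ preserves the labelling on vertices ($\ell_\D(f(v)) = \ell_\C(v)$) while $B$-colored vertices of $X$ map to the correspondingly-colored vertices of $f(X)$, the truth value of $\psi_B$ is determined by the labels on the $B$-colored vertices, which agree between $X$ and $f(X)$. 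Thus $\C,X \models \psi_B$. (If instead $\C,X \not\models \aliveprop{B}$, the implication holds vacuously.) I anticipate this base case is the main obstacle: it is the one place where the restriction to guarded formulas—rather than arbitrary positive formulas—is exactly what makes the argument go through, since an unguarded atomic proposition $p \in \AP_a$ for a dead agent $a$ would not be controlled by the local labels.

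The inductive cases are then routine and monotone. For $\varphi_1 \land \varphi_2$ and $\varphi_1 \lor \varphi_2$, the claim follows immediately by applying the induction hypothesis to each conjunct/disjunct, using that $\models$ distributes over $\land$ and $\lor$. For the distributed knowledge case $D_U\varphi$: assume $\D,f(X) \models D_U\varphi$ and let $X'$ be any world of $\C$ with $U \subseteq \chi_\C(X \inter X')$; by the preliminary observation $U \subseteq \chi_\D(f(X) \inter f(X'))$, so $\D,f(X') \models \varphi$, and by the induction hypothesis applied at $X'$ we get $\C,X' \models \varphi$, whence $\C,X \models D_U\varphi$. The common-knowledge case $C_U\varphi$ is handled the same way but iterated along paths: any $U$-connected path $X = X_0, X_1, \dots, X_n = Y$ in $\C$ maps under $f$ to a $U$-connected path $f(X_0), \dots, f(X_n)$ in $\D$ (each consecutive pair shares a $U$-colored simplex, preserved by $f$), so $Y$ reachable from $X$ implies $f(Y)$ reachable from $f(X)$; hence $\D,f(X) \models C_U\varphi$ gives $\D,f(Y) \models \varphi$ for every such $Y$, and the induction hypothesis yields $\C,Y \models \varphi$, i.e.\ $\C,X \models C_U\varphi$. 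This completes the induction.
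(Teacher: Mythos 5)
Your proposal is correct and follows essentially the same route as the paper's own proof: a structural induction on the guarded formula, where the modal cases $D_U$ and $C_U$ use the fact that a colour-preserving simplicial morphism sends shared $U$-coloured faces (and hence $U$-connected paths) to shared $U$-coloured faces, and the guarded base case splits on whether $\aliveprop{B}$ holds in $X$, using preservation of colours and of vertex labels to transfer the propositional formula $\psi_B$ back from $f(X)$ to $X$. Your identification of the base case as the place where guardedness is essential matches the paper's treatment exactly.
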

\begin{proof}
We proceed by induction on the structure of the guarded positive formulas $\varphi$.
The inductive cases are obvious to prove for $\land$, $\neg$ and $\lor$. 

$D_B$ and $C_U$ are easily proven as follows. %, as in the proof of \cref{thm:lose-knowledge}.
Suppose $\D,Y \models D_B \varphi$, then
$\D,w' \models \phi \text{ for all } w' \in W_\D \text{ such that } B \subseteq \chi(w' \inter Y)$. As $\phi$ is a guarded positive formula, by induction, $\C, X \models \varphi$, for all $X'$ such that $f(X')\in W_\D$ and $B \subseteq \chi(f(X')\cap Y)$. But $f$ is a morphism of pointed simplicial models so $X'$ being in $W_\C$ implies $f(X')\in W_\D$ and $\chi(f(X')\cap Y)=\chi(f(X')\cap f(X))=\chi(f(X'\cap X))=\chi(X'\cap X$ hence by induction, $\C, X \models \varphi$, for all $X' \in W_\C$ such that $B \subseteq \chi(X'\cap X)$ meaning that $\C, X \models D_B \varphi$. 

Suppose now that $\D,Y \models C_U \varphi$. 
%\[ \begin{array}{lcl}
%\C,w \models p & \text{iff} & p \in \ell(w) \\
%\C,w \models \neg\phi & \text{iff} & \C,w \not\models \phi \\
%\C,w \models \phi\et\psi & \text{iff} & \C,w \models \phi \text{ and } \C,w \models \psi \\
%\C,w \models D_B\,\phi & \text{iff} & \C,w' \models \phi \text{ for all } w' \in W \text{ such that } B \subseteq \chi(w \inter w')
%\end{array} \]
Then for every $w'$ in $W_\D$ reachable from $Y$ in $W_\D$ following a sequence of simplexes sharing a $U$-colored simplex, $\D,w' \models \varphi$. In particular, for every $X'$ in $W_\C$ reachable from $X$ in $W_\C$ following a sequence of simplexes sharing a $U$-colored vertex, $f(X')$ is reachable from $Y=f(X)$ following a sequence of simplexes sharing a $U$-colored simplex, because $f$ is a morphism of pointed simplicial model. By induction hypothesis, as $\D,w'\models \varphi$, and $\phi$ is a guarded positive formula, $\C, X' \models \varphi$, hence $\C, X\models C_U \varphi$. 

For the base case, assume $\phi = \aliveprop{B} \Rightarrow \psi_B$ for some set of agents $B \subseteq A$ and 
some propositional formula $\psi_B \in \PropL{B}$.
We distinguish two cases.
%\begin{itemize}
%\item
Either some agent $a \in B$ is dead in the world $X$, in which case $\C,X \models \phi$ is true.
%\item
Or all agents in $B$ are alive in $X$, and since $f(X) \subseteq Y$ (because $f$ is a morphism of pointed simplicial models), all agents in $B$ are also alive in $Y$.
Thus, we have $\D,Y \models \psi_B$.
Moreover, since $f$ is a morphism, we know that $\ell_Y(f(v)) = \ell_X(v)$ for all $v$ in $X$.
So all atomic propositions in $\AP_{B}$ have the same truth value in the worlds $X$ and $Y$.
As a consequence $\D,Y \models \psi_B$ implies that $\C,X \models \psi_B$, and thus $\C,X \models \phi$ as required.
%\end{itemize}
\end{proof}

Note that this theorem is slightly different from the one considered in the conference version of this paper~\cite{stacs22}.
First, we consider epistemic formulas with common knowledge, that was not considered as an operator in \cite{stacs22}.
Second, since we are working with a more general definition of simplicial models, the notion of morphism is also slightly different and the proofs is adapted in consequence.

\subsection{Extended example: consensus in synchronous broadcast protocols}
\label{sec:extended-example}

We are now equipped to study the following distributed computing problem: how to prove that the consensus task cannot be solved in the synchronous broadcast model with one crash failure.
Rather than the impossibility result itself, which is well known, our main focus here is to showcase how the epistemic logic machinery developed in this paper can be used to establish an impossibility proof in distributed computing.

\subparagraph*{Input model.}

We will be working with three agents (a.k.a.\ processes) $a$, $b$ and $c$.
For the binary consensus task, each of them starts the computation with an input value, either $0$ or $1$.
The \emph{initial simplicial model} $\cI = \la V_{\cI}, S_{\cI}, \chi_{\cI}, W_{\cI}, \ell_{\cI} \ra$ modeling the initial states of the processes is depicted below. 
Each of the 8 facets of $\cI$ represents
a possible initial configuration for the agents $a$, $b$ and $c$, with possible input values 0 or 1.
We denote by $\inputprop{a}{v}$ the atomic proposition meaning that ``agent~$a$ has input value~$v$''.
On the figure below, the labeling of a vertex is indicated by a subscript: $\ell_{\cI}(a_i)=\{\inputprop{a}{i}\}$, $\ell_{\cI}(b_i)=\{\inputprop{b}{i}\}$ and $\ell_{\cI}(c_i)=\{\inputprop{c}{i}\}$, for $i=0, 1$. The set of worlds $W$ associated to the simplicial model $\cI$ is composed of the 8 facets modeling the initial states when all processes are alive, plus the 12 edges in thick black below, modeling the possible states of any pair of processes, the third one being dead:

\begin{center}
	\begin{tikzpicture}[x  = {(1cm,0.5cm)},
		%y  = {(0.95cm,-0.25cm)},
		%z  = {(0.2cm,0.9cm)}, %tdplot_main_coords,
		scale=1,
		cloudgrey/.style={draw=black,thick,circle,fill=cyan,inner sep=1pt,minimum size=11pt},cloud/.style={draw=black,thick,circle,fill=magenta!60,inner sep=1pt,minimum size=11pt}, cloudblack/.style={draw=black,thick,circle,fill=green,inner sep=1pt,minimum size=11pt}
		%,squaregrey/.style={draw=black,thick,square,fill=cyan,inner sep=1pt,minimum size=11pt},
		%square/.style={draw=black,thick,square,fill=magenta!60,inner sep=1pt,minimum size=11pt}
		%, squareblack/.style={draw=black,thick,square,fill=green,inner sep=1pt,minimum size=11pt}
		]] %x=0.1,y=-0.1,z=-0.15
		
		\coordinate (a0) at (0,0,1);
		\coordinate (a1) at (2,2,1);
		\coordinate (b0) at (0,2,1);
		\coordinate (b1) at (2,0,1);
		\coordinate (c0) at (1,1,0);
		\coordinate (c1) at (1,1,2);
		
		%\draw (a0) -- (b0) -- (c0) -- cycle;
		%\draw (a0) -- (b0) -- (c1) -- cycle;
		%\draw (a0) -- (b1) -- (c0) -- cycle;
		%\draw (a0) -- (b1) -- (c1) -- cycle;
		%\draw (a1) -- (b0) -- (c0) -- cycle;
		%\draw (a1) -- (b0) -- (c1) -- cycle;
		%\draw (a1) -- (b1) -- (c0) -- cycle;
		%\draw (a1) -- (b1) -- (c1) -- cycle;
		
		\draw [thick, line width=0.5mm, draw=black,fill opacity=0.15,fill=gray] (a0) -- (b0) -- (c0) -- cycle;
		\draw [thick, line width=0.5mm, draw=black,fill opacity=0.15,fill=gray] (a0) -- (b0) -- (c1) -- cycle;
		\draw [thick, line width=0.5mm, draw=black,fill opacity=0.15,fill=gray] (a0) -- (b1) -- (c0) -- cycle;
		\draw [thick, line width=0.5mm, draw=black,fill opacity=0.15,fill=gray] (a0) -- (b1) -- (c1) -- cycle;
		\draw [thick, line width=0.5mm, draw=black,fill opacity=0.15,fill=gray] (a1) -- (b0) -- (c0) -- cycle;
		\draw [thick, line width=0.5mm, draw=black,fill opacity=0.15,fill=gray] (a1) -- (b0) -- (c1) -- cycle;
		\draw [thick, line width=0.5mm, draw=black,fill opacity=0.15,fill=gray] (a1) -- (b1) -- (c0) -- cycle;
		\draw [thick, line width=0.5mm, draw=black,fill opacity=0.15,fill=gray] (a1) -- (b1) -- (c1) -- cycle;
		
		\node[cloudgrey] at (0,0,1) {$a_0$};
		\node[cloudblack, fill=gray!50!green] at (1,1,0) {$c_0$};
		\node[cloud] at (0,2,1) {$b_0$};
		\node[cloudgrey] at (2,2,1) {$a_1$};
		\node[cloudblack] at (1,1,2) {$c_1$};
		\node[cloud] at (2,0,1) {$b_1$};
	\end{tikzpicture}
\end{center}

\subparagraph*{The synchronous broadcast model with crash failures.}

\label{sec:example-DC-informal}

Let us first explain informally the synchronous broadcast model with one crash failure.
In order to keep the pictures small and $2$-dimensional, we assume here that there is a single input simplex, where agents $a,b,c$ always start the computation with input values $v_1,v_2,v_3$, respectively; but we keep in mind that in general, an agent does not know in advance the inputs of the others.
(\cref{fig-synchEvol} depicts a less degenerate situation where we start with the full binary input complex.)

Thus, at the beginning of the computation, the local state of each process is its input value. Then, communication occurs via synchronized rounds. At each round:
\begin{itemize}
\item Each process sends its own local state to all other processes, in an unspecified order.
\item At most one process may crash per round. When a process crashes, it simply stops sending messages.
Under the \emph{detectable crashes} assumption, a process may not crash after successfully sending all of its messages. This ensures that at least one of the other processes is able to witness the crash.
\item The round ends when all the non-faulty processes have finished sending their messages. Each non-faulty process then updates its local state by appending all the messages that it received during the round; we then proceed to the next round.
\end{itemize}
Due to the synchronous nature of this model, whenever a round ends and some process~$P$ has not received a message from process~$Q$, process~$P$ immediately knows that~$Q$ has crashed.

In the following, we focus on modeling a single round of computation.
The resulting simplicial model is the one depicted in \cref{ex:message-passing}. It can be computed using the communication pattern $P_{\text{detectable}}$ of \cref{ex:broadcast-comm-pattern}.
Now we can make explicit the labeling of vertices.
%We denote by $\inputprop{a}{v}$ the atomic proposition meaning that ``agent~$a$ has input value~$v$''.
Thus, here all~$a$-labeled vertices~$u_a$ (in blue in \cref{ex:message-passing}) have labeling $\ell(u_a)=\inputprop{a}{v_1}$, all $b$-labeled vertices $u_b$ (in red) have $\ell(u_b)=\inputprop{b}{v_2}$ and all $c$-labeled vertices $u_c$ (in green) have $\ell(u_c)=\inputprop{c}{v_3}$.

\subparagraph*{Output model.}

The output model $\cO= \la V_{\cO}, S_{\cO}, \chi_{\cO}, W_{\cO}, \ell_{\cO} \ra$ is depicted below: 

\begin{center}
\begin{tikzpicture}[scale=1,
    cloudgrey/.style={draw=black,thick,circle,fill=cyan,inner sep=1pt,minimum size=11pt},cloud/.style={draw=black,thick,circle,fill=magenta!60,inner sep=1pt,minimum size=11pt}, cloudblack/.style={draw=black,thick,circle,fill=green,inner sep=1pt,minimum size=11pt}
    %,squaregrey/.style={draw=black,thick,square,fill=cyan,inner sep=1pt,minimum size=11pt},
    %square/.style={draw=black,thick,square,fill=magenta!60,inner sep=1pt,minimum size=11pt}
    %, squareblack/.style={draw=black,thick,square,fill=green,inner sep=1pt,minimum size=11pt}
    ] %x=0.1,y=-0.1,z=-0.15
        
\coordinate (a0) at (0,0);
\coordinate (a1) at (4,0);
\coordinate (b0) at (0,2);
\coordinate (b1) at (4,2);
\coordinate (c0) at (1,1);
\coordinate (c1) at (5,1);

\draw [thick, line width=0.5mm, draw=black,fill opacity=0.15,fill=gray] (a0) -- (b0) -- (c0) -- cycle;
\draw [thick, draw=black,fill opacity=0.15,line width=0.5mm, fill=gray] (a1) -- (b1) -- (c1) -- cycle;

\node[cloudgrey] at (0,0) {$a^0$};
\node[cloudblack] at (1,1) {$c^0$};
\node[cloud] at (0,2) {$b^0$};
\node[cloudgrey] at (4,0) {$a^1$};
\node[cloudblack] at (5,1) {$c^1$};
\node[cloud] at (4,2) {$b^1$};
\end{tikzpicture}
\end{center}

In this model, there are 8 worlds: there are the two facets, modeling the fact that the three agents are still alive, and they either all decide 0, or all decide 1. There are also the 6 edges in thick black modeling the fact that two among three agents are still alive when the protocol completes, deciding either 0 or 1. The decision values are indicated as a superscript on agent's names. As a simplicial model, we declare the labelling on vertices empty. 

\subparagraph*{Binary consensus task specification.}

%\todo[inline]{EG: I have to mention the paper with strong and weak axioms for consensus, do the drawing for these two cases out of the one below.}

The task specification simplicial model is given as a relation between input and output, hence, has as underlying simplicial complex a subcomplex of the product complex $\cI\times \cO$.
The product $\cI\times \cO$ is depicted below.
Its worlds are a subset of the set-theoretic product $W_{\cI}\times W_{\cO}$, which is composed of exactly two copies of $W_I$, with worlds being all edges, shown below as thick lines, and triangles, in grey:  

\begin{center}
\begin{tikzpicture}[x  = {(1cm,0.5cm)},
    scale=1,
    cloudgrey/.style={draw=black,thick,circle,fill=cyan,inner sep=1pt,minimum size=11pt},cloud/.style={draw=black,thick,circle,fill=magenta!60,inner sep=1pt,minimum size=11pt}, cloudblack/.style={draw=black,thick,circle,fill=green,inner sep=1pt,minimum size=11pt}
    ]]
        
\coordinate (a0) at (0,0,1);
\coordinate (a1) at (2,2,1);
\coordinate (b0) at (0,2,1);
\coordinate (b1) at (2,0,1);
\coordinate (c0) at (1,1,0);
\coordinate (c1) at (1,1,2);

%\draw (a0) -- (b0) -- (c0) -- cycle;
%\draw (a0) -- (b0) -- (c1) -- cycle;
%\draw (a0) -- (b1) -- (c0) -- cycle;
%\draw (a0) -- (b1) -- (c1) -- cycle;
%\draw (a1) -- (b0) -- (c0) -- cycle;
%\draw (a1) -- (b0) -- (c1) -- cycle;
%\draw (a1) -- (b1) -- (c0) -- cycle;
%\draw (a1) -- (b1) -- (c1) -- cycle;

\draw [thick, line width=0.5mm, draw=black,fill opacity=0.35,fill=gray] (a0) -- (b0) -- (c0) -- cycle;
\draw [thick, line width=0.5mm, draw=black,fill opacity=0.35,fill=gray] (a0) -- (b0) -- (c1) -- cycle;
\draw [thick, line width=0.5mm, draw=black,fill opacity=0.35,fill=gray] (a0) -- (b1) -- (c0) -- cycle;
\draw [thick, line width=0.5mm, draw=black,fill opacity=0.35,fill=gray] (a0) -- (b1) -- (c1) -- cycle;
\draw [thick, line width=0.5mm, draw=black,fill opacity=0.35,fill=gray] (a1) -- (b0) -- (c0) -- cycle;
\draw [thick, line width=0.5mm, draw=black,fill opacity=0.35,fill=gray] (a1) -- (b0) -- (c1) -- cycle;
\draw [thick, line width=0.5mm, draw=black,fill opacity=0.35,fill=gray] (a1) -- (b1) -- (c0) -- cycle;
\draw [thick, line width=0.5mm, draw=black,fill opacity=0.35,fill=gray] (a1) -- (b1) -- (c1) -- cycle;

\node[cloudgrey] at (0,0,1) {$a_0^0$};
\node[cloudblack, fill=gray!70!green] at (1,1,0) {$c_0^0$};
\node[cloud] at (0,2,1) {$b_0^0$};
\node[cloudgrey] at (2,2,1) {$a_1^0$};
\node[cloudblack] at (1,1,2) {$c_1^0$};
\node[cloud] at (2,0,1) {$b_1^0$};
        
\coordinate (a0) at (3,-3,-2); %+(3,-3,-3)
\coordinate (a1) at (5,-1,-2);
\coordinate (b0) at (3,-1,-2);
\coordinate (b1) at (5,-3,-2);
\coordinate (c0) at (4,-2,-3);
\coordinate (c1) at (4,-2,-1);

%\draw (a0) -- (b0) -- (c0) -- cycle;
%\draw (a0) -- (b0) -- (c1) -- cycle;
%\draw (a0) -- (b1) -- (c0) -- cycle;
%\draw (a0) -- (b1) -- (c1) -- cycle;
%\draw (a1) -- (b0) -- (c0) -- cycle;
%\draw (a1) -- (b0) -- (c1) -- cycle;
%\draw (a1) -- (b1) -- (c0) -- cycle;
%\draw (a1) -- (b1) -- (c1) -- cycle;

\draw [thick, line width=0.5mm, draw=black,fill opacity=0.35,fill=gray] (a0) -- (b0) -- (c0) -- cycle;
\draw [thick, line width=0.5mm, draw=black,fill opacity=0.35,fill=gray] (a0) -- (b0) -- (c1) -- cycle;
\draw [thick, line width=0.5mm, draw=black,fill opacity=0.35,fill=gray] (a0) -- (b1) -- (c0) -- cycle;
\draw [thick, line width=0.5mm, draw=black,fill opacity=0.35,fill=gray] (a0) -- (b1) -- (c1) -- cycle;
\draw [thick, line width=0.5mm, draw=black,fill opacity=0.35,fill=gray] (a1) -- (b0) -- (c0) -- cycle;
\draw [thick, line width=0.5mm, draw=black,fill opacity=0.35,fill=gray] (a1) -- (b0) -- (c1) -- cycle;
\draw [thick, line width=0.5mm, draw=black,fill opacity=0.35,fill=gray] (a1) -- (b1) -- (c0) -- cycle;
\draw [thick, line width=0.5mm, draw=black,fill opacity=0.35,fill=gray] (a1) -- (b1) -- (c1) -- cycle;

\node[cloudgrey] at (3,-3,-2) {$a_0^1$};
\node[cloudblack, fill=gray!70!green] at (4,-2,-3) {$c_0^1$};
\node[cloud] at (3,-1,-2) {$b_0^1$};
\node[cloudgrey] at (5,-1,-2) {$a_1^1$};
\node[cloudblack] at (4,-2,-1) {$c_1^1$};
\node[cloud] at (5,-3,-2) {$b_1^1$};
\end{tikzpicture}
\end{center}
In the picture above, the left binary sphere represents the possible output situations where the processes decide 0.
The binary sphere on the right represents situations where processes decide 1.
In a vertex, the subscript represents the input value of a process, and the superscript represents the output.
The labeling is taken from the input value, e.g., $\ell_{\cI}(a^j_i)=\{\inputprop{a}{i}\}$.

In the presence of crashes, there are various ways to specify the consensus task~\cite{ksetconsensus}.
The first one is called the validity axiom (SV1) in \cite{ksetconsensus}:
\begin{quote}
``The decision of any correct process is equal to the input of some correct process.''
\end{quote}
In that case, we should take out among the simplices of the corresponding task specification $\cT\subseteq W_I\times W_O$ the triangle with all inputs at 1, for the left copy of the binary sphere (which corresponds to deciding 0), the triangle with all inputs 0 for the right copy of the binary sphere (which corresponds to deciding 1), and also take out the 3 edges with all 1s on the left sphere, and the 3 edges with all 0s on the right sphere, as worlds, leading to the following picture, simplicial model $\cT_1$: 

\begin{center}
\begin{tikzpicture}[x  = {(1cm,0.5cm)},
                    %y  = {(0.95cm,-0.25cm)},
                    %z  = {(0.2cm,0.9cm)}, %tdplot_main_coords,
    scale=1,
    cloudgrey/.style={draw=black,thick,circle,fill=cyan,inner sep=1pt,minimum size=11pt},cloud/.style={draw=black,thick,circle,fill=magenta!60,inner sep=1pt,minimum size=11pt}, cloudblack/.style={draw=black,thick,circle,fill=green,inner sep=1pt,minimum size=11pt}
    %,squaregrey/.style={draw=black,thick,square,fill=cyan,inner sep=1pt,minimum size=11pt},
    %square/.style={draw=black,thick,square,fill=magenta!60,inner sep=1pt,minimum size=11pt}
    %, squareblack/.style={draw=black,thick,square,fill=green,inner sep=1pt,minimum size=11pt}
    ]] %x=0.1,y=-0.1,z=-0.15
        
\coordinate (a0) at (0,0,1);
\coordinate (a1) at (2,2,1);
\coordinate (b0) at (0,2,1);
\coordinate (b1) at (2,0,1);
\coordinate (c0) at (1,1,0);
\coordinate (c1) at (1,1,2);

\draw [thick, line width=0.5mm, draw=black] (a0) -- (b0) -- (c0) -- cycle;
\draw [thick, line width=0.5mm, draw=black] (a0) -- (b0) -- (c1) -- cycle;
\draw [thick, line width=0.5mm, draw=black] (a0) -- (b1) -- (c0) -- cycle;
%\draw (a0) -- (b1);
\draw [thick, line width=0.5mm, draw=black] (a1) -- (b0) -- (c0) -- cycle;
\draw [thick, line width=0.5mm, draw=black] (a1) -- (b0) -- (c1);
\draw [thick, line width=0.5mm, draw=black] (a1) -- (c0) -- (b1);
%\draw (a1) -- (b1) -- (c1) -- cycle;
\draw [fill opacity=0.35,fill=black] (a0) -- (b0) -- (c0) -- cycle;
\draw [fill opacity=0.35,fill=gray] (a0) -- (b0) -- (c1) -- cycle;
\draw [fill opacity=0.35,fill=gray] (a0) -- (b1) -- (c0) -- cycle;
\draw [fill opacity=0.35,fill=gray] (a0) -- (b1) -- (c1) -- cycle;
\draw [fill opacity=0.35,fill=gray] (a1) -- (b0) -- (c0) -- cycle;
\draw [fill opacity=0.35,fill=gray] (a1) -- (b0) -- (c1) -- cycle;
\draw [fill opacity=0.35,fill=gray] (a1) -- (b1) -- (c0) -- cycle;
%\draw [thick, draw=black,fill opacity=0.15,fill=gray] (a1) -- (b1) -- (c1) -- cycle;

\node[cloudgrey] at (0,0,1) {$a_0$};
\node[cloudblack, fill=gray!70!green] at (1,1,0) {$c_0$};
\node[cloud] at (0,2,1) {$b_0$};
\node[cloudgrey] at (2,2,1) {$a_1$};
\node[cloudblack] at (1,1,2) {$c_1$};
\node[cloud] at (2,0,1) {$b_1$};

\coordinate (a0) at (3,-3,-2); %+(3,-3,-3)
\coordinate (a1) at (5,-1,-2);
\coordinate (b0) at (3,-1,-2);
\coordinate (b1) at (5,-3,-2);
\coordinate (c0) at (4,-2,-3);
\coordinate (c1) at (4,-2,-1);

%\draw (a0) -- (b0) -- (c0) -- cycle;
%\draw (a0) -- (b0) -- (c1) -- cycle;
\draw [thick, line width=0.5mm, draw=black](a0) -- (b1) -- (c0);
\draw [thick, line width=0.5mm, draw=black](a0) -- (b1) -- (c1) -- cycle;
%\draw (a1) -- (b0) -- (c0) -- cycle;
\draw [thick, line width=0.5mm, draw=black](a1) -- (b0) -- (c1) -- cycle;
\draw [thick, line width=0.5mm, draw=black](a1) -- (b1) -- (c0) -- cycle;
\draw [thick, line width=0.5mm, draw=black](a1) -- (b1) -- (c1) -- cycle;

%\draw [thick, draw=black,fill opacity=0.15,fill=gray] (a0) -- (b0) -- (c0) -- cycle;
\draw [thick, draw=black,fill opacity=0.35,fill=gray] (a0) -- (b0) -- (c1) -- cycle;
\draw [thick, draw=black,fill opacity=0.35,fill=gray] (a0) -- (b1) -- (c0) -- cycle;
\draw [thick, draw=black,fill opacity=0.35,fill=gray] (a0) -- (b1) -- (c1) -- cycle;
\draw [thick, draw=black,fill opacity=0.35,fill=gray] (a1) -- (b0) -- (c0) -- cycle;
\draw [thick, draw=black,fill opacity=0.35,fill=gray] (a1) -- (b0) -- (c1) -- cycle;
\draw [thick, draw=black,fill opacity=0.35,fill=gray] (a1) -- (b1) -- (c0) -- cycle;
\draw [thick, draw=black,fill opacity=0.35,fill=gray] (a1) -- (b1) -- (c1) -- cycle;

\node[cloudgrey] at (3,-3,-2) {$a_0$};
\node[cloudblack, fill=gray!70!green] at (4,-2,-3) {$c_0$};
\node[cloud] at (3,-1,-2) {$b_0$};
\node[cloudgrey] at (5,-1,-2) {$a_1$};
\node[cloudblack] at (4,-2,-1) {$c_1$};
\node[cloud] at (5,-3,-2) {$b_1$};
\end{tikzpicture}
\end{center}

\noindent
Another, weaker specification of the consensus task is the validity axiom (RV1) of~\cite{ksetconsensus}: 
\begin{quote}
``The decision of any correct process is equal to the input of some process.''
\end{quote}
In that case, we should take out among the simplices of the corresponding task specification $\cT\subseteq W_I\times W_O$ the triangle with all inputs at 1, for the left copy of the binary sphere (which corresponds to deciding 0), the triangle with all inputs 0 for the right copy of the binary sphere (which corresponds to deciding 1), but this time keep the 3 edges with all 1s on the left sphere, and the 3 edges with all 0s on the right sphere, as worlds, leading to the following picture, simplicial model $\cT_2$: 

\begin{center}
\begin{tikzpicture}[x  = {(1cm,0.5cm)},
                    %y  = {(0.95cm,-0.25cm)},
                    %z  = {(0.2cm,0.9cm)}, %tdplot_main_coords,
    scale=1,
    cloudgrey/.style={draw=black,thick,circle,fill=cyan,inner sep=1pt,minimum size=11pt},cloud/.style={draw=black,thick,circle,fill=magenta!60,inner sep=1pt,minimum size=11pt}, cloudblack/.style={draw=black,thick,circle,fill=green,inner sep=1pt,minimum size=11pt}
    %,squaregrey/.style={draw=black,thick,square,fill=cyan,inner sep=1pt,minimum size=11pt},
    %square/.style={draw=black,thick,square,fill=magenta!60,inner sep=1pt,minimum size=11pt}
    %, squareblack/.style={draw=black,thick,square,fill=green,inner sep=1pt,minimum size=11pt}
    ]] %x=0.1,y=-0.1,z=-0.15
        
\coordinate (a0) at (0,0,1);
\coordinate (a1) at (2,2,1);
\coordinate (b0) at (0,2,1);
\coordinate (b1) at (2,0,1);
\coordinate (c0) at (1,1,0);
\coordinate (c1) at (1,1,2);

\draw [thick, line width=0.5mm, draw=black] (a0) -- (b0) -- (c0) -- cycle;
\draw [thick, line width=0.5mm, draw=black] (a0) -- (b0) -- (c1) -- cycle;
\draw [thick, line width=0.5mm, draw=black] (a0) -- (b1) -- (c0) -- cycle;
%\draw (a0) -- (b1);
\draw [thick, line width=0.5mm, draw=black] (a1) -- (b0) -- (c0) -- cycle;
\draw [thick, line width=0.5mm, draw=black] (a1) -- (b0) -- (c1);
\draw [thick, line width=0.5mm, draw=black] (a1) -- (c0) -- (b1);
\draw [thick, line width=0.5mm, draw=black] (a1) -- (b1) -- (c1) -- cycle;
\draw [fill opacity=0.35,fill=black] (a0) -- (b0) -- (c0) -- cycle;
\draw [fill opacity=0.35,fill=gray] (a0) -- (b0) -- (c1) -- cycle;
\draw [fill opacity=0.35,fill=gray] (a0) -- (b1) -- (c0) -- cycle;
\draw [fill opacity=0.35,fill=gray] (a0) -- (b1) -- (c1) -- cycle;
\draw [fill opacity=0.35,fill=gray] (a1) -- (b0) -- (c0) -- cycle;
\draw [fill opacity=0.35,fill=gray] (a1) -- (b0) -- (c1) -- cycle;
\draw [fill opacity=0.35,fill=gray] (a1) -- (b1) -- (c0) -- cycle;
%\draw [thick, draw=black,fill opacity=0.15,fill=gray] (a1) -- (b1) -- (c1) -- cycle;

\node[cloudgrey] at (0,0,1) {$a_0$};
\node[cloudblack, fill=gray!70!green] at (1,1,0) {$c_0$};
\node[cloud] at (0,2,1) {$b_0$};
\node[cloudgrey] at (2,2,1) {$a_1$};
\node[cloudblack] at (1,1,2) {$c_1$};
\node[cloud] at (2,0,1) {$b_1$};
        
\coordinate (a0) at (3,-3,-2); %+(3,-3,-3)
\coordinate (a1) at (5,-1,-2);
\coordinate (b0) at (3,-1,-2);
\coordinate (b1) at (5,-3,-2);
\coordinate (c0) at (4,-2,-3);
\coordinate (c1) at (4,-2,-1);

\draw [thick, line width=0.5mm, draw=black] (a0) -- (b0) -- (c0) -- cycle;
%\draw (a0) -- (b0) -- (c1) -- cycle;
\draw [thick, line width=0.5mm, draw=black](a0) -- (b1) -- (c0);
\draw [thick, line width=0.5mm, draw=black](a0) -- (b1) -- (c1) -- cycle;
%\draw (a1) -- (b0) -- (c0) -- cycle;
\draw [thick, line width=0.5mm, draw=black](a1) -- (b0) -- (c1) -- cycle;
\draw [thick, line width=0.5mm, draw=black](a1) -- (b1) -- (c0) -- cycle;
\draw [thick, line width=0.5mm, draw=black](a1) -- (b1) -- (c1) -- cycle;

%\draw [thick, draw=black,fill opacity=0.15,fill=gray] (a0) -- (b0) -- (c0) -- cycle;
\draw [thick, draw=black,fill opacity=0.35,fill=gray] (a0) -- (b0) -- (c1) -- cycle;
\draw [thick, draw=black,fill opacity=0.35,fill=gray] (a0) -- (b1) -- (c0) -- cycle;
\draw [thick, draw=black,fill opacity=0.35,fill=gray] (a0) -- (b1) -- (c1) -- cycle;
\draw [thick, draw=black,fill opacity=0.35,fill=gray] (a1) -- (b0) -- (c0) -- cycle;
\draw [thick, draw=black,fill opacity=0.35,fill=gray] (a1) -- (b0) -- (c1) -- cycle;
\draw [thick, draw=black,fill opacity=0.35,fill=gray] (a1) -- (b1) -- (c0) -- cycle;
\draw [thick, draw=black,fill opacity=0.35,fill=gray] (a1) -- (b1) -- (c1) -- cycle;

\node[cloudgrey] at (3,-3,-2) {$a_0$};
\node[cloudblack, fill=gray!70!green] at (4,-2,-3) {$c_0$};
\node[cloud] at (3,-1,-2) {$b_0$};
\node[cloudgrey] at (5,-1,-2) {$a_1$};
\node[cloudblack] at (4,-2,-1) {$c_1$};
\node[cloud] at (5,-3,-2) {$b_1$};
\end{tikzpicture}
\end{center}

%\todo[inline]{Also discuss WV1 (weak V1): If there are no failures, then the decision of any process is equal to the input of some process. I am not sure I see how to specify this differently than above?

%And maybe also: 
%SV2 (strong V2): If all correct processes start with v then correct processes decide v. Same, I do not see the difference here? 

%And RV2 (regular V2): If all processes start with v then correct processes decide v?
%}

%\todo[inline]{Sergio if you would like to check the above, in particular if we discuss these extra axioms?}

\subparagraph*{Impossibility of (SV1) consensus in one round.}
%\todo[inline]{Exemplify what is new with that. (Logical explanaition of the lower bound). @ Sergio. %Mention \cite{Dolev} for the lower bound of f+1 rounds when f failures. EG will put that in form
%}
%\todo[inline]{EG: to be completed soon}

As well known \cite{Dolev}, consensus cannot be reached in a synchronous architecture with at most $f$ failures in less than $f+1$ rounds. Here we exemplify this result, in logical terms, in the case $f=1$, showing that consensus needs at least 2 rounds to be solvable. 

For the asynchronous wait-free architecture, it is well known that consensus is not solvable (in any number of rounds) see e.g. \cite{FischerL82}, and it is well known that on the epistemic logic side, this comes from the impossibility of reaching common knowledge among agents \cite{Moses2016,gandalf-journal}. In this paper, we propose a new logical obstruction, also based on common knowledge, that works for the case of synchronous architectures. The main idea is that in synchronous architectures, there is a way to tell whether agents have died or are still alive. This is reflected by the knowledge gain theorem, Theorem \ref{thm:lose-knowledge-3}. 

We are now ready to consider the following formulas, for $i=0,1$: 
$$\phi_i=C_A\left(
\mathop{\bigwedge}\limits_{B \subseteq A} \left(\aliveprop{B}\Rightarrow \mathop{\bigvee}\limits_{b \in B} \inputprop{b}{i}\right)
\right)$$
\noindent where $C_A$ is the common knowledge operator for the set of agents $A$. 
The formula $\phi_0\vee \phi_1$ is actually specifying axiom (SV1): indeed in the corresponding task specification model $\cT_1$, $\phi_0$ holds in the left component, whereas $\phi_1$ holds in the right component. 

Now, we check that neither $\mathcal{A},w \models \phi_0$ nor 
$\mathcal{A},w \models \phi_1$.
In $\cT_1$, all simplexes which are worlds are connected to one another. In particular, the triangle which is labelled by $\{\inputprop{a}{0}, \inputprop{b}{0}, \inputprop{c}{0}\}$ is connected to the triangle which is labelled by $\{\inputprop{a}{1}, \inputprop{b}{1}, \inputprop{c}{1}\}$.
Hence, by the semantics of Section \ref{sec:gensimpmod}, agents in $A$ cannot have common knowledge of either $\mathop{\bigwedge}\limits_{B \subseteq A} \left(\aliveprop{B}\Rightarrow \mathop{\bigvee}\limits_{b \in B} \inputprop{b}{0}\right)$ nor $\mathop{\bigwedge}\limits_{B \subseteq A} \left(\aliveprop{B}\Rightarrow \mathop{\bigvee}\limits_{b \in B} \inputprop{b}{1}\right)$.

Let us now consider the following formulas, for $j=0,1$: 
$$\psi_j=C_A\left(
\mathop{\bigwedge}\limits_{B \subseteq A} \left(\aliveprop{B}\Rightarrow \mathop{\bigvee}\limits_{a \in A} \inputprop{a}{j}\right)
\right)$$
Now, the formula $\psi_0\vee \psi_1$ is specifying axiom (RV1): indeed, similarly to the previous case, in the corresponding task specification model $\cT_2$, $\phi_0$ holds in the left component, whereas $\phi_1$ holds in the right component. 

The conclusion holds in a similar manner, binary consensus even with the weak requirement (RV1) cannot be solved in one round in the synchronous broadcast protocol model. 

%\section{Other applications}

%\subsection{Spatial logics?}

%\EG[inline]{EG will try to write a simple something only from existing litterature (nothing new)}

%\subsection{Other?}

\section{Conclusion}

%\todo[inline]{To do...}

In this work, we have extended the simplicial model approach to epistemic logic, so that to account for the case in which some agents may die, and others may know, or not know, they are dead. 

On the model-theoretic side, this implied to decorate simplicial models with subsets of simplexes that are the observable worlds, in the corresponding Kripke model approach. On the logical side, this made us move from $\Sfive$ to $\KBfour$ and other axioms, according to the choices we can make about the knowledge of agents' deaths. 

This paper 
%expands the notions set up in  \cite{stacs22} and 
has further ramifications. First, another generalization can be made using (semi-) simplicial sets instead of simplicial complexes, see \cite{goubaultSemisimplicialSetModels2023}, which  deepens the discussion of this paper about distributed knowledge. Second, it is natural to view our simplicial complex models decorated with observable worlds as hypergraphs. This is developed in another sequel~\cite{hypergraph} where we further discuss the ways predicates should be attached to worlds, or to agents (as point of views in~\cite{hypergraph}) or both. 

There are still numerous extensions to this work, to be considered. Indeed, more applications to distributed computing should be developed; in particular, extending the  logical obstruction to the solvability of set agreement by Yagi and Nishimura~\cite{yagiNishimura2020TR} to the synchronous setting where $f$ processes may crash to obtain the lower bound of~\cite{ChaudhuriHLT00,HERLIHY20001} showing that $\lfloor f/k\rfloor +1$ rounds are needed to solve $k$-set agreement.
This  calls for a more in-depth discussion of temporal extensions of our epistemic logics, to account for the evolution of knowledge in distributed computed, through communication, extending the DEL approach which we originally presented in~\cite{gandalf-journal}. 

\bibliography{bibliography.bib}

%\newpage
%
%\appendix
%
%\input{applications_DC.tex}

%\input{category_equiv.tex}
%
%\input{journal_old_2.tex}

\end{document}